\algnewcommand\algorithmicinput{\textbf{Input:}}
\algnewcommand\algorithmicoutput{\textbf{Output:}}
\algnewcommand\Input{\item[\algorithmicinput]}
\algnewcommand\Output{\item[\algorithmicoutput]}
\newtheorem{Theorem}{Theorem}[section]
\newtheorem{Definition}[Theorem]{Definition}
\newtheorem{Proposition}[Theorem]{Proposition}
\newtheorem{Assumption}[Theorem]{Assumption}
\newtheorem{Remark}[Theorem]{Remark}
\newtheorem{Lemma}[Theorem]{Lemma}
\numberwithin{equation}{section}
\newcommand{\h}{\hspace*{.24in}}
\newcommand{\mb}{\mathbf}
\newcommand{\mbb}{\mathbb}
\newcommand{\argmin}{\operatornamewithlimits{argmin}}
\newcommand{\ra}[1]{\renewcommand{\arraystretch}{#1}}
\newcommand{\beginsupplement}{%
        \setcounter{table}{0}
        \renewcommand{\thetable}{S\arabic{table}}%
        \setcounter{figure}{0}
        \renewcommand{\thefigure}{S\arabic{figure}}}
\title{\textnormal{Non-bifurcating phylogenetic tree inference via the adaptive LASSO}}
\author[Zhang, Dinh, and Matsen]{Cheng Zhang${}^{1 \ast}$,Vu Dinh${}^{2 \ast}$, and Frederick A. Matsen IV${}^{3}$}
\thanks{
\noindent ${}^{\ast}$ Equal contribution, \\
${}^{1}$ School of Mathematical Sciences and Center for Statistical Science, Peking University,
${}^{2}$ Department of Mathematical Sciences, University of Delaware,
${}^{3}$ Computational Biology Program, Fred Hutchinson Cancer Research Center}
\begin{document}

\begin{abstract}
Phylogenetic tree inference using deep DNA sequencing is reshaping our understanding of rapidly evolving systems, such as the within-host battle between viruses and the immune system.
Densely sampled phylogenetic trees can contain special features, including \emph{sampled ancestors} in which we sequence a genotype along with its direct descendants, and \emph{polytomies} in which multiple descendants arise simultaneously.
These features are apparent after identifying zero-length branches in the tree.
However, current maximum-likelihood based approaches are not capable of revealing such zero-length branches.
In this paper, we find these zero-length branches by introducing adaptive-LASSO-type regularization estimators for the branch lengths of phylogenetic trees, deriving their properties, and showing regularization to be a practically useful approach for phylogenetics.
\end{abstract}

\maketitle

\hspace{0.2cm}

\noindent
Keywords: phylogenetics, $\ell_1$ regularization, adaptive LASSO, sparsity, model selection, consistency, FISTA

\hspace{0.2cm}

\section{Introduction}
Phylogenetic methods, originally developed to infer evolutionary relationships among species separated by millions of years, are now widely used in biomedicine to investigate very short-time-scale evolutionary history.
For example, mutations in viral genomes can inform us about patterns of infection and evolutionary dynamics as they evolve in their hosts on a time-scale of years \citep{Grenfell2004-dz}.
Antibody-making B cells diversify in just a few weeks, with a mutation rate around a million times higher than the typical mutation rate for cell division \citep{Kleinstein2003-mu}.
Although general-purpose phylogenetic methods have proven useful in these biomedical settings, the basic assumption that evolutionary trees follow a bifurcating pattern need not hold.
Our goal is to develop a penalized maximum-likelihood approach to infer non-bifurcating trees (Figure~\ref{fig:motivation}).

Although our practical interests concern inference for finite-length sequence data, some situations in biology will lead to non-bifurcating phylogenetic trees, even in the theoretical limit of infinite sequence information.
For example, a retrovirus such as HIV incorporates a copy of its genetic material into the host cell upon infection.
This genetic material is then used for many copies of the virus, and when more than two descendants from this infected cell are then sampled for sequencing, the correct phylogenetic tree forms a multifurcation from these multiple descendants (a.k.a.\ a \emph{polytomy}).
In other situations we may sample an ancestor along with a descendant cell, which will appear as a node with a single descendant edge (Figure~\ref{fig:motivation}).
For example, antibody-making B cells evolve within host in dense accretions of cells called germinal centers in order to better bind foreign molecules \citep{Victora2012-lu}.
In such settings it is possible to sample a cell along with its direct descendant.
Indeed, upon DNA replication in cell division, one cell inherits the original DNA of the coding strand, while the other inherits a copy which may contain a mutation from the original.
If we sequence both of these cells, the first cell is the genetic ancestor of the second cell for this coding region.
In this case the correct configuration of the two genotypes is that the first cell is a \emph{sampled ancestor} of the second cell.

\begin{figure}
    \centering
    \includegraphics[width=.65\textwidth]{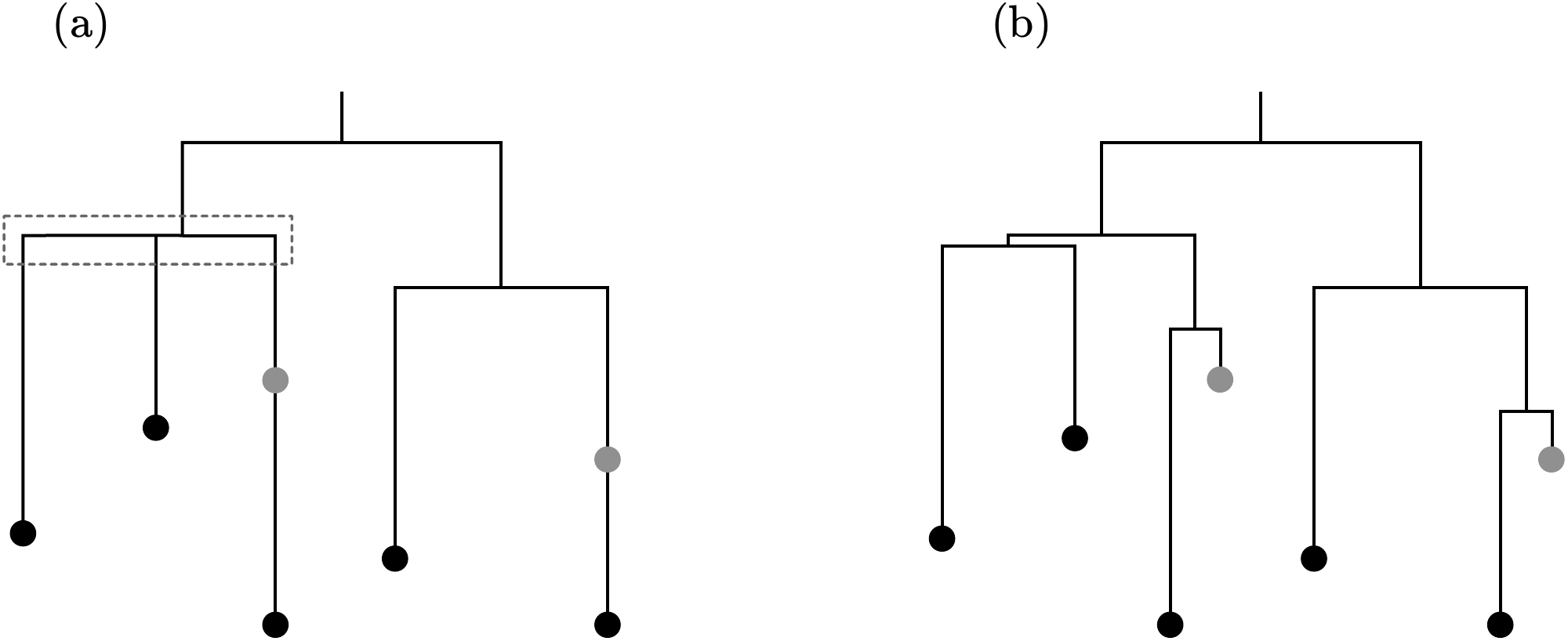}
    \caption{
    (a) A cartoon evolutionary scenario, with sampled ancestors (gray dots) and a multifurcation (dashed box).
    (b) A corresponding standard maximum likelihood phylogenetic inference, without regularization or thresholding.
    }
\label{fig:motivation}
\end{figure}

However DNA sequences are finite and often rather short, limiting the amount of information available with which to infer phylogenetic trees.
Even though entire genomes are large, the segment of interest for a phylogenetic analysis is frequently small.
For example, B cells evolve rapidly only in the hundreds of DNA sites used to encode antibodies, and thus sequencing is typically applied only to this region \citep{Georgiou2014-sh}.
Similarly, modern applications of pathogen outbreak analysis using sequencing \citep{Gardy2015-rb} frequently observe the same sequence, indicating that sampling is dense relative to mutation rates.
Because genetic recombination and processes such as viral reassortment \citep{Chen2008-fh} break the assumption that genetic data has evolved according to a single tree, practitioners often restrict analysis to an even shorter region that they believe has evolved according to a single process.

Inference on these shorter sequences further motivates correct inferences for non-bifurcating tree inference.
Indeed, even if a collection of sequences in fact did diverge in a bifurcating fashion, if no mutations happened in the sequenced region during this diversification (i.e.\ a zero-length branch) then a non-bifurcating representation is appropriate.
We thus expect multifurcations and sampled ancestors whenever the interval between the bifurcations is short compared to the total mutation rate in the sequenced region.

Non-bifurcating tree inference has thus far been via Bayesian phylogenetics, with the two deviations from bifurcation in two separate lines of work.
For multifurcations, \citet{Lewis2005-ez,Lewis2015-kv} develop a prior on phylogenetic trees with positive mass on multifurcating trees, and then perform tree estimation using reversible jump MCMC (rjMCMC) moves between trees.
For sampled ancestors, \citet{Gavryushkina2014-qp,Gavryushkina2016-ew} introduce a prior on trees with sampled ancestors and then also use rjMCMC for inference.
To our knowledge no priors have been defined that place mass on trees with multifurcations and/or sampled ancestors.

Current biomedical applications require a more computationally efficient alternative than these Bayesian techniques.
Indeed, current methods for real-time phylogenetics in the course of a viral outbreak use maximum likelihood \citep{Neher2015-jr,Libin2017-xn}, which is orders of magnitude faster than Bayesian analyses.
This is essential because the time between new sequences being added to the database can be shorter than the required execution time for a Bayesian analysis.
However, to our knowledge an appropriate maximum-likelihood alternative to such rjMCMC phylogenetic inference for multifurcating trees does not yet exist.

Elsewhere in statistics, researchers find the set of parameters with zero values via penalized maximum likelihood inference, commonly maximizing the sum of a penalty term and a log likelihood function.
When the penalty term has a nonzero slope as each variable approaches zero, the penalty will have the effect of ``shrinking'' that variable to zero when there is not substantial evidence from the likelihood function that it should be nonzero.
There is now a substantial literature on such estimators, of which $L_1$ penalized estimators such as LASSO \citep{Tibshirani1996-xs} are the most popular.

In this paper, we introduce such regularization estimators into phylogenetics, derive their properties, and show this regularization to be a practically-useful approach for phylogenetics via new algorithms and experiments.
Specifically, we first show consistency: that the LASSO and its adaptive variants find all zero-length branches in the limit of long sequences with an appropriate penalty weight.
We also derive new algorithms for phylogenetic LASSO and show them to be effective via simulation experiments and application to a Dengue virus data set.
Throughout this paper, we assume that the topology of the tree is known, and that the branch lengths of the trees are bounded above by some constant.

Phylogenetic LASSO is challenging and requires additional new techniques above those for classical LASSO.
First, the phylogenetic log-likelihood function is non-linear and non-convex.
More importantly, unlike the standard settings for model selection where the variables can receive both positive and negative values, the branch lengths of a tree are non-negative.
Thus, the objective function of phylogenetic LASSO can only be defined on a constrained compact space, for which the ``true parameter'' lies on the boundary of the domain.
Furthermore the behavior of the phylogenetic log-likelihood on this boundary is untamed: when multiple branch lengths of a tree approach zero at the same time, the log-likelihood function may diverge to infinity, even if it is analytic in the inside of the domain of definition.
The geometry of the subset of the boundary where these singularities happen is non-trivial, especially in the presence of randomness in data.
All of these issues combine to make theoretical analyses and practical implementation of these estimators an interesting challenge.


\section{Mathematical framework}
\label{sec:math}

\subsection{Phylogenetic tree}

A \emph{phylogenetic tree} is a tree graph $\tau$ such that each leaf has a unique name, and such that each edge $e$ of the tree is associated with a non-negative number $q_e$.
We will denote by $E$ and $V$ the set of edges and vertices of the tree, respectively.
We will refer to $\tau$ and $(q_e)_{e \in E}$ as the \emph{tree topology} and the \emph{vector of branch lengths}, respectively.
Any edge adjacent to a leaf is called a pendant edge, and any other edge is called an internal edge.
A pendant edge with zero branch length leads to a \emph{sampled ancestor} while an internal edge with zero branch length is part of a \emph{polytomy}.

As mentioned above, we assume that the topology $\tau$ of the tree is known and we are interested in reconstructing the vector of branch lengths.
Since the tree topology is fixed, the tree is completely represented by the vector of branch lengths $q$.
We will consider the set $\mathcal{T}$ of all phylogenetic trees with topology $\tau$ and branch lengths bounded from above by some $g_0 > 0$.
This arbitrary upper bound on branch lengths is for mathematical convenience and does not represent a real constraint for the short-term evolutionary setting of interest here.


\subsection{Phylogenetic likelihood}

We now summarize the likelihood-based formulation of phylogenetics.
The input data for this formulation is a collection of molecular sequences (such as DNA sequences) that have been aligned into a collection of \emph{sites}.
We assume that the differences in sequences between sites is due to a point mutation process that is modeled with a continuous-time Markov chain.
One can use a dynamic program to calculate a likelihood (detailed below), allowing one to select the maximum-likelihood phylogenetic tree and model parameters.

We will follow the most common setting for likelihood-based phylogenetics: a reversible continuous-time Markov chain model of substitution which is IID across sites.
Briefly, let $\Omega$ denote the set of states and let $r= |\Omega|$; for convenience, we assume the states have indices $1$ to $r$.
We assume that mutation events occur according to a continuous-time Markov chain on states $\Omega$.
Specifically, the probability of ending in state $y$ after ``evolutionary time'' $t$ given that the site started in state $x$ is given by the $xy$-th entry of $P(t)$, where $P(t)$ is the matrix valued function $P=e^{Qt}$, and the matrix $Q$ is the instantaneous rate matrix of the evolutionary model.
Here $Q$ is normalized to have mean rate 1, so ``evolutionary time'' $t$ is measured in terms of the expected number of substitutions per site.
We assume that the rate matrix $Q$ is reversible with respect to a stationary distribution $\pi$ on the set of states $\Omega$.

We will use the term \emph{state assignment} to refer to a single-site labelling of the leaf of tree by characters in $\Omega$.
For a fixed vector of branch lengths $q$, the phylogenetic likelihood is defined as follows and will be denoted by $L(q)$.
Let $\mb{Y}^k= (\mb{Y}^{(1)}, \mb{Y}^{(2)},...,\mb{Y}^{(k)}) \in \Omega^{N \times k}$ be the observed sequences (with characters in $\Omega$) of length $k$ over $N$ leaves (i.e., each of the $\mb{Y}^{(i)}$'s is a state assignment).
We will say that a function $f$ \emph{extends} a function $g$ if $f$ has a larger domain than $g$ but agrees with $g$ on its domain.
The likelihood of observing $\mb{Y}$ given the tree has the form
\[
L_{k}(\mb{Y}; q) = \prod_{i=1}^k{\sum_{a^i}{\eta(a_{\rho}^i)\prod_{(u,v)\in E}{P_{a^i_ua^i_v}( q_{uv})}}}
\]
where $\rho$ is any internal node of the tree, $a^i$ ranges over all extensions of $\mb{Y}^{(i)}$ to the internal nodes of the tree, $a^i_u$ denotes the assigned state of node $u$ by $a^i$, $P_{xy}(t)$ denotes the transition probability from character $x$ to character $y$ across an edge of length $t$ defined by a given evolutionary model and $\eta$ is the stationary distribution of this evolutionary model.
The value of the likelihood does not depend on choice of $\rho$ due to the reversibility assumption.

We will also denote $\ell_k(q) = \log(L_k(\mb{Y}; q))$ and refer to it as the log-likelihood function given the observed sequence data.
We allow the likelihood of a tree given data to be zero, and thus $\ell_k$ is defined on $\mathcal{T}$ with values in the extended real line $[-\infty, 0]$.
We note that $\ell_k$ is continuous, that is, for any vector of branch lengths $q_0 \in \mathcal{T}$, we have
\[
\lim_{q \to q_0}{\ell_k(q)} = \ell_k(q_0)
\]
even if $\ell_k(q_0) = - \infty$.

Each vector of branch lengths $q$ generates a distribution on the state assignment of the leaves, hereafter denoted by $P_q$.
We will make the following assumptions:

\begin{Assumption}[Model identifiability] $\h P_q = P_{q'} \h   \Leftrightarrow \h q=q'$.
\label{assump:iden}
\end{Assumption}

\begin{Assumption}
The data $\mb{Y}^k$ are generated on a tree topology $\tau$ with vector of branch lengths $q^* \in \mathcal{T}$ according to the above Markov process, where some components of $q^*$ might be zero.
We assume further that the tree distance (the sum of branch lengths) between any pair of leaves of the true tree is strictly positive.
\label{assump:leafedges}
\end{Assumption}
We note that model identifiability (Assumption \ref{assump:iden}) is a standard assumption and is essential for inferring evolutionary histories from data in the likelihood-based framework.
This condition holds for a wide range of evolutionary models that are used in phylogenetic inference, including the Jukes-Cantor, Kimura, and other time-reversible models \citep{chang1996full,Allman2008-wd, Allman2008-rd}.
The second criterion of Assumption \ref{assump:leafedges} ensures that no two leaves will be labeled with identical sequences as sequence length $k$ becomes long.


\subsection{Regularized estimators for phylogenetic inference}
\label{sec:regDefs}
Throughout the paper, we consider regularization-type estimators, which are defined as the minimizer of the phylogenetic likelihood function penalized with various $R_k$:
\begin{equation}
q^{k, R_k}=  \argmin_{q \in \mathcal{T}}{\, - \frac{1}{k}\ell_k(q) +  \lambda_k R_k(q)}.
\label{eq:objective}
\end{equation}
Here $R_k$ denotes the penalty function and $\lambda_k$ is the regularization parameter that controls how the penalty function impacts the estimates.
Different forms of the penalty function will lead to different statistical estimators of the generating tree.

The existence of a minimizer as in $\eqref{eq:objective}$ is guaranteed by the following Lemma (proof in the Appendix):
\begin{restatable}{Lemma}{REexist}
If the penalty $R_k$ is continuous on $\mathcal{T}$, then for $\lambda>0$ and observed sequences $\mb{Y}^k$, there exists a $q \in \mathcal{T}$ minimizing
\[
Z_{\lambda, \mb{Y}^k}(q) = - \frac{1}{k}\ell_k(q)  + \lambda R_k(q).
\]
\label{lem:exist}
\end{restatable}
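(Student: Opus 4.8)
The plan is to run the Weierstrass extreme value argument, taking care that the objective is extended-real-valued. The first observation is that, with the topology $\tau$ fixed, $\mathcal{T}$ is in bijection with the set of admissible branch-length vectors $\{q : 0 \le q_e \le g_0 \text{ for all } e \in E\}$, a closed and bounded subset of $\mathbb{R}^{|E|}$ and hence compact by Heine--Borel. So it suffices to show that $Z_{\lambda, \mb{Y}^k}$ is lower semicontinuous on $\mathcal{T}$ and that $\inf_{q \in \mathcal{T}} Z_{\lambda, \mb{Y}^k}(q) < +\infty$; compactness plus lower semicontinuity then force the infimum to be attained. I would not try to prove ordinary continuity of the objective, since $-\ell_k$ really is $+\infty$ on part of $\mathcal{T}$ (precisely the non-bifurcating configurations that motivate the paper), so the naive ``continuous function on a compact set'' statement does not apply directly.

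For lower semicontinuity, I would invoke the fact already recorded in the excerpt: $\ell_k$ is continuous on $\mathcal{T}$ with values in $[-\infty, 0]$, in the strong sense that $\ell_k(q) \to \ell_k(q_0)$ as $q \to q_0$ even when $\ell_k(q_0) = -\infty$. Hence $-\frac1k \ell_k$ is a continuous map into $[0, +\infty]$, in particular lower semicontinuous. Since $R_k$ is assumed continuous on the compact set $\mathcal{T}$, it is bounded there, so $\lambda R_k$ is a bounded continuous (hence lower semicontinuous) function, and the sum $Z_{\lambda, \mb{Y}^k} = -\frac1k \ell_k + \lambda R_k$ is lower semicontinuous with no $\infty - \infty$ ambiguity, since $-\ell_k \ge 0$ and $\lambda R_k$ is finite. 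To see the infimum is finite, I would exhibit a single point where $\ell_k$ is finite: take $q^\circ \in \mathcal{T}$ with every branch length equal to $\min(g_0,1)/2 > 0$. Since the transition probabilities $P_{xy}(t)$ of the substitution model are strictly positive for $t > 0$, every summand in the product defining $L_k(q^\circ)$ is positive, so $L_k(q^\circ) > 0$, $\ell_k(q^\circ) > -\infty$, and $Z_{\lambda, \mb{Y}^k}(q^\circ) < +\infty$.

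Finally, with $Z_{\lambda, \mb{Y}^k}$ lower semicontinuous on the compact set $\mathcal{T}$ and finite at $q^\circ$, let $m = \inf_{q \in \mathcal{T}} Z_{\lambda, \mb{Y}^k}(q) \in [-\infty, +\infty)$, choose a minimizing sequence $q^{(n)}$ with $Z_{\lambda, \mb{Y}^k}(q^{(n)}) \to m$, extract a convergent subsequence $q^{(n_j)} \to \hat q \in \mathcal{T}$ by compactness, and conclude $Z_{\lambda, \mb{Y}^k}(\hat q) \le \liminf_j Z_{\lambda, \mb{Y}^k}(q^{(n_j)}) = m$ by lower semicontinuity; hence $\hat q$ is a minimizer (and incidentally $m > -\infty$, since $-\ell_k \ge 0$ and $R_k$ is bounded below on $\mathcal{T}$, though this is not needed). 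The only real subtlety in the whole argument is precisely that $-\ell_k$ attains $+\infty$, which forces one to use lower semicontinuity rather than continuity and to rule out separately the degenerate case $Z_{\lambda, \mb{Y}^k} \equiv +\infty$ — handled by the interior-point computation above.
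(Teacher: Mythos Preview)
Your proof is correct and follows essentially the same approach as the paper: both use compactness of $\mathcal{T}$, extract a convergent minimizing subsequence, and appeal to the stated continuity of $\ell_k$ into the extended reals to pass to the limit. The only minor differences are cosmetic: the paper exhibits finiteness of the infimum via $\ell_k(q^*) > -\infty$ (relying on Assumption~\ref{assump:leafedges}) rather than your constructed interior point $q^\circ$, and it invokes extended-real continuity directly rather than repackaging it as lower semicontinuity; your version is slightly more self-contained since it does not depend on how the data were generated.
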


We are especially interested in the ability of the estimators to detect polytomies and sampled ancestors.
This leads us the following definition of topological consistency, which in the usual variable selection setting is sometimes called \emph{sparsistency}.

\begin{Definition}
For any vector of branch lengths $q$, we denote the index set of zero entries with
\[
\mathcal{A}(q)=\{i: q_i = 0\}.
\]
We say a regularized estimator with penalty function $R_k$ is \emph{topologically consistent} if for all data-generating branch lengths $q^*$, we have
\[
\lim_{k \rightarrow \infty} {\mathbb{P}\left(\mathcal{A}(q^{k, R_k})=\mathcal{A}(q^*)\right)} =1.
\]
\end{Definition}

\begin{Definition}[Phylogenetic LASSO]
The phylogenetic LASSO estimator is \eqref{eq:objective} with the standard LASSO penalty $R_k^{[0]}$, which in our setting of non-negative $q_i$ is
\[
R_k^{[0]} (q) =\sum_{i \in E} q_i.
\]
We will use $q^{k, R_k^{[0]}}$ to denote the phylogenetic LASSO estimate, namely
\[
q^{k, R_k^{[0]}} =  \arg\min_{q \in \mathcal{T}}{- \frac{1}{k}\ell_k(q) +  \lambda_k^{[0]} \sum_{i \in E}{q_i}}.
\]
\end{Definition}

In general, the classical LASSO may not be topologically consistent, since the $\ell_1$ penalty forces the coefficients (in our case, the branch lengths) to be equally penalized \citep{zou2006adaptive}.
To address this issue, one may assign different weights to different branch lengths by first constructing a naive estimate of the branch lengths, then using this initial estimate to design the weights of the penalties.
Such an ``adaptive'' procedure is exactly the idea of adaptive LASSO, as follows.

\begin{Definition}[Adaptive LASSO \citep{zou2006adaptive}]
The phylogenetic adaptive LASSO estimator is \eqref{eq:objective} with penalty function
\[
R_k^{[1]} (q) =  \sum_{i \in E}{w_{k,i} \, q_i} \h \text{where} \h
w_{k,i} = \left(q^{k, R_k^{[0]}}_i \right)^{-\gamma}
\]
for some $\gamma>0$ and $q^{k, R_k^{[0]}}$ is the phylogenetic LASSO estimate.
The regularizing parameter of adaptive LASSO estimator will be denoted by $ \lambda_k^{[1]}$.
Here, we use the convention that $\infty \cdot 0 = 0$, which means that zero branch lengths contribute nothing to the penalty.
\end{Definition}

A reviewer has pointed out a nice connection between the Adaptive LASSO objective and that of weighted least squares phylogenetics.
In weighted least squares phylogenetics, one finds branch lengths and tree topology that minimize the sum of weighted squared differences between a given set of molecular sequence distances $D_{i,j}$ and inferred distances $d_{i,j}$ on a phylogenetic tree.
\citet{Fitch1967-ms} propose that the these squared distances should be weighted by $1/D_{i,j}^2$, while \citet{Beyer1974-hv} propose weighting with $1/D_{i,j}$.
These are structurally similar to our definition of phylogenetic adaptive LASSO for $\gamma=2$ and $\gamma=1$, respectively, although in our hands these terms are penalties rather than the primary objective function.

\begin{Definition}[Multiple-step adaptive LASSO \citep{Buhlmann2008-bi}]
The phylogenetic multiple-step LASSO is defined recursively with the phylogenetic LASSO estimator as the base case $(m=1)$, and the penalty function in \eqref{eq:objective} at step $m$ being
\[
R^{[m]}_k(q) = \sum_{i \in E}{w_{k,i} \, q_i} \h \text{where} \h
w_{k,i} = \left(q^{k, R_k^{[m-1]}}_{i}\right)^{- \gamma},
\]
where $\gamma>0$ and $q^{k, R_k^{[m-1]}}$ is the $(m-1)$-step regularized estimator with penalty function $R^{[m-1]}_k(q)$.
The regularizing parameter of the $m$-step adaptive LASSO estimator will be denoted by $ \lambda_k^{[m]}$.
Again, we use the convention that $\infty \cdot 0 = 0$.
\end{Definition}

In this paper, we aim to prove that if the weights are data-dependent and cleverly chosen, then the estimators are topologically consistent.
Our proof design relies on the parameter $\gamma$, which dictates how strongly we penalize small edges in the estimation.
In Section 3, we show that if $\gamma$ is sufficiently large ($\gamma>\beta-1$, where $\beta$ is a constant that depends on the structure of the problem), the corresponding LASSO procedures are topologically consistent.

\subsection{Related work}

There is a large literature on penalized M-estimators with possibly non-convex loss or penalty functions from both theoretical and numerical perspectives.
The optimization of such estimators has its own rich literature:

\begin{itemize}
\item In the context of least squares and convex regression with non-convex penalties, several numerical procedures have been proposed, including local quadratic approximation (LQA) \citep{fan2001variable}, the minorize-maximize (MM) algorithm \citep{hunter2005variable}, local linear approximation (LLA) \citep{zou2008one}, the concave-convex procedure (CCP) \citep{kim2008smoothly} and coordinate descent \citep{kim2008smoothly, mazumder2011sparsenet}.
\cite{zhang2012general} provided statistical guarantees for global optima of least\hyp squares linear regression with non-convex penalties and showed that gradient descent starting from a LASSO solution would terminate in specific local minima.
\cite{fan2014strong} proved for convex losses that the LLA algorithm initialized with a LASSO solution attains a local solution with oracle statistical properties.
\cite{wang2013calibrating} proposed a calibrated concave-convex procedure that can achieve the oracle estimator.

\item To enable these analyses, various sufficient conditions for the success of $\ell_1$-relaxations have been proposed, including restricted eigenvalue conditions \citep{bickel2009simultaneous, meinshausen2009lasso} and the restricted Riesz property \citep{zhang2008sparsity}.
\cite{pan2015relaxed} also provide results showing that under restricted eigenvalue assumptions, a certain class of non-convex penalties yield estimates that are consistent in $\ell_2$-norm.

\item For studies of regularized estimators with non-convex losses, one prominent approach is to impose a weaker condition known as restricted strong convexity on the empirical loss function, which involves a lower bound on the remainder in the first-order Taylor expansion of the loss function \citep{negahban2012unified, agarwal2010fast, loh2011high, loh2013regularized, loh2017support}.
\end{itemize}

Outside of the optimization framework, previous work has developed regularized procedures aiming at support recovery and model selection.
The goal of this research is to identify the support (the non-zero components) of the data-generating vector of parameters.
\cite{meinshausen2006high} and \cite{zhao2006model} prove that the Irrepresentable Conditions are almost necessary and sufficient for LASSO to select the true model, which provides a foundation for applications of LASSO for feature selection and sparse representation.
Under a sparse Riesz condition on the correlation of design variables, \cite{zhang2008sparsity} prove that the LASSO selects a model of the correct order of dimensionality and selects all coefficients of greater order than the bias of the selected model.
\cite{zou2006adaptive} introduces the adaptive LASSO algorithm, which produces a topologically consistent estimate of the support even in cases when LASSO may not be consistent.
\cite{loh2017support} also show that for certain non-convex optimization problems, under the restricted strong convexity and a beta-min condition (which provides a lower bound on the minimum signal strength), support recovery consistency may be guaranteed.

Our approach for phylogenetic LASSO is inspired by previous work of \citet{zou2006adaptive} and \citet{Buhlmann2008-bi} who carefully choose the weights of the penalty function.
To enable the theoretical analyses of the constructed estimators, we derive a new condition that is similar to the Restricted Strong Convexity condition \citep{loh2013regularized, loh2017statistical}.
However, instead of imposing regularity conditions directly on the empirical log-likelihood function, we use concentration arguments to analyze the empirical log-likelihood function through its expectation.

We note that penalized likelihood has appeared before in phylogenetics \citep{Kim2008-yd,dinh2018consistency}, although we believe ours to be the first application of a LASSO-type penalty on phylogenetic branch lengths.

\section{Theoretical properties of LASSO-type regularized estimators for phylogenetic inference}

We next show convergence and topological consistency of the LASSO-type phylogenetic estimates introduced in the previous section.
As described in the introduction, phylogenetic LASSO is a non-convex regularization problem for which the true estimates lie on the boundary of a space on which the likelihood function is untamed.
To circumvent those problems, we take a minor departure from the standard approach for analysis of non-convex regularization: instead of imposing regularity conditions directly on the empirical log-likelihood function, we investigate the expected per-site log likelihood and investigate its regularity.
This function enables us to isolate the singular points and derive a local regularity condition that is similar to the Restricted Strong Convexity condition \citep{loh2013regularized, loh2017statistical}.
This leads us to study the fast-rate generalization of the empirical log-likelihood in a PAC learning framework \citep{van2015fast, dinh2016fast}.

\subsection{Definitions and lemmas}

We begin by setting the stage with needed definitions and lemmas.
All proofs have been deferred to the Appendix.
\begin{Definition}
We define the \emph{expected per-site log-likelihood}
\[
\phi(q) : = \mathbb{E}_{\psi \sim P_{q^*}}[\log P_{q}(\psi)]
\]
for any vector of branch lengths $q$.
\end{Definition}

\begin{Definition}
For any $\mu>0$, we denote by $\mathcal{T}(\mu)$ the set of all branch length vectors $q \in \mathcal{T}$ such that $\log P_{q}(\psi) \ge - \mu$ for \emph{all} state assignments $\psi$ to the leaves.
\end{Definition}

We have the following result, where $\| \cdot \|_2$ is the $\ell_2$-norm in $\mathbb{R}^{2N-3}$.
\begin{Lemma}[Limit likelihood]
The vector $q^*$ is the unique maximizer of $\phi$, and $\forall q \in \mathcal{T}$
\begin{equation}
    \frac{1}{k} \ell_k(q)\to \phi(q) \h \text{almost surely}.
\label{eqn:converge}
\end{equation}
Moreover, there exist $\beta\ge 2$ and $c_1>0$ depending on $N, Q, \eta, g_0, \mu$ such that
\begin{equation}
c_1^{\beta}\|q - q^*\|_2^{\beta} \le |\phi(q) - \phi(q^*)| \h \forall q \in \mathcal{T}(\mu).
\label{loja}
\end{equation}
\label{lem:limlink}
\end{Lemma}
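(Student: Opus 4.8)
The plan is to prove the three assertions in turn, the {\L}ojasiewicz-type bound \eqref{loja} being by far the most involved. Since the likelihood factorizes over sites, $\ell_k(q)=\sum_{i=1}^{k}\log P_q(\mathbf{Y}^{(i)})$ with the $\mathbf{Y}^{(i)}$ i.i.d.\ $\sim P_{q^*}$, so \eqref{eqn:converge} is the strong law of large numbers applied to the (finitely-many-valued) summand $\log P_q(\cdot)\in[-\infty,0]$; the case $\phi(q)=-\infty$ is handled by truncating $\log P_q$ at $-M$, applying the SLLN to the bounded truncation, and sending $M\to\infty$ by monotone convergence. For the maximizer, write $\phi(q)-\phi(q^*)=\mathbb{E}_{\psi\sim P_{q^*}}\log\big(P_q(\psi)/P_{q^*}(\psi)\big)=-\,\mathrm{KL}(P_{q^*}\,\|\,P_q)\le 0$ by Jensen, with equality iff $P_q=P_{q^*}$, which by Assumption~\ref{assump:iden} holds iff $q=q^*$; hence $q^*$ is the unique maximizer. (Here $P_{q^*}(\psi)>0$ for every leaf labelling $\psi$: contract the zero-length edges of $q^*$, propagate any single state across them, and extend arbitrarily on the contracted tree to get an internal assignment of strictly positive weight; in particular $\phi(q^*)>-\infty$.)

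For \eqref{loja} I would split $\mathcal{T}(\mu)$ into $B(q^*,\delta)$ and its complement. On the compact set $\mathcal{T}(\mu)\setminus B(q^*,\delta)$ the continuous function $\phi$ is $<\phi(q^*)$ by the uniqueness just shown, hence attains there a value $\le\phi(q^*)-m$ with $m>0$; since $\|q-q^*\|_2\le D:=\operatorname{diam}\mathcal{T}$, this gives $\phi(q^*)-\phi(q)\ge m\ge(m/D^{\beta})\|q-q^*\|_2^{\beta}$ for every $\beta\ge2$. (If $q^*\notin\mathcal{T}(\mu)$ this already proves \eqref{loja}, so assume $q^*\in\mathcal{T}(\mu)$.) For the local bound near $q^*$: each $P_q(\psi)$ is a finite sum of products of entries of the matrix exponentials $e^{Qq_e}$, hence real-analytic in $q$ on all of $\mathbb{R}^{E}$; since $P_q(\psi)>0$ near $q^*$, the function $g(q):=\phi(q^*)-\phi(q)$ is real-analytic on a neighborhood of $q^*$, is nonnegative, and vanishes there only at $q^*$. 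Because $q^*$ typically lies on the boundary $\{q_i=0:i\in\mathcal{A}(q^*)\}$ of the domain, I would substitute $q_i=t_i^{2}$ for $i\in\mathcal{A}(q^*)$ and $q_j=q^*_j+s_j$ otherwise; then $h(t,s):=g(q)$ is real-analytic on a genuine neighborhood of the origin in $\mathbb{R}^{E}$, nonnegative, with $h^{-1}(0)=\{0\}$ locally. Applying the {\L}ojasiewicz inequality (an analytic function is bounded below by a power of the distance to its zero set) gives $h(t,s)\ge C\|(t,s)\|_2^{N}$ near the origin for some $C>0$ and integer $N$, and since $0$ is a minimum of the $C^2$ function $h$ we have $\nabla h(0)=0$, forcing $N\ge2$. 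Translating back via $\|(t,s)\|_2^{2}=\sum_{i\in\mathcal{A}(q^*)}q_i+\sum_{j\notin\mathcal{A}(q^*)}(q_j-q^*_j)^2\ge\kappa\,\|q-q^*\|_2^{2}$, where $\kappa>0$ depends only on $g_0$ (using $q_i^{2}\le g_0 q_i$), and combining with the far-field bound, I would take $\beta:=N$ and $c_1$ the smaller of the two resulting constants.

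The main obstacle is precisely this local estimate: the domain $\mathcal{T}$ has a corner at $q^*$ and the phylogenetic likelihood degenerates on the boundary, so neither a Morse-type nondegeneracy nor global convexity is available. Real-analyticity of $t\mapsto e^{Qt}$ is essential — the {\L}ojasiewicz inequality genuinely fails for merely smooth functions — and the substitution $q_i=t_i^{2}$ is the device that converts the constrained corner into an open neighborhood on which the analytic {\L}ojasiewicz inequality applies. (Alternatively one could invoke the {\L}ojasiewicz inequality for closed semianalytic sets directly; the substitution keeps the argument elementary.) I would also note that the constants depend on the fixed generating tree $q^*$ and on the topology $\tau$ as well (the latter through $N$ and $|E|$); a version uniform in $q^*$ follows from the same analyticity considerations but is not needed here.
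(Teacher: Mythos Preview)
Your proposal is correct and follows the same essential route as the paper: the strong law for \eqref{eqn:converge}, identifiability (which you phrase via nonnegativity of the KL divergence) for uniqueness of the maximizer, and the \L ojasiewicz inequality for \eqref{loja}. The paper's proof of \eqref{loja} is a one-line citation of a global \L ojasiewicz inequality for analytic functions on compact sets, whereas you carry out an explicit near-field/far-field split and handle the corner at $q^*$ via the substitution $q_i=t_i^{2}$ on the zero coordinates before invoking the local \L ojasiewicz inequality; you also supply the argument for $\beta\ge 2$ (the gradient of the lifted function vanishes at its minimum), which the paper asserts without justification. These additions are genuine improvements in rigor---$\phi$ is not analytic on all of $\mathcal{T}$ but only on a neighborhood of $\mathcal{T}(\mu)$, so restricting to that set and dealing with the boundary explicitly, as you do, is exactly what the paper's terse citation is implicitly relying on.
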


\begin{proof}
The first statement follows from the identifiability assumption, and \eqref{eqn:converge} is a direct consequence of the Law of Large Numbers.
Equation \ref{loja} follows from the Lojasiewicz inequality \citep{ji1992global} for $\phi$ on $\mathcal{T}$, which applies because $\phi$ is an analytic function defined on the compact set $\mathcal{T}$ with $q^*$ as its only maximizer in $\mathcal{T}$.
\end{proof}

Group-based DNA sequence evolution models are a class of relatively simple models that have transition matrix structure compatible with an algebraic group~\citep{Evans1993-jg}.
From Lemma 6.1 of \citet{dinh2018consistency}, we have
\begin{Remark}
For group-based models, we can take $\beta=2$.
\end{Remark}

For any $\mu>0$, we also have the following estimates showing local Lipschitzness of the log-likelihood functions, recalling that $k$ is the number of sites.

\begin{restatable}{Lemma}{RElocallips}
For any $\mu>0$, there exists a constant $c_2(N, Q, \eta, g_0, \mu) > 0$ such that
\begin{equation}
\left |\frac{1}{k}\ell_k(q) - \frac{1}{k}\ell_k(q')  \right|\le c_2 \|q-q'\|_2
\label{eqn:union1}
\end{equation}
and
\begin{equation}
|\phi(q) - \phi(q')| \le c_2 \|q-q'\|_2
\label{eqn:union2}
\end{equation}
for all $q ,q' \in \mathcal{T}(\mu)$.
\label{lem:locallips}
\end{restatable}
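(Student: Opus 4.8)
The plan is to reduce both inequalities to a single uniform Lipschitz estimate for the map $q \mapsto \log P_q(\psi)$ on $\mathcal{T}(\mu)$, valid simultaneously for every state assignment $\psi$, and then average. Indeed, $\tfrac1k \ell_k(q) = \tfrac1k\sum_{i=1}^k \log P_q(\mb{Y}^{(i)})$ and $\phi(q) = \sum_\psi P_{q^*}(\psi)\log P_q(\psi)$ are both averages (with respect to probability weights) of the site-level functions $q\mapsto \log P_q(\psi)$, and since $q,q'\in\mathcal{T}(\mu)$ each argument $\mb{Y}^{(i)}$ or $\psi$ appearing in these sums satisfies $\log P_q(\psi)\ge -\mu$, so a per-$\psi$ bound transfers directly.

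First I would record the elementary structural facts. For each fixed $\psi$, the site-likelihood $P_q(\psi)=\sum_a \eta(a_\rho)\prod_{(u,v)\in E}P_{a_u a_v}(q_{uv})$ is a sum over the finitely many (at most $r^{|V|}$) extensions $a$ of $\psi$ to the internal nodes, of products of entries of the matrices $P(q_e)=e^{Q q_e}$. Since $t\mapsto e^{Qt}$ is entire and $\mathcal{T}=[0,g_0]^{|E|}$ is compact and convex, $q\mapsto P_q(\psi)$ is $C^\infty$ on a neighborhood of $\mathcal{T}$. Differentiating the product edge by edge and using that $\|P(t)\|_\infty\le 1$ and $\|P'(t)\|_\infty=\|Qe^{Qt}\|_\infty$ is bounded on $[0,g_0]$ by a constant depending only on $Q$ and $g_0$, together with $\eta(a_\rho)\le 1$, one obtains $\|\nabla_q P_q(\psi)\|_2\le M_1$ on $\mathcal{T}$, where $M_1$ depends only on $N$ (which fixes $|E|=2N-3$ and $|V|$), $Q$, $\eta$, and $g_0$, and is independent of $\psi$. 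Integrating along the straight segment from $q$ to $q'$ (which remains in the convex set $\mathcal{T}$) gives $|P_q(\psi)-P_{q'}(\psi)|\le M_1\|q-q'\|_2$ for all $q,q'\in\mathcal{T}$.

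Next I would pass to the logarithm. For $q,q'\in\mathcal{T}(\mu)$ we have, by definition of $\mathcal{T}(\mu)$, $P_q(\psi)\ge e^{-\mu}$ and $P_{q'}(\psi)\ge e^{-\mu}$ for every $\psi$; hence the mean value theorem applied to $\log$ on the interval between these two values yields $|\log P_q(\psi)-\log P_{q'}(\psi)|\le e^{\mu}\,|P_q(\psi)-P_{q'}(\psi)|\le e^{\mu}M_1\|q-q'\|_2$. Note this step does not require $\mathcal{T}(\mu)$ to be convex: the straight segment was used only for the globally valid bound on $P_q(\psi)$ over all of $\mathcal{T}$, while the logarithm is controlled using only the endpoint lower bounds. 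Setting $c_2:=e^{\mu}M_1$, which depends only on $N,Q,\eta,g_0,\mu$, inequality \eqref{eqn:union1} follows by averaging over $i=1,\dots,k$ and \eqref{eqn:union2} follows since $\phi(q)-\phi(q')=\sum_\psi P_{q^*}(\psi)\big(\log P_q(\psi)-\log P_{q'}(\psi)\big)$ is a convex combination of terms bounded by $c_2\|q-q'\|_2$.

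The only place needing care is the bookkeeping in the second paragraph — counting the extensions $a$ and bounding $e^{Qt}$ and $Qe^{Qt}$ uniformly on $[0,g_0]$ — but this is routine and presents no genuine obstacle; crucially, the boundary of $\mathcal{T}$ causes no trouble here because $P_q(\psi)$, unlike $\log P_q(\psi)$, is smooth up to and past the boundary, and the restriction to $\mathcal{T}(\mu)$ is precisely what keeps us away from the zeros of $P_q(\psi)$ where $\log$ would blow up.
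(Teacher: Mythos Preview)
Your proposal is correct and follows essentially the same approach as the paper: bound $\nabla_q P_q(\psi)$ uniformly on the convex set $\mathcal{T}$, use the lower bound $P_q(\psi)\ge e^{-\mu}$ on $\mathcal{T}(\mu)$ to pass to $\log P_q(\psi)$, and then average. Your treatment is in fact more careful than the paper's on one point: you explicitly separate the segment argument for $P_q(\psi)$ (valid on all of $\mathcal{T}$) from the $\log$ step (which only needs the endpoint bounds), thereby sidestepping any question of whether $\mathcal{T}(\mu)$ is convex; the paper's proof compresses this into a single Mean Value Theorem step.
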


Fix an arbitrary $\mu>0$.
For any $q \in \mathcal{T}(\mu)$ we consider the \emph{excess loss}
\[
U_k(q) = \frac{1}{k}\ell_k(q^*) - \frac{1}{k}\ell_k(q).
\]
and derive a PAC lower bound on the deviation of the excess loss from its expected value on $\mathcal{T}(\mu)$.
First note that since the sites $\mb{Y}^k$ are independent and identically distributed, we have
\[
\mathbb{E}\left [U_{k}(q)\right] = \mathbb{E}\left [\frac{1}{k}\ell_k(q^*) - \frac{1}{k}\ell_k(q)\right]  = \phi(q^*) - \phi(q).
\]
Moreover, from Lemma~\ref{lem:locallips}, we have $|U_{k}(q)| \le c_2\|q-q^*\|_2$.
This implies by \eqref{loja} that
\begin{equation}
|U_{k}(q)| \le c_2\|q-q^*\|_2 \le  \frac{c_2}{c_1}\mathbb{E}\left [U_{k}(q)\right]^{1/\beta}
\label{eq:varExp}
\end{equation}
for all $q \in \mathcal{T}(\mu)$.

\begin{restatable}{Lemma}{REunifboundOnesided}
Let $G_k$ be the set of all branch length vectors $q \in \mathcal{T}(\mu)$ such that $\mathbb{E}\left [U_{k}(q)\right] \ge 1/k$.
Let $\beta \ge 2$ be the constant in Lemma $\ref{lem:limlink}$.
For any $\delta>0$ and previously specified variables there exists $C(\delta, N, Q, \eta, g_0, \mu, \beta)\ge 1$ (independent of $k$) such that for any $k \ge 3$, we have:
\[
U_k(q) \ge \frac{1}{2} \mathbb{E}[U_k(q)] - \frac{C\log k}{k^{2/\beta}}\h \h \forall q \in G_k
\]
with probability greater than $1-\delta$.
\label{lem:unifbound-onesided}
\end{restatable}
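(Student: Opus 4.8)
The plan is to treat Lemma~\ref{lem:unifbound-onesided} as a localized one--sided uniform deviation (fast--rate) bound: combine a single--point Bernstein inequality with a covering argument over the finite--dimensional domain $\mathcal{T}(\mu)\subseteq[0,g_0]^{2N-3}$, the crucial point being that \eqref{eq:varExp} is exactly a Bernstein (variance--expectation) condition that upgrades the per--point deviation rate to the claimed $k^{-2/\beta}$. This is the step where the PAC/fast--rate viewpoint of \citet{van2015fast,dinh2016fast} is used.

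First I would fix a $\tfrac{c_3}{k}$--net $\mathcal{N}_k$ of $\mathcal{T}(\mu)$ with $|\mathcal{N}_k|\le (C_0 k)^{2N-3}$ (possible since $\mathcal{T}(\mu)$ lives in a cube of side $g_0$ in $\mathbb{R}^{2N-3}$). For a fixed $\tilde q\in\mathcal{N}_k$, $U_k(\tilde q)$ is an average of $k$ i.i.d.\ copies of $U_1(\tilde q)$, which is bounded in absolute value by a constant $B=B(N,Q,\eta,g_0,\mu)$ (Lemma~\ref{lem:locallips}, since $|U_1(\tilde q)|\le c_2\|\tilde q-q^*\|_2\le c_2\sqrt{2N-3}\,g_0$), has mean $m:=\mathbb{E}[U_k(\tilde q)]=\phi(q^*)-\phi(\tilde q)$, and by \eqref{eq:varExp} satisfies $\Var[U_1(\tilde q)]\le (c_2/c_1)^2 m^{2/\beta}=:\sigma_1^2$. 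Bernstein's inequality gives, for $t>0$,
\[
\mathbb{P}\!\left(m-U_k(\tilde q)>t\right)\le \exp\!\left(-\frac{k t^2}{2\sigma_1^2+\tfrac23 B t}\right).
\]
I would apply this with $t=\tfrac12 m+C\log k/k^{2/\beta}$ at every net point with $m\ge 1/(2k)$. Since $\tfrac{1}{a+b}\ge\tfrac12\min(a^{-1},b^{-1})$, the exponent is, up to an absolute constant, at least $\min\{kt^2/(2\sigma_1^2),\,3kt/(2B)\}$; using AM--GM, $t^2\ge 2mC\log k/k^{2/\beta}$, and using $\beta\ge2$ (so $1-2/\beta\ge0$, hence $k^{1-2/\beta}\ge1$ and $(mk)^{1-2/\beta}\ge(1/2)^{1-2/\beta}\ge\tfrac12$), one obtains
\[
\frac{kt^2}{2\sigma_1^2}\ge \frac{C c_1^2}{c_2^2}(mk)^{1-2/\beta}\log k\ge \frac{C c_1^2}{2c_2^2}\log k,\qquad \frac{3kt}{2B}\ge \frac{3C}{2B}\,k^{1-2/\beta}\log k\ge\frac{3C}{2B}\log k .
\]
Thus the Bernstein exponent is at least $c_\star C\log k$ for a constant $c_\star=c_\star(c_1,c_2,B,\beta)>0$, \emph{uniformly over the full admissible range} $m\in[1/(2k),\,O(1)]$.

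Then I would union--bound over $\mathcal{N}_k$: choosing $C=C(\delta,N,Q,\eta,g_0,\mu,\beta)\ge1$ large enough that $c_\star C\log k\ge (2N-3)\log(C_0 k)+\log(1/\delta)$ for all $k\ge3$, with probability at least $1-\delta$ we have $U_k(\tilde q)\ge\tfrac12\mathbb{E}[U_k(\tilde q)]-C\log k/k^{2/\beta}$ for every $\tilde q\in\mathcal{N}_k$ with $\mathbb{E}[U_k(\tilde q)]\ge1/(2k)$. To pass from the net to all of $G_k$, given $q\in G_k$ pick $\tilde q\in\mathcal{N}_k$ with $\|q-\tilde q\|_2\le c_3/k$; then $\phi$ being $c_2$--Lipschitz on $\mathcal{T}(\mu)$ (Lemma~\ref{lem:locallips}) gives $\mathbb{E}[U_k(\tilde q)]\ge\mathbb{E}[U_k(q)]-c_2c_3/k\ge 1/(2k)$ once $c_3\le 1/(2c_2)$, so $\tilde q$ is covered, while $|U_k(q)-U_k(\tilde q)|\le c_2c_3/k$ and $|\mathbb{E}[U_k(q)]-\mathbb{E}[U_k(\tilde q)]|\le c_2c_3/k$ turn the estimate at $\tilde q$ into $U_k(q)\ge\tfrac12\mathbb{E}[U_k(q)]-C\log k/k^{2/\beta}-O(1/k)$; since $1/k\le\log k/k^{2/\beta}$ for $k\ge3$ (again $\beta\ge2$), the $O(1/k)$ term is absorbed by enlarging $C$.

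The main obstacle is the second paragraph: making the Bernstein exponent of order $\log k$ \emph{uniformly} across all $m\ge 1/(2k)$, which is precisely where the Bernstein condition \eqref{eq:varExp} and the hypothesis $\beta\ge2$ are indispensable --- for $\beta<2$ the factor $(mk)^{1-2/\beta}$ would not be bounded below and the exponent would degrade with $k$, forcing a slower rate. The covering and transfer steps are routine once the ambient dimension $2N-3$ is fixed and the Lipschitz estimates of Lemma~\ref{lem:locallips} are available; the only bookkeeping subtlety is net points with very small or zero expected excess loss, for which the target is dominated by the $C\log k/k^{2/\beta}$ term and the inequality is essentially vacuous.
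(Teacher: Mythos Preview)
Your proposal is correct and follows essentially the same route as the paper: a pointwise concentration bound exploiting the variance--expectation inequality \eqref{eq:varExp}, followed by a covering/union--bound argument over the finite--dimensional set $\mathcal{T}(\mu)$, with Lipschitz transfer from net points to all of $G_k$. The only cosmetic differences are that the paper uses its own sub--Gaussian Chernoff bound (Lemma~\ref{lem:Chernoff}), which depends only on the variance and so avoids your separate treatment of the range--dominated regime in Bernstein, and the paper takes the covering mesh at scale $t/(4c_2)\sim \log k/k^{2/\beta}$ rather than your $O(1/k)$; neither change is material.
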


We also need the following preliminary lemma from \citep{dinh2016fast}.

\begin{Lemma}
Given $0<\nu<1$, there exist constants $C_1, C_2>0$ depending only on $\nu$ such that for all $x>0$, if $x \le a x^{\nu} + b$ then $x \le C_1 a^{1/(1-\nu)} + C_2 b$.
\label{lem:boost}
\end{Lemma}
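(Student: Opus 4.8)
The plan is to prove this purely as a deterministic real-analysis lemma, with no reference to the phylogenetic setting at all --- it is the standard ``self-improving inequality'' bootstrap. Given $0<\nu<1$ and a positive real $x$ satisfying $x \le a x^{\nu} + b$, I would split into two regimes according to which term on the right dominates. In the first regime, suppose $a x^{\nu} \le b$; then $x \le 2b$, which is already of the claimed form with the $a$-term absent (so any $C_2 \ge 2$ works, and $C_1 \ge 0$ suffices). In the second regime, suppose $b \le a x^{\nu}$; then $x \le 2 a x^{\nu}$, so $x^{1-\nu} \le 2a$, hence $x \le (2a)^{1/(1-\nu)} = 2^{1/(1-\nu)} a^{1/(1-\nu)}$, which is of the claimed form with the $b$-term absent. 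Taking $C_1 = 2^{1/(1-\nu)}$ and $C_2 = 2$ (both depending only on $\nu$ through $C_1$), one of the two bounds always holds, and in either case $x \le C_1 a^{1/(1-\nu)} + C_2 b$ since the omitted term is nonnegative.

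A cleaner packaging, which I would probably use to avoid the explicit case split in the write-up, is Young's inequality: for the conjugate exponents $p = 1/\nu$ and $p' = 1/(1-\nu)$ one has $a x^{\nu} = (2^{\nu} a)\cdot(x^{\nu}/2^{\nu}) \le \nu (2^{\nu} a)^{1/\nu} + (1-\nu) (x/2)$, so $x \le a x^{\nu} + b \le \nu 2 \, a^{1/\nu}\cdots$ --- actually the exponent on $a$ here comes out as $1/\nu$ rather than $1/(1-\nu)$, so this variant does not directly match the statement and the two-regime argument above is the right one. I would therefore stick with the dichotomy argument.

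There is essentially no obstacle here: the only mild subtlety is bookkeeping of the constants so that they genuinely depend only on $\nu$, which the two-regime argument handles transparently since $C_1 = 2^{1/(1-\nu)}$ and $C_2 = 2$ are manifestly functions of $\nu$ alone. Since the paper attributes the lemma to \citep{dinh2016fast}, I expect the intended proof is exactly this short dichotomy, and I would present it in two or three sentences. If one wants slightly sharper constants one could use a threshold other than $a x^{\nu} \lessgtr b$ (e.g. comparing to $\epsilon b$ for a tuned $\epsilon$), but this is cosmetic and I would not bother.
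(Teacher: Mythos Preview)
Your dichotomy argument is correct and is the standard proof of this bootstrap inequality; the constants $C_1 = 2^{1/(1-\nu)}$ and $C_2 = 2$ work exactly as you say. The paper itself does not give a proof of this lemma at all --- it merely quotes the result from \citep{dinh2016fast} --- so there is no in-paper argument to compare against, and your two-regime split is precisely the intended short proof behind the citation. The one tacit assumption worth stating explicitly is $a,b \ge 0$, which you use when you add back ``the omitted term''; this holds in the application in Theorem~\ref{theo:base} and is clearly the intended regime.
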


\subsection{Convergence and topological consistency of regularized phylogenetics}

We now show convergence and topological consistency of $q^{k, R_k}$, the regularized estimator \eqref{eq:objective}, for various choices of penalty $R_k$ as the sequence length $k$ increases.
For convenience, we will assume throughout this section that the parameters $N, Q, \eta, g_0, \mu$ and $\beta$ (defined in the previous section) are fixed.

We first have the following two lemmas guaranteeing that if $\mu$ is carefully chosen, a neighborhood $V$ of $q^*$ and the regularized estimator $q^{k, R_k}$ lie inside $\mathcal{T}(\mu)$ with high probability.

\begin{restatable}{Lemma}{REstar}
There exist $\mu^*>0$ and an open neighborhood $V$ of $q^*$ in $\mathcal{T}$ such that $V \subset \mathcal{T}(\mu^*)$.
\label{lem:star}
\end{restatable}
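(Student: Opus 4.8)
The plan is to exploit the continuity of the per-state-assignment log-likelihood in the branch lengths, together with the fact that there are only finitely many state assignments $\psi$ to the $N$ leaves (namely $r^N$ of them). First I would observe that at the true parameter $q^*$ the likelihood $P_{q^*}(\psi)$ is strictly positive for every state assignment $\psi$: this follows because the transition matrices $P(t) = e^{Qt}$ have strictly positive entries for every $t \ge 0$ (as $Q$ is an irreducible rate matrix, so $e^{Qt}$ is strictly positive for $t>0$, and the diagonal structure together with reversibility and $\eta>0$ handles $t=0$ correctly in the marginalization), and the stationary weights $\eta$ are strictly positive. Hence $\log P_{q^*}(\psi) > -\infty$ for all $\psi$, and since there are finitely many $\psi$ we may set
\[
\mu_0 := -\max_{\psi} \log P_{q^*}(\psi) \cdot (-1) \quad\text{i.e.}\quad \mu_0 = \max_{\psi}\big(-\log P_{q^*}(\psi)\big) < \infty.
\]
Then pick any $\mu^* > \mu_0$, so that $q^* \in \mathcal{T}(\mu^*)$ with strict inequality $\log P_{q^*}(\psi) > -\mu^*$ for every $\psi$.

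Next I would invoke continuity. For each fixed $\psi$, the map $q \mapsto P_q(\psi)$ is a finite sum of products of entries of matrices $P(q_e)$, hence continuous (indeed analytic) on $\mathcal{T}$, and it is strictly positive in a neighborhood of $q^*$; therefore $q \mapsto \log P_q(\psi)$ is continuous near $q^*$. By continuity, for each $\psi$ there is an open neighborhood $V_\psi$ of $q^*$ in $\mathcal{T}$ on which $\log P_q(\psi) > -\mu^*$. Taking the finite intersection $V := \bigcap_{\psi} V_\psi$, which is still an open neighborhood of $q^*$ in $\mathcal{T}$, we get $\log P_q(\psi) \ge -\mu^*$ for all $\psi$ and all $q \in V$, i.e. $V \subset \mathcal{T}(\mu^*)$, as required.

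There is no serious obstacle here; the only point requiring a little care is the behavior at the boundary, since some coordinates of $q^*$ may be zero and the excerpt emphasizes that $\ell_k$ can diverge to $-\infty$ when several branch lengths vanish simultaneously. The key observation that rescues us is that this divergence is a phenomenon of the \emph{marginal} log-likelihood $\ell_k$ summed over data, whereas $\mathcal{T}(\mu)$ is defined through the \emph{per-state-assignment} quantities $\log P_q(\psi)$, each of which is a strictly positive analytic function near $q^*$ regardless of whether $q^*$ lies on the boundary — a zero branch length simply contributes the identity-like factor $P(0) = I$, which is still compatible with $P_q(\psi) > 0$ because the marginalization over internal nodes with $\eta > 0$ keeps every leaf configuration reachable. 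So the finiteness of $\mu_0$ is genuinely unproblematic, and the whole argument is just: finitely many $\psi$, each $\log P_\cdot(\psi)$ continuous and finite at $q^*$, intersect the resulting neighborhoods.
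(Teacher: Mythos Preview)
Your proposal is correct and follows essentially the same route as the paper: fix $\mu^*$ strictly larger than $\max_\psi(-\log P_{q^*}(\psi))$, use continuity of each $q\mapsto\log P_q(\psi)$ near $q^*$ to get open neighborhoods $V_\psi$, and take the finite intersection $V=\bigcap_\psi V_\psi$. The paper makes the concrete choice $\mu^* = -2\min_\psi \log P_{q^*}(\psi)$ but is otherwise identical.

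One small remark: your justification that $P_{q^*}(\psi)>0$ for every $\psi$ is more than the paper provides in this lemma (the paper simply asserts it, relegating the reason to the proof of Lemma~\ref{lem:frequency}), but your phrasing ``$P(t)=e^{Qt}$ has strictly positive entries for every $t\ge 0$'' is literally false at $t=0$; you catch and repair this immediately, but the cleanest route is simply to invoke Assumption~\ref{assump:leafedges} (strictly positive leaf-to-leaf distances in $q^*$), which is exactly what guarantees that no leaf configuration $\psi$ is unreachable even when some branches of $q^*$ vanish.
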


\begin{restatable}{Lemma}{RETmu}
If the sequence $\{\lambda_k R_k(q^*)\}$ is bounded, then for any $\delta>0$, there exist $\mu(\delta) >0$ and $K(\delta)>0$ such that for all $k \ge K$,
$q^{k, R_k} \in \mathcal{T}(\mu)$ with probability at least $1-2\delta$.
\label{lem:Tmu}
\end{restatable}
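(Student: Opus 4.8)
The plan is to exploit the minimality defining $q^{k,R_k}$ to obtain a data-independent lower bound on the average per-site log-likelihood $\frac1k\ell_k(q^{k,R_k})$, and then to convert such an \emph{averaged} bound into a \emph{uniform pointwise} bound $\log P_q(\psi)\ge-\mu$ over all state assignments $\psi$, valid simultaneously for every $q$ meeting the averaged bound — in particular for the random estimator.

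First I would record a deterministic inequality. Because every penalty under consideration is non-negative, comparing the objective in \eqref{eq:objective} at $q^{k,R_k}$ with its value at $q^*$ and discarding the term $\lambda_k R_k(q^{k,R_k})\ge0$ gives
\[
\frac1k\ell_k(q^{k,R_k})\;\ge\;\frac1k\ell_k(q^*)-\lambda_k R_k(q^*).
\]
By Lemma~\ref{lem:star} we have $q^*\in\mathcal{T}(\mu^*)$, so $P_{q^*}(\psi)\ge e^{-\mu^*}$ for \emph{every} state assignment $\psi$; in particular $\log P_{q^*}(\mb{Y}^{(i)})\ge-\mu^*$ for each site, hence $\frac1k\ell_k(q^*)\ge-\mu^*$ surely. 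With $M:=\sup_k\lambda_k R_k(q^*)<\infty$ (the hypothesis), this yields $\frac1k\ell_k(q^{k,R_k})\ge-B$ for all $k$, where $B:=\mu^*+M>0$ is independent of $k$ and of the data.

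Next I would set $p_{\min}:=e^{-\mu^*}>0$, so $P_{q^*}(\psi)\ge p_{\min}$ for all $\psi$, and let $n_\psi^{(k)}$ be the number of sites $i\le k$ with $\mb{Y}^{(i)}=\psi$. By the law of large numbers, $\frac{n_\psi^{(k)}}{k}\to P_{q^*}(\psi)$ almost surely for each of the finitely many $\psi$, so the event $A_k:=\{\,n_\psi^{(k)}\ge\tfrac{1}{2} p_{\min}k\text{ for all }\psi\,\}$ has $\mathbb{P}(A_k)\to1$; pick $K(\delta)$ with $\mathbb{P}(A_k)\ge1-\delta\ge1-2\delta$ for $k\ge K$, noting that $A_k$ depends only on the data. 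On $A_k$, for any fixed $\psi$, since every summand of $\frac1k\ell_k(q^{k,R_k})=\frac1k\sum_i\log P_{q^{k,R_k}}(\mb{Y}^{(i)})$ is $\le0$, I can drop the sites with $\mb{Y}^{(i)}\ne\psi$ to get
\[
-B\;\le\;\frac1k\ell_k(q^{k,R_k})\;\le\;\frac{n_\psi^{(k)}}{k}\log P_{q^{k,R_k}}(\psi)\;\le\;\frac{p_{\min}}{2}\log P_{q^{k,R_k}}(\psi),
\]
the last step using $\log P_{q^{k,R_k}}(\psi)\le0$. Hence $\log P_{q^{k,R_k}}(\psi)\ge-2B/p_{\min}$ for every $\psi$, i.e. $q^{k,R_k}\in\mathcal{T}(\mu)$ with $\mu:=2B/p_{\min}=2(\mu^*+M)e^{\mu^*}$, which gives the claim.

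I expect the law-of-large-numbers step for $A_k$ to be routine; the delicate point is the averaged-to-pointwise conversion, and specifically that it must hold uniformly over all $q$ with $\frac1k\ell_k(q)\ge-B$ so that it applies to the \emph{random} point $q^{k,R_k}$, and that it must cover \emph{every} assignment $\psi$ — which is exactly where Lemma~\ref{lem:star} (hence Assumption~\ref{assump:leafedges}) is needed, since it guarantees that no assignment has vanishing probability and therefore every $\psi$ is observed with frequency bounded below. It is crucial that this argument uses only the sign of the log-likelihood summands together with a frequency lower bound: a direct distance bound on $q^{k,R_k}$ would have to contend with precisely the boundary singularities of $\ell_k$ that $\mathcal{T}(\mu)$ is designed to exclude.
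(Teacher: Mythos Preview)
Your proof is correct and follows the same architecture as the paper's: use minimality of $q^{k,R_k}$ (dropping the nonnegative penalty term) to get a lower bound on $\frac1k\ell_k(q^{k,R_k})$, then use a frequency lower bound over state assignments to convert this averaged bound into the pointwise bound defining $\mathcal{T}(\mu)$.

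The one noteworthy difference is a simplification on your part. The paper bounds $\frac1k\ell_k(q^*)$ from below \emph{probabilistically}, invoking the uniform generalization bound of Lemma~\ref{lem:unifboundg0} to replace it by $\phi(q^*)-C(\delta)\sqrt{\log k/k}$; this costs one $\delta$ of probability. You instead observe that since $q^*\in\mathcal{T}(\mu^*)$, every summand $\log P_{q^*}(\mb{Y}^{(i)})$ is deterministically $\ge-\mu^*$, so $\frac1k\ell_k(q^*)\ge-\mu^*$ surely. This is cleaner, avoids Lemma~\ref{lem:unifboundg0} entirely, and means your only probabilistic ingredient is the frequency lower bound (the paper's Lemma~\ref{lem:frequency}, or your LLN version) --- so in fact you get probability $\ge 1-\delta$ rather than $1-2\delta$. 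Both proofs rely, as you note, on the implicit nonnegativity of the penalties $R_k$ when discarding $\lambda_k R_k(q^{k,R_k})$.
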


These results enable us to prove a series of theorems establishing consistency and topological consistency of phylogenetic adaptive and multi-step adaptive LASSO.
As part of this development we will first use as a hypothesis and then establish the technical condition that there exists a $C_3 > 0$ independent of $k$ such that
\begin{equation}
|R_k(q^{k, R_k}) - R_k(q^*) | \le C_3 \|q^{k, R_k}-q^*\|_2 \h \forall k.
\label{eqn:boost}
\end{equation}
This will form an essential part of our recursive proof.
As the first step in this project, choosing $\mu$ to satisfy these lemmas, we can use the deviation bound of Lemma~\ref{lem:unifbound-onesided} to prove
\begin{Theorem}
If $\lambda_k R_k(q^*) \to 0$ then $\{q^{k, R_k}\}$ converges to $ q^*$ almost surely.

Moreover, letting $\beta \ge 2$ be the constant in Lemma $\ref{lem:limlink}$, for any $\delta>0$ there exist $C(\delta) >0$ and $K(\delta)>0$ such that for all $k \ge K$, with probability at least $1-\delta$ we have
\begin{equation}
\| q^{k, R_k} -q^*\|_2 \le C(\delta) \left (\frac{\log k}{ k^{2/\beta}}+\lambda_k R_k(q^*)  \right)^{1/\beta}.
\label{eqn:baseOne}
\end{equation}
If we assume further that there exists a $C_3 > 0$ independent of $k$ satisfying \eqref{eqn:boost} then there exists $C'(\delta) >0$ such that for all $k \ge K$,
\[
\| q^{k, R_k} -q^*\|_2 \le C'(\delta) \left (\frac{\log k}{ k^{2/\beta}}+\lambda_k^{\beta/(\beta-1)}  \right)^{1/\beta}
\]
with probability at least $1-\delta$.
\label{theo:base}
\end{Theorem}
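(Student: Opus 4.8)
The plan is to combine three ingredients: the \emph{basic inequality} coming from the fact that $q^{k, R_k}$ minimizes $Z_{\lambda_k, \mb{Y}^k}$; the Lojasiewicz bound \eqref{loja}, which converts a gap in the limit likelihood $\phi$ into a bound on $\|q - q^*\|_2$; and the one-sided deviation bound of Lemma~\ref{lem:unifbound-onesided}. Since $q^{k,R_k}$ minimizes $-\tfrac1k\ell_k + \lambda_k R_k$ over $\mathcal{T}$ and each penalty under consideration is non-negative, evaluating at $q^*$ and rearranging gives
\[
U_k(q^{k,R_k}) \;\le\; \lambda_k R_k(q^*) - \lambda_k R_k(q^{k,R_k}) \;\le\; \lambda_k R_k(q^*).
\]
Throughout I will use that $\mathbb{E}[U_k(q)] = \phi(q^*) - \phi(q) = |\phi(q) - \phi(q^*)|$, so by \eqref{loja}, $\|q - q^*\|_2 \le c_1^{-1}\,\mathbb{E}[U_k(q)]^{1/\beta}$ for every $q \in \mathcal{T}(\mu)$.

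For the almost-sure statement I would argue by compactness: $\mathcal{T}$ is compact, so it suffices to show that every subsequential limit $\hat q$ of $\{q^{k,R_k}\}$ equals $q^*$ almost surely. Taking a liminf in the basic inequality and using the pointwise convergence $\tfrac1k\ell_k(q^*) \to \phi(q^*)$ from \eqref{eqn:converge} together with $\lambda_k R_k(q^*) \to 0$ yields $\liminf_k \tfrac1k\ell_k(q^{k,R_k}) \ge \phi(q^*)$ a.s., where $\phi(q^*)$ is finite by Lemma~\ref{lem:star}. If $\phi(\hat q) = -\infty$, then --- using Assumption~\ref{assump:leafedges}, which forces $P_{q^*}(\psi) > 0$ for every state assignment $\psi$ --- there is a $\psi_0$ with $P_{q^*}(\psi_0) > 0$ and $P_{\hat q}(\psi_0) = 0$; bounding $\tfrac1k\ell_k(q)$ above by (the empirical frequency of $\psi_0$) times $\log P_q(\psi_0)$ and letting $q^{k_j,R_{k_j}} \to \hat q$ along the subsequence forces $\tfrac1{k_j}\ell_{k_j}(q^{k_j,R_{k_j}}) \to -\infty$, a contradiction. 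Hence $\phi(\hat q) > -\infty$, so $\hat q$ lies in some $\mathcal{T}(\mu_0)$, and in fact in the interior of a slightly larger $\mathcal{T}(\mu_0')$; on that set the uniform law of large numbers $\sup_{\mathcal{T}(\mu_0')}|\tfrac1k\ell_k - \phi| \to 0$ a.s. holds, obtained from \eqref{eqn:converge} on a countable dense subset together with the equicontinuity furnished by Lemma~\ref{lem:locallips}. This gives $\phi(\hat q) \ge \phi(q^*)$, and since $q^*$ is the unique maximizer of $\phi$ (Lemma~\ref{lem:limlink}) we conclude $\hat q = q^*$.

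For the quantitative bound \eqref{eqn:baseOne}, fix $\delta > 0$. Since $\lambda_k R_k(q^*)$ is bounded, I choose $\mu$ (at least the maximum of the value from Lemma~\ref{lem:star} and the value $\mu(\delta)$ from Lemma~\ref{lem:Tmu}) and $K(\delta)$ so that for $k \ge K(\delta)$, $q^{k,R_k} \in \mathcal{T}(\mu)$ with probability at least $1 - 2\delta$, and intersect this with the event of Lemma~\ref{lem:unifbound-onesided} (probability at least $1 - \delta$). On the resulting event I split into two cases. If $\mathbb{E}[U_k(q^{k,R_k})] < 1/k$, then $\|q^{k,R_k} - q^*\|_2 \le c_1^{-1}k^{-1/\beta} \le c_1^{-1}(\log k / k^{2/\beta})^{1/\beta}$, using $k^{2/\beta - 1} \le 1 \le \log k$ for $k \ge 3$ and $\beta \ge 2$. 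Otherwise $q^{k,R_k} \in G_k$, and Lemma~\ref{lem:unifbound-onesided} combined with the basic inequality gives $\tfrac12\mathbb{E}[U_k(q^{k,R_k})] \le \lambda_k R_k(q^*) + C\log k / k^{2/\beta}$, whence
\[
\|q^{k,R_k} - q^*\|_2 \;\le\; c_1^{-1}\Bigl(2\lambda_k R_k(q^*) + \tfrac{2C\log k}{k^{2/\beta}}\Bigr)^{1/\beta}.
\]
Absorbing constants (and the first case) into a single $C(\delta)$ and renaming $3\delta$ as $\delta$ gives \eqref{eqn:baseOne}.

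For the final bound under the extra hypothesis \eqref{eqn:boost}, I repeat the argument but keep the term $-\lambda_k R_k(q^{k,R_k})$ in the basic inequality and bound $R_k(q^*) - R_k(q^{k,R_k}) \le C_3\|q^{k,R_k} - q^*\|_2$, obtaining $U_k(q^{k,R_k}) \le \lambda_k C_3 \|q^{k,R_k} - q^*\|_2$. Writing $x = \|q^{k,R_k} - q^*\|_2$ and combining with Lemma~\ref{lem:unifbound-onesided} and \eqref{loja} yields, on the high-probability event, an inequality of the form $x^\beta \le a x + b$ with $a \asymp \lambda_k$ and $b \asymp \log k / k^{2/\beta}$; taking $\beta$-th roots and using subadditivity of $t \mapsto t^{1/\beta}$ gives $x \le a^{1/\beta}x^{1/\beta} + b^{1/\beta}$, and Lemma~\ref{lem:boost} with $\nu = 1/\beta \in (0,1)$ then gives $x \le C_1 a^{1/(\beta-1)} + C_2 b^{1/\beta} \lesssim \lambda_k^{1/(\beta-1)} + (\log k / k^{2/\beta})^{1/\beta}$, which is the claimed bound $C'(\delta)(\log k / k^{2/\beta} + \lambda_k^{\beta/(\beta-1)})^{1/\beta}$ up to constants. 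The main obstacle is the almost-sure convergence: because the estimator need not a priori lie in any fixed tame set $\mathcal{T}(\mu)$ and the log-likelihood equals $-\infty$ on the singular part of the boundary, one cannot simply invoke a uniform law of large numbers, and it is the compactness argument routed through the expected per-site log-likelihood $\phi$ (with Assumption~\ref{assump:leafedges} pinning down the singular locus) that makes it go through; everything else is bookkeeping of constants and the threshold-$1/k$ case split from Lemma~\ref{lem:unifbound-onesided}.
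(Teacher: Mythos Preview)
Your proposal is correct and follows essentially the same route as the paper: the basic inequality $U_k(q^{k,R_k})\le\lambda_k R_k(q^*)-\lambda_k R_k(q^{k,R_k})$, localization to $\mathcal{T}(\mu)$ via Lemma~\ref{lem:Tmu}, the case split $\mathbb{E}[U_k]\lessgtr 1/k$ from Lemma~\ref{lem:unifbound-onesided}, and the Lojasiewicz inequality~\eqref{loja} to convert the expected excess loss into a norm bound.

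Two minor differences are worth noting. First, for the sharpened bound under \eqref{eqn:boost} the paper applies Lemma~\ref{lem:boost} directly with $x=\|q^{k,R_k}-q^*\|_2^{\beta}$, $a=C_3\lambda_k$, $b=C'\log k/k^{2/\beta}$, $\nu=1/\beta$, obtaining $\|q^{k,R_k}-q^*\|_2^{\beta}\lesssim \lambda_k^{\beta/(\beta-1)}+\log k/k^{2/\beta}$ in one step; your detour through $\beta$-th roots and subadditivity before invoking the lemma is unnecessary but harmless. Second, the paper's written proof actually only establishes the high-probability bound \eqref{eqn:baseOne} and does not spell out the almost-sure statement separately; your compactness argument (ruling out $\phi(\hat q)=-\infty$ via the empirical frequency of a single state assignment, then invoking a uniform law of large numbers on a suitable $\mathcal{T}(\mu_0')$) supplies that detail. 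One small point of care there: since $\hat q$ and hence $\mu_0'$ are random, you should fix a deterministic exhaustion $\mathcal{T}(\mu_n)$ with $\mu_n\uparrow\infty$ and use that the uniform LLN holds a.s.\ on each $\mathcal{T}(\mu_n)$ simultaneously.
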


Another goal of this section is to prove that the phylogenetic LASSO is able to detect zero edges, which then give polytomies and sampled ancestors.
Since the estimators are defined recursively, we will establish these properties of adaptive and multi-step phylogenetic LASSO through an inductive argument.
Throughout this section, we will continue to use $q^{k, R_k}$ to denote the regularized estimator \eqref{eq:objective}.
We will use $q^{k, S_k}$ to denote the corresponding adaptive estimator where $S_k(q) = \sum_{i}{w_{k,i} \, q_i}$ and $w_{k,i} = \left(q^{k, R_k}_i\right)^{-\gamma}$ for some $\gamma>0$.
We will use $\alpha_k$ to be the regularizing parameter for the second step (regularizing with $S_k$) and keep $\lambda_k$ as the parameter for the first step.
These two need not be equal.

For positive sequences $f_k, g_k$, we will use the notation $f_k  \succ g_k$ to mean that $\lim_{k \to \infty}{f_k/g_k} = \infty$.
We have the following result showing consistency of adaptive LASSO, and setting the stage to show topological consistency of adaptive LASSO.

\begin{Theorem}
Assume that $\lambda_k \to 0$, $R_k(q^*) = \mathcal{O}(1) $ and that
\[
\alpha_k \to 0, \h \h \alpha_k \succ  \left( \frac{\log k}{k^{2/\beta}}\right)^{\gamma/\beta} , \h \h \alpha_k \succ \lambda_k^{\gamma/(\beta-1)}.
\]
We have
\begin{itemize}
\item[(i)] $S_k(q^*) = \mathcal{O}(1)$ and the estimator $q_k^S$ is consistent.
\item[(ii)] If there exists $C_3$ independent of $k$ satisfying \eqref{eqn:boost} then the estimator $q_k^S$ is topologically consistent.
\end{itemize}
\label{theo:adapt}
\end{Theorem}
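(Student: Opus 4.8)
The plan is to deduce both parts from Theorem~\ref{theo:base}, applied twice: once to the first-step estimator $q^{k,R_k}$ and once to the adaptive estimator $q^{k,S_k}$. First I would establish part~(i). Since $\lambda_k \to 0$ and $R_k(q^*) = \mathcal{O}(1)$, the hypothesis $\lambda_k R_k(q^*) \to 0$ of Theorem~\ref{theo:base} holds, so $q^{k,R_k} \to q^*$ almost surely and, with probability at least $1-\delta$ for $k \ge K(\delta)$, the rate bound $\|q^{k,R_k}-q^*\|_2 \le C(\delta)\big(\tfrac{\log k}{k^{2/\beta}}+\lambda_k R_k(q^*)\big)^{1/\beta}$ holds. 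Now I must control $S_k(q^*) = \sum_i w_{k,i} q_i^* = \sum_{i \notin \mathcal{A}(q^*)} (q_i^{k,R_k})^{-\gamma} q_i^*$. Because $q^{k,R_k} \to q^*$ and $q_i^* > 0$ for $i \notin \mathcal{A}(q^*)$, each weight $w_{k,i} = (q_i^{k,R_k})^{-\gamma}$ converges to $(q_i^*)^{-\gamma} < \infty$, so $S_k(q^*)$ converges to $\sum_{i \notin \mathcal{A}(q^*)} (q_i^*)^{1-\gamma}$, a finite constant; hence $S_k(q^*) = \mathcal{O}(1)$ on the high-probability event. (One should be slightly careful here: this is an "eventually, with high probability" statement, so I would phrase $S_k(q^*) = \mathcal{O}(1)$ as: for every $\delta$ there is $M(\delta)$ and $K(\delta)$ with $\mathbb{P}(S_k(q^*) \le M(\delta) \text{ for all } k \ge K(\delta)) \ge 1-\delta$.) With this in hand, $\alpha_k S_k(q^*) \to 0$ since $\alpha_k \to 0$, so Theorem~\ref{theo:base} applied to $q^{k,S_k}$ (penalty $S_k$, parameter $\alpha_k$) gives consistency: $q^{k,S_k} \to q^*$ a.s.

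Next I would turn to part~(ii), the topological consistency, which is where the real work lies. Assume \eqref{eqn:boost} holds for $q^{k,R_k}$, so the sharper rate $\|q^{k,R_k}-q^*\|_2 \le C'(\delta)\big(\tfrac{\log k}{k^{2/\beta}}+\lambda_k^{\beta/(\beta-1)}\big)^{1/\beta}$ is available. The strategy is to show (a) that no true-zero edge gets an estimate bounded away from zero — this follows from consistency already proven in~(i); and (b) that no true-nonzero edge gets estimated to zero. For (b), the key quantitative input is the behavior of the adaptive weights on true-zero edges: for $i \in \mathcal{A}(q^*)$ we have $w_{k,i} = (q_i^{k,R_k})^{-\gamma}$, and $q_i^{k,R_k} \le \|q^{k,R_k}-q^*\|_2$, so $w_{k,i} \ge \|q^{k,R_k}-q^*\|_2^{-\gamma}$, which blows up, while for $i \notin \mathcal{A}(q^*)$ the weights stay bounded. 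The growth condition $\alpha_k \succ (\log k / k^{2/\beta})^{\gamma/\beta}$ and $\alpha_k \succ \lambda_k^{\gamma/(\beta-1)}$ is exactly what is needed so that $\alpha_k w_{k,i} \succ 1$ for zero edges: indeed $\alpha_k w_{k,i} \ge \alpha_k \big(C'(\delta)(\tfrac{\log k}{k^{2/\beta}}+\lambda_k^{\beta/(\beta-1)})^{1/\beta}\big)^{-\gamma}$, and $(\tfrac{\log k}{k^{2/\beta}}+\lambda_k^{\beta/(\beta-1)})^{\gamma/\beta} \preceq (\tfrac{\log k}{k^{2/\beta}})^{\gamma/\beta} + \lambda_k^{\gamma/(\beta-1)}$, both of which $\alpha_k$ dominates. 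So on zero edges the effective penalty coefficient diverges, forcing those estimates to $0$; on nonzero edges it stays bounded, and combined with the fact (from part (i) / Theorem~\ref{theo:base} applied to $S_k$) that $\|q^{k,S_k}-q^*\|_2 \to 0$ faster than the $q_i^*$ are bounded below, those edges cannot collapse to zero.

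To make the "forcing to zero on zero edges" rigorous I would argue via the first-order optimality / subgradient conditions for \eqref{eq:objective} with penalty $\alpha_k S_k$: at the minimizer $q^{k,S_k}$, restricted to the box $0 \le q_i \le g_0$, for any edge $i$ with $q_i^{k,S_k} > 0$ (and not at the upper constraint) we need $\tfrac{1}{k}\partial_i \ell_k(q^{k,S_k}) = \alpha_k w_{k,i}$; then I bound $|\tfrac{1}{k}\partial_i \ell_k|$ uniformly on the neighborhood $\mathcal{T}(\mu)$ (this is essentially the local-Lipschitz estimate of Lemma~\ref{lem:locallips}, giving a bound by $c_2$), so if $\alpha_k w_{k,i} > c_2$ — which holds eventually for $i \in \mathcal{A}(q^*)$ by the displayed divergence — the stationarity equation is impossible unless $q_i^{k,S_k} = 0$. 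The analogous upper-constraint case is ruled out by consistency. For the nonzero edges I invoke that $q^{k,S_k} \to q^*$, so $q_i^{k,S_k} \to q_i^* > 0$ and in particular $q_i^{k,S_k} \ne 0$ eventually. Putting (a) and (b) together on an event of probability $\ge 1-\delta$ for $k$ large gives $\mathcal{A}(q^{k,S_k}) = \mathcal{A}(q^*)$, hence topological consistency. The main obstacle I anticipate is the bookkeeping to verify condition \eqref{eqn:boost} actually holds for the adaptive penalty $S_k$ itself (so that the recursion in the multi-step case can continue) — i.e.\ showing $|S_k(q^{k,S_k}) - S_k(q^*)| \le C \|q^{k,S_k}-q^*\|_2$ with a constant uniform in $k$, which requires uniform control of the weights $w_{k,i}$ on the nonzero coordinates and handling the zero coordinates separately using that both $q_i^{k,S_k}$ and $q_i^*$ vanish there; care is needed because $w_{k,i}$ itself is random and, on the bad event, could be large.
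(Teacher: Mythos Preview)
Your proposal is correct and follows essentially the same route as the paper: apply Theorem~\ref{theo:base} twice (first to $q^{k,R_k}$, then to $q^{k,S_k}$) for consistency, use the sharpened rate under \eqref{eqn:boost} to show $\alpha_k w_{k,i}\to\infty$ on true-zero edges, and use consistency for the true-nonzero edges.

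The one tactical difference worth flagging is in the ``forcing zero edges to zero'' step. You propose to use first-order/KKT stationarity: if $q_i^{k,S_k}>0$ then $\tfrac{1}{k}\partial_i\ell_k(q^{k,S_k})=\alpha_k w_{k,i}$, and the left side is bounded by $c_2$. The paper instead uses a purely variational comparison: it plugs the competitor $f(q^{k,S_k})$ (the estimator with the $i_0$-coordinate zeroed out) into the objective, obtaining
\[
\alpha_k w_{k,i_0}\, q_{i_0}^{k,S_k}\ \le\ \tfrac{1}{k}\ell_k(q^{k,S_k}) - \tfrac{1}{k}\ell_k(f(q^{k,S_k}))\ \le\ c_2\, q_{i_0}^{k,S_k},
\]
and divides through. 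The competitor argument is slightly cleaner because it only needs the Lipschitz bound of Lemma~\ref{lem:locallips} and the fact that $f(q^{k,S_k})\in\mathcal{T}(\mu^*)$ (via Lemma~\ref{lem:star} and consistency), with no differentiability or KKT bookkeeping at the boundary. Your KKT route is fine too, since $\ell_k$ is smooth on $\mathcal{T}(\mu)$, but you would need to be careful that the estimator is interior in all other coordinates when writing the stationarity equation. Also, your initial labeling of (a) and (b) is a bit muddled---consistency alone does \emph{not} give $q_i^{k,S_k}=0$ on true-zero edges, only convergence to zero; the ``forcing'' argument you then give is precisely what establishes $\mathcal{A}(q^*)\subset\mathcal{A}(q^{k,S_k})$, while consistency handles the reverse inclusion. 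Finally, the obstacle you anticipate at the end (that $S_k$ itself satisfies a Lipschitz bound like \eqref{eqn:boost}) is not part of this theorem; the paper isolates it as the separate Lemma~\ref{lem:lips}, which feeds the induction in Theorem~\ref{theo:convergence}.
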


We also obtain the following Lemma, which proves the regularity of the multiple-step adaptive LASSO, as describe by Equation $\eqref{eqn:boost}$:

\begin{Lemma}
If $q^{k, S_k}$ is topologically consistent and $q^{k, R_k}$ is consistent, then there exists a $C_3$ independent of $k$ such that
$$
|S_k(q^{k, S_k}) - S_k(q^*) | \le C_3 \|q^{k, S_k}-q^*\|_2 \h \forall k.
$$
\label{lem:lips}
\end{Lemma}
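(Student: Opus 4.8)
The plan is to exploit the fact that $S_k$ is \emph{linear} in $q$, together with the exact-zero structure supplied by topological consistency. Since $S_k(q)=\sum_{i\in E}w_{k,i}q_i$ with $w_{k,i}=\bigl(q^{k,R_k}_i\bigr)^{-\gamma}$, we can write
\[
S_k(q^{k,S_k})-S_k(q^*)=\sum_{i\in E}w_{k,i}\bigl(q^{k,S_k}_i-q^*_i\bigr),
\]
so it suffices to bound $\sum_{i\in E}w_{k,i}\,|q^{k,S_k}_i-q^*_i|$ by a $k$-independent multiple of $\|q^{k,S_k}-q^*\|_2$. First I would split the index set into $\mathcal{A}(q^*)$ and its complement and treat the two pieces separately.

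For $i\in\mathcal{A}(q^*)$, so $q^*_i=0$: topological consistency of $q^{k,S_k}$ gives, with probability tending to $1$, that $i\in\mathcal{A}(q^{k,S_k})$ as well, i.e. $q^{k,S_k}_i=0=q^*_i$, so every such term vanishes identically. The point is that this is an \emph{exact} zero rather than merely a small quantity, which is precisely what lets us discard the possibly-exploding weight $w_{k,i}$ on these coordinates. For $i\notin\mathcal{A}(q^*)$, so $q^*_i>0$: consistency of $q^{k,R_k}$ (Theorem~\ref{theo:base}) gives $q^{k,R_k}_i\to q^*_i>0$ almost surely, so there is (a.s.) a threshold beyond which $q^{k,R_k}_i\ge q^*_i/2$, hence $w_{k,i}\le(2/q^*_i)^{\gamma}$ for $k$ large. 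Setting $M:=\max_{i\notin\mathcal{A}(q^*)}(2/q^*_i)^{\gamma}$ — a constant depending only on $q^*$, $\gamma$, $N$ — Cauchy--Schwarz then yields
\[
\bigl|S_k(q^{k,S_k})-S_k(q^*)\bigr|\le M\sum_{i\notin\mathcal{A}(q^*)}|q^{k,S_k}_i-q^*_i|\le M\sqrt{|E|}\,\|q^{k,S_k}-q^*\|_2,
\]
so $C_3=M\sqrt{|E|}=M\sqrt{2N-3}$ works.

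The main obstacle — essentially the only one — is that $S_k$ carries the data-dependent weights $w_{k,i}$, which blow up exactly along the coordinates in $\mathcal{A}(q^*)$, so a naive $\ell_2$ bound over all of $E$ fails; the whole argument hinges on observing that topological consistency of the adaptive estimator forces those very coordinates to be exactly zero, removing the dangerous terms, while consistency of the base estimator keeps the surviving weights bounded. A minor bookkeeping point is that "$\forall k$" must be read in the probabilistic sense used downstream: for each $\delta>0$ there is a threshold $K(\delta)$ such that the displayed bound holds for all $k\ge K(\delta)$ with probability at least $1-\delta$ (for small $k$ the weights can even be infinite), which is exactly the form in which \eqref{eqn:boost} is invoked in Theorems~\ref{theo:base} and~\ref{theo:adapt}.
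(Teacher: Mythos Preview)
Your proof is correct and essentially identical to the paper's: both split off the indices in $\mathcal{A}(q^*)$ (where topological consistency makes the terms vanish exactly), bound the surviving weights via $q^{k,R_k}_i\ge q^*_i/2$ from consistency of the base estimator, and finish with Cauchy--Schwarz, yielding $C_3=(b/2)^{-\gamma}\sqrt{2N-3}$ with $b=\min_{i\notin\mathcal{A}(q^*)}q^*_i$. Your closing remark on the probabilistic reading of ``$\forall k$'' is a helpful clarification that the paper leaves implicit.
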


This recursive regularity condition helps establish the main result:

\begin{Theorem}
If
\[
\lambda_k^{[m]} \to 0, \h  \lambda_k^{[m]} \succ  \left( \frac{\log k}{k^{2/\beta}}\right)^{\gamma/\beta}, \h  \forall m=0, \ldots, M
\]
and
\begin{equation}
\lambda_k^{[m]}  \succ \left(\lambda_k^{[m-1]}\right)^{\gamma/(\beta-1)} \h  \forall m=1, \ldots, M
\label{eq:lambda}
\end{equation}
then
\begin{itemize}
\item[(i) ] The adaptive LASSO and the $m$-step LASSO are topologically consistent for all $1 \le m \le M$.
\item[(ii) ] For all $0 \le m \le M$, the $m$-step LASSO (including the phylogenetic LASSO and adaptive LASSO) are consistent.
Moreover, for all $\delta>0$ and $0 \le m \le M$, there exists $C^{[m]}(\delta) >0$ such that for all $k \ge K$,
\[
\| q^{k, R^{[m]}_k} -q^*\|_2 \le C^{[m]}(\delta) \left (\frac{\log k}{ k^{2/\beta}}+\left(\lambda_k^{[m]}\right)^{\beta/(\beta-1)}  \right)^{1/\beta}
\]
with probability at least $1-\delta$.
In other words, the convergence of $m$-step LASSO is of order
\[
\mathcal{O}_{P}\left ( \left(\frac{\log k}{ k^{2/\beta}}+\left(\lambda_k^{[m]}\right)^{\beta/(\beta-1)}   \right)^{1/\beta}\right)
\]
where $\mathcal{O}_{P}$ denotes big-O-in-probability.
\end{itemize}
\label{theo:convergence}
\end{Theorem}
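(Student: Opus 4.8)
The plan is to prove parts (i) and (ii) simultaneously by induction on $m$, running from the phylogenetic-LASSO base case $m=0$ up to $m=M$, and at each step cycling through Theorem~\ref{theo:base}, Theorem~\ref{theo:adapt}, and Lemma~\ref{lem:lips} in that order. The induction hypothesis carried at level $m$ will bundle five facts: $q^{k,R^{[m]}_k}$ is consistent; $R^{[m]}_k(q^*)=\mathcal{O}(1)$; the Lipschitz-type condition \eqref{eqn:boost} holds for $R^{[m]}_k$ with a $k$-independent constant; the rate bound of part (ii) holds at level $m$; and, when $m\ge1$, $q^{k,R^{[m]}_k}$ is topologically consistent. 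Only the $j=m-1$ (and, for re-deriving \eqref{eqn:boost}, $j=m-2$) instances are strictly needed in the step, but it is cleanest to carry the whole bundle.

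For the base case $m=0$, I would note that $R^{[0]}_k(q^*)=\sum_{i\in E}q^*_i$ is a fixed constant, bounded by $(2N-3)g_0$, so $\lambda^{[0]}_k R^{[0]}_k(q^*)\to0$ follows from $\lambda^{[0]}_k\to0$; Theorem~\ref{theo:base} then gives almost-sure consistency of $q^{k,R^{[0]}_k}$ together with the bound \eqref{eqn:baseOne}. Condition \eqref{eqn:boost} for the unweighted $\ell_1$ penalty is immediate from Cauchy--Schwarz with $C_3=\sqrt{2N-3}$, so the ``assume further'' clause of Theorem~\ref{theo:base} sharpens \eqref{eqn:baseOne} to the rate stated in part (ii) with $m=0$. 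This is the induction hypothesis at level $0$.

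For the inductive step, fix $1\le m\le M$ and assume the hypothesis at level $m-1$. First I would apply Theorem~\ref{theo:adapt} with the substitution $q^{k,R_k}\leftarrow q^{k,R^{[m-1]}_k}$, $\lambda_k\leftarrow\lambda^{[m-1]}_k$, $\alpha_k\leftarrow\lambda^{[m]}_k$, $S_k\leftarrow R^{[m]}_k$: the hypotheses $\lambda^{[m-1]}_k\to0$, $R^{[m-1]}_k(q^*)=\mathcal{O}(1)$, $\lambda^{[m]}_k\to0$, $\lambda^{[m]}_k\succ(\log k/k^{2/\beta})^{\gamma/\beta}$ and $\lambda^{[m]}_k\succ(\lambda^{[m-1]}_k)^{\gamma/(\beta-1)}$ are exactly what the induction hypothesis and \eqref{eq:lambda} supply, and the extra hypothesis \eqref{eqn:boost} needed for its part (ii) is in the induction hypothesis (or is Cauchy--Schwarz when $m=1$). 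This yields $R^{[m]}_k(q^*)=\mathcal{O}(1)$, consistency of $q^{k,R^{[m]}_k}$, and topological consistency of $q^{k,R^{[m]}_k}$. Next I would invoke Lemma~\ref{lem:lips} with $S_k\leftarrow R^{[m]}_k$, $R_k\leftarrow R^{[m-1]}_k$: topological consistency of $q^{k,R^{[m]}_k}$ and consistency of $q^{k,R^{[m-1]}_k}$ give \eqref{eqn:boost} for $R^{[m]}_k$ with a $k$-independent constant. Finally I would re-apply Theorem~\ref{theo:base} at level $m$ --- with $\lambda^{[m]}_k R^{[m]}_k(q^*)\to0$ (from $\lambda^{[m]}_k\to0$ and the boundedness just obtained) and \eqref{eqn:boost} now in hand --- to obtain the sharp rate bound of part (ii) at level $m$. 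This closes the induction; part (i) is the collection of the topological-consistency conclusions for $1\le m\le M$, part (ii) the consistency and rate statements for $0\le m\le M$, and the $\mathcal{O}_P$ formulation is a restatement of the latter.

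Since the heavy lifting is already done in Theorems~\ref{theo:base} and~\ref{theo:adapt} and Lemma~\ref{lem:lips}, no individual step is deep; the one place demanding care is the bookkeeping of the auxiliary conditions. Specifically, \eqref{eqn:boost} must not be assumed afresh at each level but re-derived --- from Lemma~\ref{lem:lips} when $m\ge1$, directly when $m=0$ --- and this can only be done \emph{after} topological consistency at that level is available, which dictates the order Theorem~\ref{theo:adapt} $\to$ Lemma~\ref{lem:lips} $\to$ Theorem~\ref{theo:base} inside each step; one must also check that $R^{[m]}_k(q^*)$ stays $\mathcal{O}(1)$ so that $\lambda^{[m]}_k R^{[m]}_k(q^*)\to0$ remains available to feed the next iteration.
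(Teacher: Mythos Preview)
Your proposal is correct and takes essentially the same route as the paper's proof: the base case uses the trivial Lipschitz bound for $R^{[0]}_k$ together with Theorem~\ref{theo:base}, and the inductive step cycles through Theorem~\ref{theo:adapt} (for consistency and topological consistency), Lemma~\ref{lem:lips} (to regain~\eqref{eqn:boost}), and Theorem~\ref{theo:base} (for the rate). The only cosmetic difference is that the paper runs two lighter inductions --- one for consistency, then one for topological consistency --- before invoking Theorem~\ref{theo:base} for the rates, whereas you bundle everything into a single induction hypothesis; the ingredients and their order of use are identical.
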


\begin{Remark}
If we further assume that $\gamma>\beta -1$, then the results of Theorem $\ref{theo:convergence}$ are valid if $\lambda_k^{[m]}$ is independent of $m$.
This enables us to keep the regularizing parameters $\lambda_k$ unchanged through successive applications of the multi-step estimator.

Similarly, the Theorem applies if  $\gamma>\beta -1$ and
\[
\lambda_k^{[m]}/\lambda_k^{[m-1]} \to c^{[m]}>0
\]
for all $m=1, \ldots, M$.
\end{Remark}

\begin{Remark}
Consider the case $\beta=2$ (for example, for group-based models), $\epsilon>0$ and $\gamma>1$.
If we choose $\lambda_k^{[m]} = \lambda_k$ (independent of $m$) such that
\[
\lambda_k  \sim  \frac{(\log k)^{1/2 + \epsilon}}{\sqrt{k}},
\]
then the convergence of $m$-step LASSO is of order
\[
\mathcal{O}_{P}\left ( \frac{(\log k)^{1/2 + \epsilon}}{\sqrt{k}}\right).
\]
\end{Remark}

We further note that for group-based models, we can take $\beta=2$, and that the theoretical results derived in this section apply for $\gamma>1$.
The limit case when $\gamma=1$ is interesting, for which we believe that the results still hold.
However, the techniques we employ in our framework, including the recursive arguments in Theorem $\ref{theo:convergence}$ and the concentration argument (Lemma \ref{lem:unifbound-onesided}), cannot be adapted to resolve the case.
This issue arises from the fact that less is known about the empirical phylogenetic likelihoods than about their counterparts in classical statistical analyses, which forces us to investigate them indirectly through a concentration argument.

\section{Algorithms}
In this section, we aim to design a robust solver for the phylogenetic LASSO problem.
Many efficient algorithms have been proposed for the LASSO minimization problem
\begin{equation}\label{eq:compOpt}
\hat{q} = \arg\min_{q} g(q) + \lambda \|q\|_1
\end{equation}
for a variety of objective functions $g$.
Note that we now drop the subscript $k$ denoting sequence length from $\lambda$, as we now consider a fixed data set in contrast to the previous theoretical analysis.
When $g(q)=\|Y-Xq\|_2^2$, \citet{efron04} introduced \emph{least angle regression} (LARS) that computes not only the estimates but also the solution path efficiently.
In more general settings, \emph{iterative shrinkage-thresholding algorithm} (ISTA) is a typical proximal gradient method that utilizes an efficient and sparsity-promoting proximal mapping operator (also known as soft-thresholding operator) in each iteration.
Adopting Nesterov's acceleration technique, \citet{Beck2009-ty} proposed a fast ISTA (FISTA) that has been proved to significantly improve the convergence rate.

These previous algorithms do not directly apply to phylogenetic LASSO.
LARS is mainly designed for regression and does not apply here.
Classical proximal gradient methods are not directly applicable for the phylogenetic LASSO for the following reasons:
(i) \emph{Nonconvexity}.
The negative log phylogenetic likelihood is usually non-convex.
Therefore, the convergence analysis (which is described briefly in the following section \ref{sec:pg}) may not hold.
Moreover, nonconvexity also makes it much harder to adapt to local smoothness which could lead to slow convergence.
(ii) \emph{Bounded domain}.
ISTA and FISTA also assume there are no constraints while in phylogenetic inference we need the branches to be nonnegative: $q\geq 0$.
(iii) \emph{Regions of infinite cost}.
Unlike normal cost functions, the negative phylogenetic log-likelihood can be infinite especially when $q$ is sparse as shown in the following proposition.

\begin{Proposition}
Let $\mb{Y}=(y_1,y_2,\ldots,y_N)\in\Omega^{N}$ be an observed character vector on one site. If $y_i\neq y_j$ and there is a path $(u_0,u_1),(u_1,u_2),\ldots,(u_s,u_{s+1}),\; u_0=i, u_{s+1}=j$ on the topology $\tau$ such that $q_{u_ku_{k+1}} = 0, \; k=0,\ldots, s$, then
\[
L(\mb{Y}|q) = 0.
\]
\end{Proposition}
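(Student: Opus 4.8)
The plan is to unwind the sum-product expression for the single-site likelihood and show that every term in it vanishes. Recall that for a single site with observed character vector $\mb{Y}=(y_1,\ldots,y_N)$, we have
\[
L(\mb{Y}|q) = \sum_{a}{\eta(a_{\rho})\prod_{(u,v)\in E}{P_{a_u a_v}(q_{uv})}},
\]
where $a$ ranges over all extensions of $\mb{Y}$ to the internal nodes, and $a_u = y_u$ for each leaf $u$. Since $\eta(a_\rho) \ge 0$ and each factor $P_{a_u a_v}(q_{uv}) \ge 0$, every summand is non-negative, so it suffices to show that for \emph{every} extension $a$ at least one factor in the product is zero.

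The key observation is the behavior of the transition matrix at branch length zero: $P(0) = e^{Q \cdot 0} = I$, the identity matrix, so $P_{xy}(0) = \delta_{xy}$. Hence along any edge $(u_k, u_{k+1})$ on the given path with $q_{u_k u_{k+1}} = 0$, the factor $P_{a_{u_k} a_{u_{k+1}}}(0)$ equals $1$ if $a_{u_k} = a_{u_{k+1}}$ and $0$ otherwise. I would argue by contradiction: fix an arbitrary extension $a$ and suppose its contribution to the sum is nonzero; then in particular every factor along the path is nonzero, which forces $a_{u_0} = a_{u_1} = \cdots = a_{u_{s+1}}$ by transitivity along the chain. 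But $u_0 = i$ and $u_{s+1} = j$ are leaves, so $a_{u_0} = y_i$ and $a_{u_{s+1}} = y_j$, giving $y_i = y_j$, contradicting the hypothesis $y_i \ne y_j$. Therefore every summand is zero and $L(\mb{Y}|q) = 0$.

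There is essentially no hard part here; the only thing to be a little careful about is bookkeeping — making sure that the path lies entirely within the edge set $E$ so that each $P_{u_k u_{k+1}}$ genuinely appears as a factor of the product (this is guaranteed since $(u_k,u_{k+1})$ is stated to be a path on the topology $\tau$), and that the identification $a_{u_0} = y_i$, $a_{u_{s+1}} = y_j$ is valid, which holds because extensions agree with $\mb{Y}$ on the leaves and the path endpoints are leaves. One may also phrase the argument positively rather than by contradiction: partition the sum according to whether $a$ is constant along the path; the non-constant extensions contribute $0$ because some $P_{a_{u_k}a_{u_{k+1}}}(0) = \delta_{a_{u_k}a_{u_{k+1}}} = 0$, and the constant-along-the-path extensions simply do not exist since they would require $y_i = y_j$. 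Either way the conclusion is immediate from $P(0) = I$.
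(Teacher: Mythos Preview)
Your proof is correct and follows essentially the same line as the paper's: fix an arbitrary extension $a$, use that the path endpoints are leaves so $a_{u_0}=y_i\neq y_j=a_{u_{s+1}}$, deduce that some consecutive pair along the path differs, and conclude that the corresponding factor $P_{a_{u_k}a_{u_{k+1}}}(0)=0$ kills the summand. The only cosmetic difference is that you phrase it as a contradiction (``all factors nonzero forces constancy along the path'') while the paper states it directly via the pigeonhole observation that distinct endpoints force a jump somewhere along the chain.
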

\begin{proof}
Let $a$ be any extension of $\mb{Y}$ to the internal nodes. Since $a_{u_0} = y_i \neq y_j = a_{u_{s+1}}$, there must be some $0\leq k \leq s$ such that $a_{u_k}\neq a_{u_{k+1}}\Rightarrow P_{a_{u_k}a_{u_{k+1}}}(q_{u_ku_{k+1}}) = 0$. Therefore,
\[
L(\mb{Y}|q) = \sum_{a}\eta(a_\rho)\prod_{(u,v)\in E}P_{a_ua_v}(q_{uv}) = 0.
\]
\end{proof}

In what follows, we briefly review the proximal gradient methods (ISTA) and their accelerations (FISTA), and provide an extension of FISTA to accommodate the above issues.

\subsection{Proximal Gradient Methods}\label{sec:pg}
Consider the nonsmooth $\ell_1$ regularized problem \eqref{eq:compOpt}.
Gradient descent generally does not work due to non-differentiability of the $\ell_1$ norm.
The key insight of the proximal gradient method is to view the gradient descent update as a minimization of a local linear approximation to $g$ plus a quadratic term.
This suggests the update strategy
\begin{align}
q^{(n+1)} &= \arg\min_{q}\left\{g(q^{(n)})+\langle\nabla g(q^{(n)}),q-q^{(n)}\rangle+\frac1{2t_n}\left\|q-q^{(n)}\right\|_2^2 + \lambda\|q\|_1\right\}\nonumber\\
&=\arg\min_q\left\{\frac1{2t_n}\left\|q-\left(q^{(n)}-t_n\nabla g(q^{(n)})\right)\right\|_2^2 + \lambda \|q\|_1\right\}\label{eq:pg}
\end{align}
where $t_n$ is the step size.
Note that \eqref{eq:pg} corresponds to the proximal map of $h(q) = \|q\|_1$, which is defined as follows
\begin{equation}\label{eq:prox}
\mathbf{prox}_{th}(p) := \arg\min_{q}\left\{\frac1{2}\|q-p\|_2^2+th(q)\right\} = \arg\min_{q}\left\{\frac1{2t}\|q-p\|_2^2+h(q)\right\}
\end{equation}
If the regularization function $h$ is simple, \eqref{eq:prox} is usually easy to solve.
For example, in case of $h(q)=\|q\|_1$, it can be solved by the \emph{soft thresholding operator}
\[
\mathcal{S}_t(p) = \mathrm{sign}(p)(|p|-t)_{+}
\]
where $x_+=\max\{x,0\}$.
Applying this operator to \eqref{eq:pg}, we get the ISTA update formula
\begin{equation}\label{eq:ista}
q^{(n+1)} = \mathcal{S}_{\lambda t_n}(q^{(n)}-t_n\nabla g(q^{(n)})).
\end{equation}
Let $f=g+\lambda\|q\|_1$.
Assume $g$ is convex and $\nabla g$ is Lipschitz continuous with Lipschitz constant $L_{\nabla g}>0$; if a constant step size is used and $t_n = t < 1/L_{\nabla g}$, then ISTA converges at rate
\begin{equation}\label{eq:rateista}
f(q^{(n)}) - f(q^\ast) \leq \frac{1}{2tn} \|q^{(0)}-q^\ast\|_2^2
\end{equation}
where $q^\ast$ is the optimal solution. This means ISTA has \emph{sublinear convergence} whenever the stepsize is in the interval $(0,1/L_{\nabla g}]$. Note that ISTA could have linear convergence if $g$ is strongly convex.
\vskip3pt
The convergence rate in \eqref{eq:rateista} can be significantly improved using Nesterov's acceleration technique.
The acceleration comes from a weighted combination of the current and previous gradient directions, which is similar to gradient descent with momentum.
This leads to Algorithm \ref{alg:fista} which is essentially equivalent to the \emph{fast iterative shrinkage-thresholding algorithm} (FISTA) introduced by \citet{Beck2009-ty}.
Under the same condition, FISTA enjoys a significantly faster convergence rate
\begin{equation}\label{eq:ratefista}
f(q^{(n)}) - f(q^\ast) \leq \frac{2}{t(n+1)^2}\|q^{(0)}-q^\ast\|_2^2.
\end{equation}
Notice that the above convergence rates both require the stepsize $t\leq 1/L_{\nabla g}$. In practice, however, the Lipschitz coefficient $L_{\nabla g}$ is usually unavailable and \emph{backtracking line search} is commonly used.

\begin{algorithm}\caption{Fast Iterative Shrinkage-Thresholding Algorithm (FISTA)}\label{alg:fista}
\begin{algorithmic}[1]
\Input initial value $q^{(0)}$, step size $t$, regularization coefficient $\lambda$
\State Set $q^{(-1)} = q^{(0)}, n = 1$
\While{not converged}
  \State $p \gets q^{(n-1)} + \frac{n-2}{n+1}(q^{(n-1)}-q^{(n-2)})$ \Comment{Nesterov's Acceleration}
  \State $q^{(n)}	\gets \mathcal{S}_{\lambda t}(p-t\nabla g(p))$ \Comment{Soft-Thresholding Operator}
  \State $n\gets n+1$
\EndWhile
\Output $q^\ast \gets q^{(n)}$
\end{algorithmic}
\end{algorithm}

\subsection{Projected FISTA}
FISTA usually assumes no constraints for the parameters.
However, in the phylogenetic case branch lengths are must be non-negative ($q\geq 0$).
To address this issue, we combine the projected gradient method (which can be viewed as proximal gradient as well) with FISTA to assure non-negative updates.
We refer to this hybrid as \emph{projected} FISTA (pFISTA).
Note that a similar strategy has been adopted by \citet{liu15} in tight frames based magnetic resonance image reconstruction.
Let $\mathcal{C}$ be a convex feasible set, define the indicator function $I_\mathcal{C}$ of the set $\mathcal{C}$:
\[
I_\mathcal{C}(q) = \left\{\begin{array}{ll} 0 &\mathrm{if}\; q \in \mathcal{C}, \mathrm{and}\\
+\infty & \mathrm{otherwise}\end{array}\right.
\]
With the constraint $q\in\mathcal{C}$, we consider the following projected proximal gradient update
\begin{align}
q^{(n+1)} &= \arg\min_{q\in\mathcal{C}}\left\{g(p)+\langle\nabla g(p),q-p\rangle+\frac1{2t_n}\|q-p\|_2^2 + h(q)\right\}\nonumber\\
&=\arg\min_q\left\{g(p)+\langle\nabla g(p),q-p\rangle+\frac1{2t_n}\|q-p\|_2^2 + h(q) + I_\mathcal{C}(q)\right\}\nonumber\\
&=\mathbf{prox}_{t_nh_\mathcal{C}}(p-t_n\nabla g(p))\label{eq:ppg}
\end{align}
where $h_\mathcal{C} = h(q) + I_\mathcal{C}(q)$. Using \emph{forward-backward splitting} \citep[see][]{combettes06}, \eqref{eq:ppg} can be approximated as
\begin{equation}\label{eq:splitapprox}
\mathbf{prox}_{t_nh_\mathcal{C}}(p-t_n\nabla g(p)) \approx \Pi_{\mathcal{C}}(\mathbf{prox}_{t_nh}(p-t_n\nabla g(p)))
\end{equation}
where $\Pi_\mathcal{C}$ is the Euclidean projection on to $\mathcal{C}$.
When $h(q)=\|q\|_1,\; \mathcal{C}=\{q:q\geq 0\}$, we have the following pFISTA update formula
\[
p=q^{(n)} + \frac{n-1}{n+2}\left(q^{(n)}-q^{(n-1)}\right),\quad q^{(n+1)} = \left[S_{\lambda t_n}(p_+-t_n\nabla g(p_+))\right]_+.
\]
Note that in this case, \eqref{eq:splitapprox} is actually exact.
Similarly, we can easily derive the projected ISTA (pISTA) update formula and we omit it here.

\subsection{Restarting}
To accommodate non-convexity and possible infinities of the phylogenetic cost function, we adopt the restarting technique introduced by \citet{donoghue13} where they used it as a heuristic means of improving the convergence rate of accelerated gradient schemes.
In the phylogenetic case, due to the non-convexity of negative phylogenetic log-likelihood, backtracking line search would fail to adapt to local smoothness which could lead to inefficient small step size.
Moreover, the LASSO penalty will frequently push us into the ``forbidden'' zone $\{q:g(q)=+\infty\}$, especially when there are a lot of short branches.
We therefore adjust the restarting criteria as follows:

\begin{itemize}
\item Small stepsize: restart whenever $t_n$ is less than a restart threshold $\epsilon$.
\item Infinite cost: restart whenever $g(p_+) = +\infty$.
\end{itemize}

Equipping FISTA with projection and adaptive restarting, we obtain an efficient phylogenetic LASSO solver that we summarize in Algorithm \ref{alg:pfista}.

\begin{algorithm}\caption{Projected FISTA with Restarting}\label{alg:pfista}
\begin{algorithmic}[1]
\Input initial $q^{(0)}$, default step size $t$, regularization coefficient $\lambda$, restart threshold $\epsilon$, backtracking line search parameter $\omega\in(0,1)$
\While{not converged}
\State Set $q^{(-1)} = q^{(0)}, \;t_1=t, \;n = 1$
\While{not converged}
  \State $p \gets q^{(n-1)} + \frac{n-2}{n+1}(q^{(n-1)}-q^{(n-2)})$ \Comment{Nesterov's Acceleration}
  \If{$g(p_+) = +\infty$} \Comment{Restarting}
  \State {\bf break the inner loop}
  \EndIf
  \State $t_n \gets t_{n-1}$
  \State Adapt $t_n$ through \emph{backtracking line search} with $\omega$
  \If{$t_n < \epsilon$} \Comment{Restarting}
  \State {\bf break the inner loop}
  \EndIf
  \State $q^{(n)}	\gets \left[\mathcal{S}_{\lambda t_n}(p_+-t_n\nabla g(p_+))\right]_+$ \Comment{Projected Soft-Thresholding Operator}
  \State $n\gets n+1$
\EndWhile
\State Set $q^{(0)}=q^{(n-1)}$
\EndWhile
\Output $q^\ast \gets q^{(n)}$
\end{algorithmic}
\end{algorithm}
\newpage

\begin{Remark}
Note that the adaptive phylogenetic LASSO
\begin{equation}\label{eq:adalasso}
\hat{q}^{S} = \arg\min_q g(q) + \lambda \sum_{j}w_jq_j
\end{equation}
is equivalent to (using $/$ to denote componentwise division)
\[
\tilde{q}^S = \arg\min_q\left\{g(q/w) + \lambda \|q\|_1\right\},\quad \hat{q}^S = \tilde{q}^S/w.
\]
Therefore, Algorithm \ref{alg:pfista} can also be used to solve the (multi-step) adaptive phylogenetic LASSO.
\end{Remark}

\section{Experiments}
In this section, we first demonstrate the efficiency of the proposed algorithm for solving the phylogenetic LASSO problem when combined with maximum-likelihood phylogenetic inference.
We then show (non-adaptive) phylogenetic LASSO does not appear to be strong enough to find zero edges on simulated data; adaptive phylogenetic LASSO performs much better.
We, therefore, compare our adaptive phylogenetic LASSO with simple thresholding and rjMCMC on simulated data and then apply it to some real data sets.
For all simulation and inference, we use the simplest \citet{Jukes1969-hv} model of DNA substitution, in which all substitutions have equal rates.
The choice of regularization $\lambda$ is to a certain extent data dependent.
For the simulated data, we choose a range of $\lambda$s to demonstrate the balance between miss rate and false alarm rate (Figure \ref{fig:consistency}).
For the Dengue virus data, we find that the performance is fairly insensitive to the regularization coefficient once the regularization coefficient is reasonably large (Figure \ref{fig:numzerobranch}).

We use PhyloInfer to compute the phylogenetic likelihood via the pruning algorithm \cite{felsenstein1981evolutionary}, which can be found at \url{https://github.com/zcrabbit/PhyloInfer}.
PhyloInfer is a Python package originally developed for extending Hamiltonian Monte Carlo to Bayesian phylogenetic inference \citep{Dinh2017-oj}.
The code for adaptive phylogenetic LASSO is made available at \url{https://github.com/matsengrp/adaLASSO-phylo}.

\subsection{Efficiency of pFISTA for solving the phylogenetic LASSO}
The fast convergence rate of FISTA (or pFISTA) need not hold when the cost function $g$ is nonconvex.
However, we can expect that $g$ is well approximated by a quadratic function near the optimal (or some local mode) $q^\ast$ \citep{donoghue13}.
That is, there exists a neighborhood of $q^\ast$ inside of which
\[
g(q) \approx g(q^\ast) + \frac12(q-q^\ast)^T\nabla^2g(q^\ast)(q-q^\ast)
\]
When we are eventually inside this domain, we will observe behavior consistent with the convergence analysis in Section~\ref{sec:pg}.

To test the efficiency of pFISTA in different scenarios, we consider various simulated data sets generated from ``sparse'' unrooted trees with $100$ tips and $50$ randomly chosen zero branches as follows.
All simulated data sets contain $1000$ independent observations on the leaf nodes.
We set the minimum step size $\epsilon=5e\text{-}08$ for restarting.

We use the following simulation setups, in which branch lengths are expressed in the traditional units of expected number of substitutions per site.

\begin{center}
\begin{figure}[!t]
\includegraphics[width=0.45\textwidth]{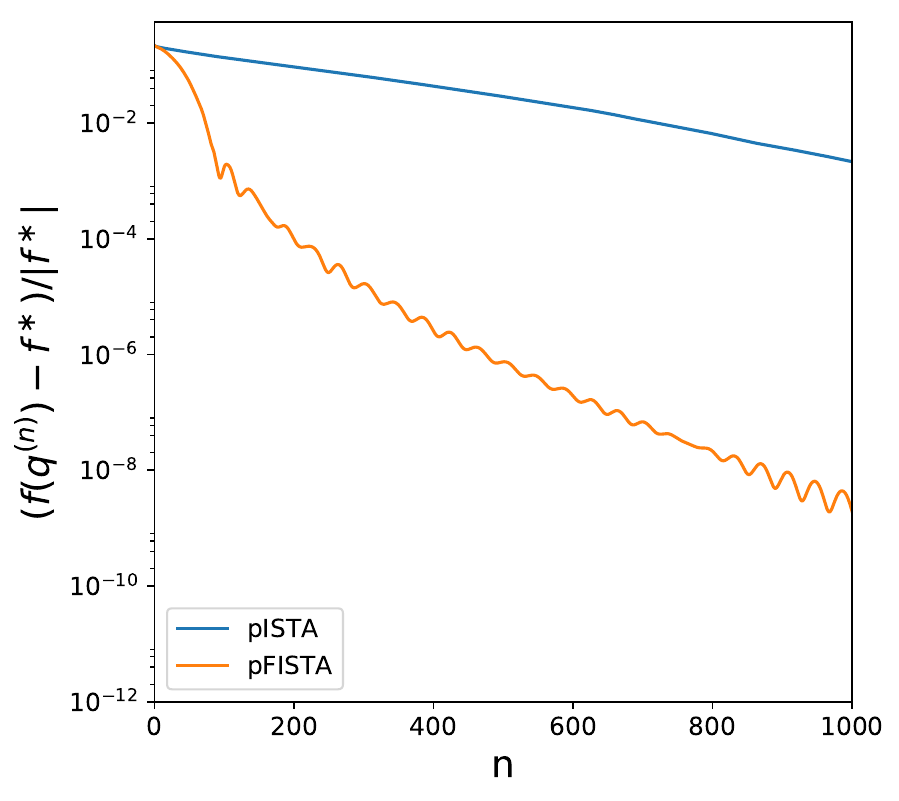}\hspace{20pt}
\includegraphics[width=0.45\textwidth]{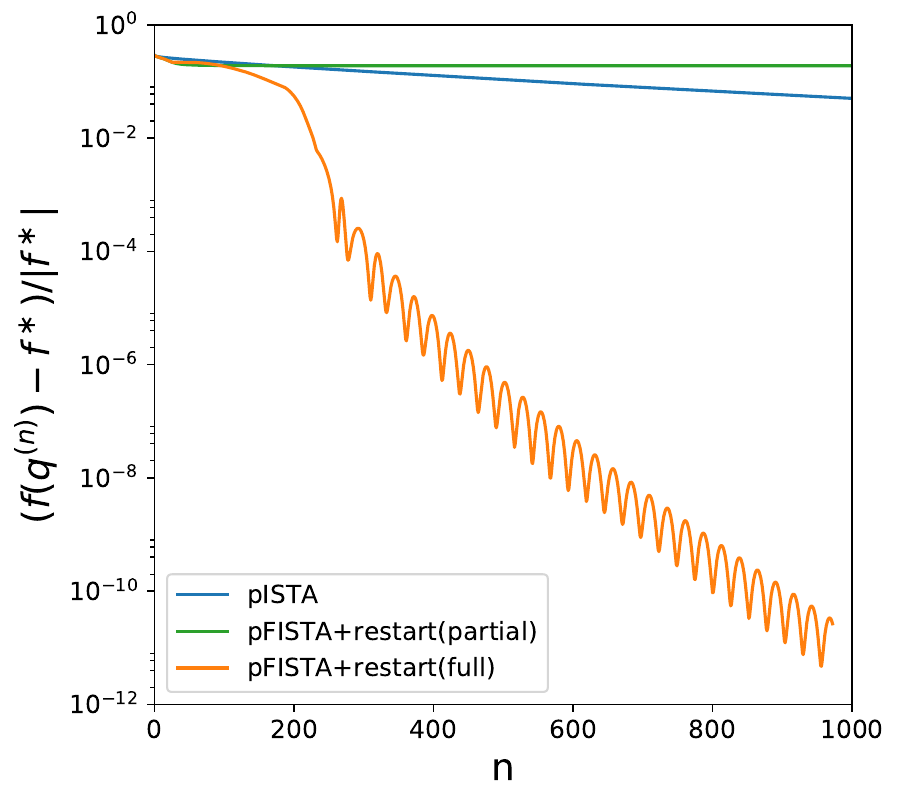}
\caption{
pISTA vs pFISTA on simulated data sets in terms of the relative error $(f(q^{(n)})-f^\ast)/|f^\ast|$, where $f=g+\|q\|_1$ and $f^\ast=f(q^\ast)$. The optimal solution $q^\ast$ is obtained from a long run of pFISTA. We used penalty coefficient $\lambda = 1.0$ for each run.
Left panel: simulation 1; Right panel: simulation 2. In simulation 2, we tried two restarting strategies: restart whenever $g(p_+)=+\infty$ (partial) and restart whenever $t_n<\epsilon$ or $g(p_+)=+\infty$ (full).
}
\label{fig:istavsfista}
\end{figure}
\end{center}

\paragraph{\bf Simulation 1} (\emph{No short branches}).
All nonzero branches have length $0.05$.
Because there are no short nonzero branches, branches that are originally nonzero are less likely to be collapsed to zero and we expect no restarting is needed.

\paragraph{\bf Simulation 2} (\emph{A few short branches}). For all the nonzero branches, we randomly choose $15$ of them and set their lengths to $0.002$. All the other branches have length $0.05$. In this setting, there are a few short branches that are likely to be shrunken to zero. As a result, several restarts may be needed before convergence.
\vskip7pt

We see that when the model does not have very short non-zero branches and the phylogenetic cost is more regular, pFISTA finds the quadratic domain quickly and performs consistently with the corresponding convergence rate in equation \eqref{eq:ratefista}, even without restarting (Figure~\ref{fig:istavsfista}, left).
When the model does have many very short branches and the negative phylogenetic log-likelihood is highly nonconvex, pFISTA with restart still manages to arrive at the quadratic domain quickly and exhibits fast convergence thereafter.
Furthermore, we find the \emph{small stepsize} restarting criterion is useful to adapt to changing local smoothness and facilitate mode exploration.
In both situations, pFISTA performs consistently better than pISTA.
As a matter of fact, pISTA is monotonic so is more likely to get stuck in local minima, and hence may not be suitable for nonconvex optimization.
We, therefore, use pFISTA with restart as our default algorithm in all the following experiments.

\begin{Remark}
Like other non-convex optimization algorithms, pFISTA with restart may be sensitive to the starting position of the parameters. However, due to the momentum introduced in Nesterov's acceleration (which causes the ripples in Figure \ref{fig:istavsfista}) and adaptive restarting, pFISTA with restart is more likely to escape local minima and potentially arrive at the global minimum.
\end{Remark}

\subsection{Performance of phylogenetic LASSO}
Through simulation we also find that in practice the (non-adaptive) phylogenetic LASSO penalty is not strong enough to find all zero branches.
Indeed, we find that phylogenetic LASSO only recovers around 60\% of the sparsity found in the true models and larger penalty does not necessarily give more sparsity (Table \ref{tab:inconsistency}).
This suggests we use the multistep adaptive phylogenetic LASSO that has been proven to be topologically consistent under mild conditions (Theorem~\ref{theo:convergence}).

\begin{table}[ht]
\begin{center}
\begin{tabularx}{0.56\textwidth}{@{}c|ccccccc@{}}
\toprule
$\lambda$  &   $1$     &    $5$   &   $10$   &   $20$   &   $40$    &      $80$    &    $160$   \\
\midrule
Simulation 1      &    $32$  &  $32$   &    $32$  &   $32$   &   $32$    &      $32$    &     $32$    \\
Simulation 2      &    $32$  &  $32$   &    $32$  &   $32$   &   $32$    &      $32$    &     $31$    \\
\bottomrule
\end{tabularx}
\vspace{10pt}
\caption{Number of correct zero length branches found by (non-adaptive) phylogenetic LASSO using various penalty coefficients in both simulation models, each of which have 50 zero length branches.}\label{tab:inconsistency}
\end{center}
\vspace{-10pt}
\end{table}

\subsection{Performance of adaptive phylogenetic LASSO}
Next, we demonstrate that the topologically consistent (multistep) adaptive phylogenetic LASSO significantly enhances sparsity on simulated data compared to phylogenetic LASSO.
We will use the more difficult simulation 2 that have a combination of zero and very short branches.
In what follows (and for the rest of this section), we compute adaptive and multistep adaptive phylogenetic LASSO as described in Section~\ref{sec:regDefs}.
Note that $m=1$ (first cycle) is the phylogenetic LASSO and $m=2$ (second cycle) corresponds to the adaptive phylogenetic LASSO.
Therefore, we can compare all phylogenetic LASSO estimators by simply running the multistep adaptive phylogenetic LASSO with the maximum cycle number $M\geq 2$.
Our theoretical results are for $\gamma>1$, however we have found that in practice large $\gamma$ often leads to severe adaptive weights and hence numerical instability.
Thus we use $\gamma=1$ in the following experiments and put some results for $\gamma>1$ (with guaranteed topological consistency) in the Appendix.

We run the multistep phylogenetic LASSO with $M=4$ cycles.
To test the topological consistency of the estimators, we use different initial regularization coefficients $\lambda^{[0]}=10, 20, 30, 40, 50$ and update the regularization coefficients according to
\[
\lambda^{[m]} = \lambda^{[m-1]} \, \frac{\mathrm{mean}((\hat{q}^{[m-1]})^\gamma)}{\mathrm{mean}((\hat{q}^{[m-2]})^\gamma)}, \quad \hat{q}^{[-1]} = 1
\]
which maintains a relatively stable regularization among the adaptive LASSO steps because the varying part of the regularization is roughly $\lambda^{[m]} / \mathrm{mean}((q^{[m-1]})^\gamma)$ for the $m$th cycle.
This formula provides reasonably good balance between sparsity (identified zero branches) and numerical stability in our experiments.

We find that multistep adaptive phylogenetic LASSO does improve sparsity identification while maintaining a relatively low misidentification rate.
Indeed, as the cycle number increases, the estimator now is able to identify more zero branches (Figure \ref{fig:consistency}, upper left panel).
Moreover, unlike the phylogenetic LASSO ($m=1$), we do observe more sparsity when the regularization coefficient increases at cycles $m>1$.
As more cycles are run and larger penalty coefficients are used, we see that multistep adaptive LASSO manages to reduce miss detection (i.e.\ unidentified zero branches) without introducing many extra false alarms (misidentified zero branches; Figure \ref{fig:consistency}, upper right panel).
In contrast, simple thresholding is more likely to misidentify zero branches when larger thresholds are used to bring down miss detection.
The choice of the regularization coefficient $\lambda$ is also important.
While small $\lambda$ is not enough for detecting most zero branches, these simulations show that a too-large $\lambda$ is likely to increase the number of false detections (Figure \ref{fig:consistency}, bottom panel).
On the other hand, as described below, for real data we find less dependence on the exact value of $\lambda$.

\begin{figure}
\begin{center}
\includegraphics[width=0.45\textwidth]{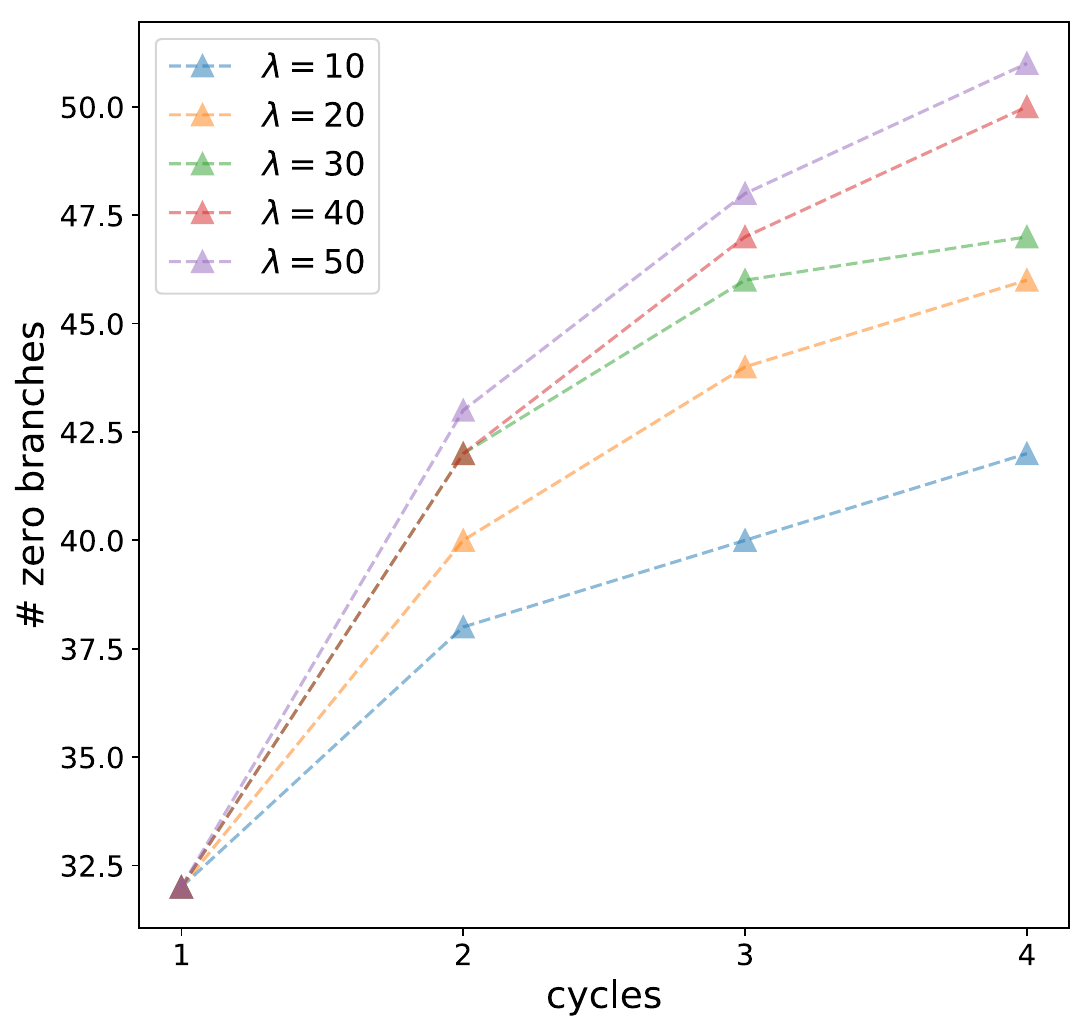}\hspace{20pt}
\includegraphics[width=0.45\textwidth]{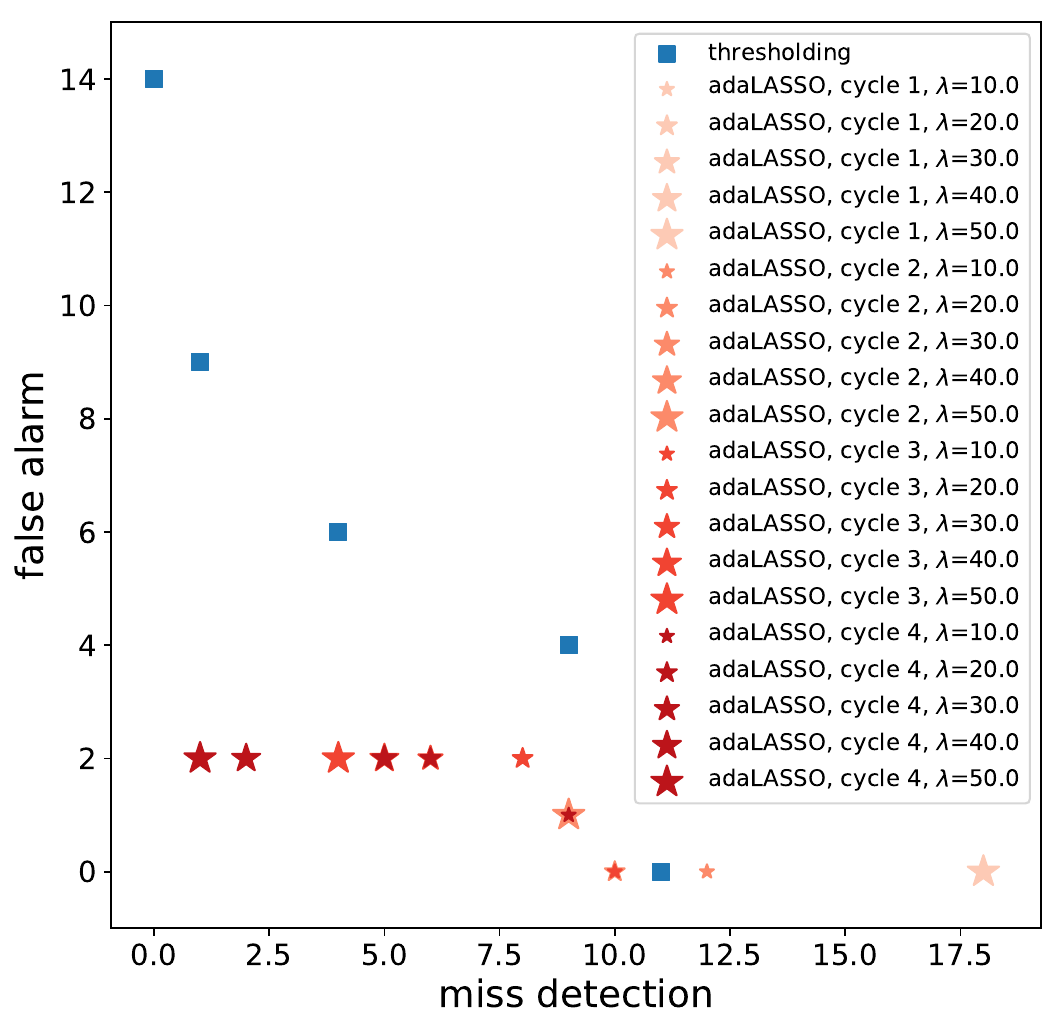}
\includegraphics[width=0.50\textwidth]{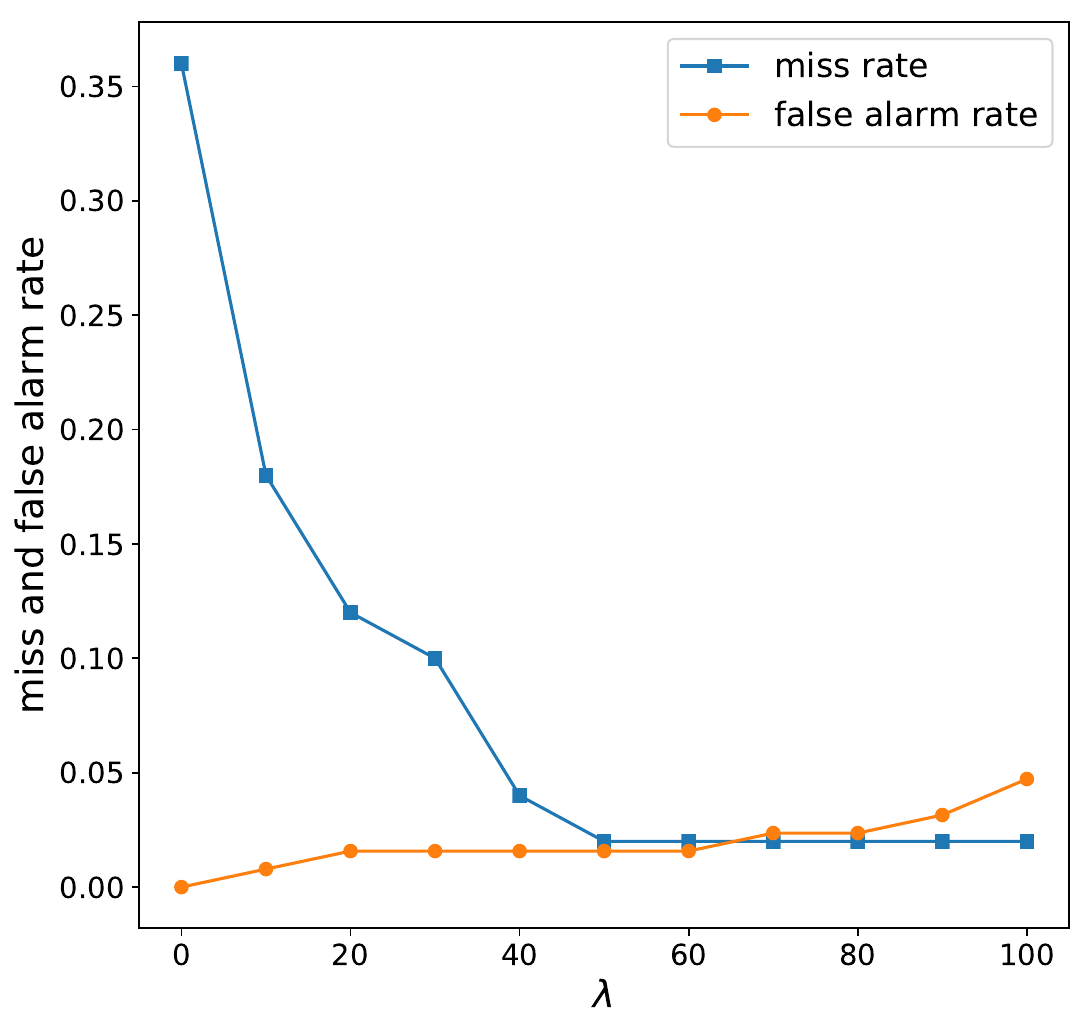}
\end{center}
\caption{Topological consistency comparison of different phylogenetic LASSO procedures on simulation 2. Upper Left panel: number of identified zero branches after various numbers of multistep adaptive LASSO cycles. Upper Right panel: the number of misidentified zero branches (false alarm) and the number of unidentified zero branches (miss detection) for simple thresholding and multistep adaptive phylogenetic LASSO at different cycles. Bottom panel: miss rate and false alarm rate as a function of the regularization coefficient with 4 cycles.}\label{fig:consistency}
\end{figure}

\subsection{Short Edge Detection}
Previous work has proposed Bayesian approaches to infer non-bifurcating tree topologies by assigning priors that cover all of the tree space, including less-resolved tree topologies \citep{Lewis2005-ez,Lewis2015-kv}.
Since the numbers of branches (parameters) are different among those tree topologies, reversible-jump MCMC (rjMCMC) is used for posterior inference.
Both sparsity-promoting priors and adaptive LASSO are means of sparsity encouragement that allow us to discover non-bifurcating tree topologies, which as described in the Introduction make different evolutionary statements than their resolved counterparts.
However, those sparsity encouraging procedures also make it much more difficult to detect relatively short edges.
Thus, we would like to understand the performance of methods in terms of detection probability: the probability of inferring a branch to be of non-zero length.

To investigate how short an edge can be and still be detected by both methods, we follow \citet{Lewis2005-ez} and simulate a series of data sets using the same tree as in Simulation 2.
All branch lengths are the same as in that simulation except those for the $15$ randomly chosen short branches, each of which we take to be $0.0, 0.002, 0.004, 0.006, 0.008, 0.010$ for the various trials; the nonzero short branches are meant to be particularly challenging to distinguish from the actual zero branches.
For each of these six lengths, we simulate $100$ data sets of the same size (1000 sites).
These values for short branches are multiples of $1/1000$, which provides, on average, one mutation per data set along the branch of interest.
Note that branch lengths represent the \emph{expected} number of mutations per site, so for example a branch length of $0.001$ does not guarantee that a mutation will occur on the branch of interest in every simulated data set.
We run multistep adaptive phylogenetic LASSO with $M=4$ cycles and initial regularization coefficient $\lambda^{[0]}=50$, and rjMCMC with the polytomy prior $C=1$ ($C$ is the ratio of prior mass between trees with successive numbers of internal nodes as defined in \citet{Lewis2005-ez}) for analysis.
The detection probabilities of rjMCMC are the averaged split posterior probabilities of the corresponding branches over the 100 independent data sets.

We find that multistep adaptive phylogenetic LASSO indeed strikes a better balance between identifying zero branches and detecting short branches than rjMCMC in this simulation study (Figure \ref{fig:detection}).
In addition to being slightly better at identifying zero branches than rjMCMC (partly due to a weak polytomy prior $C=1$), multistep adaptive phylogenetic LASSO has a substantially improved detection probability for short branches (Figure \ref{fig:detection}).
Also note that sufficiently long branch lengths (about $10$ expected substitution per data set) are needed for an edge to be reliably detected using either method.

\begin{figure}
\begin{center}
\includegraphics[width=0.7\textwidth]{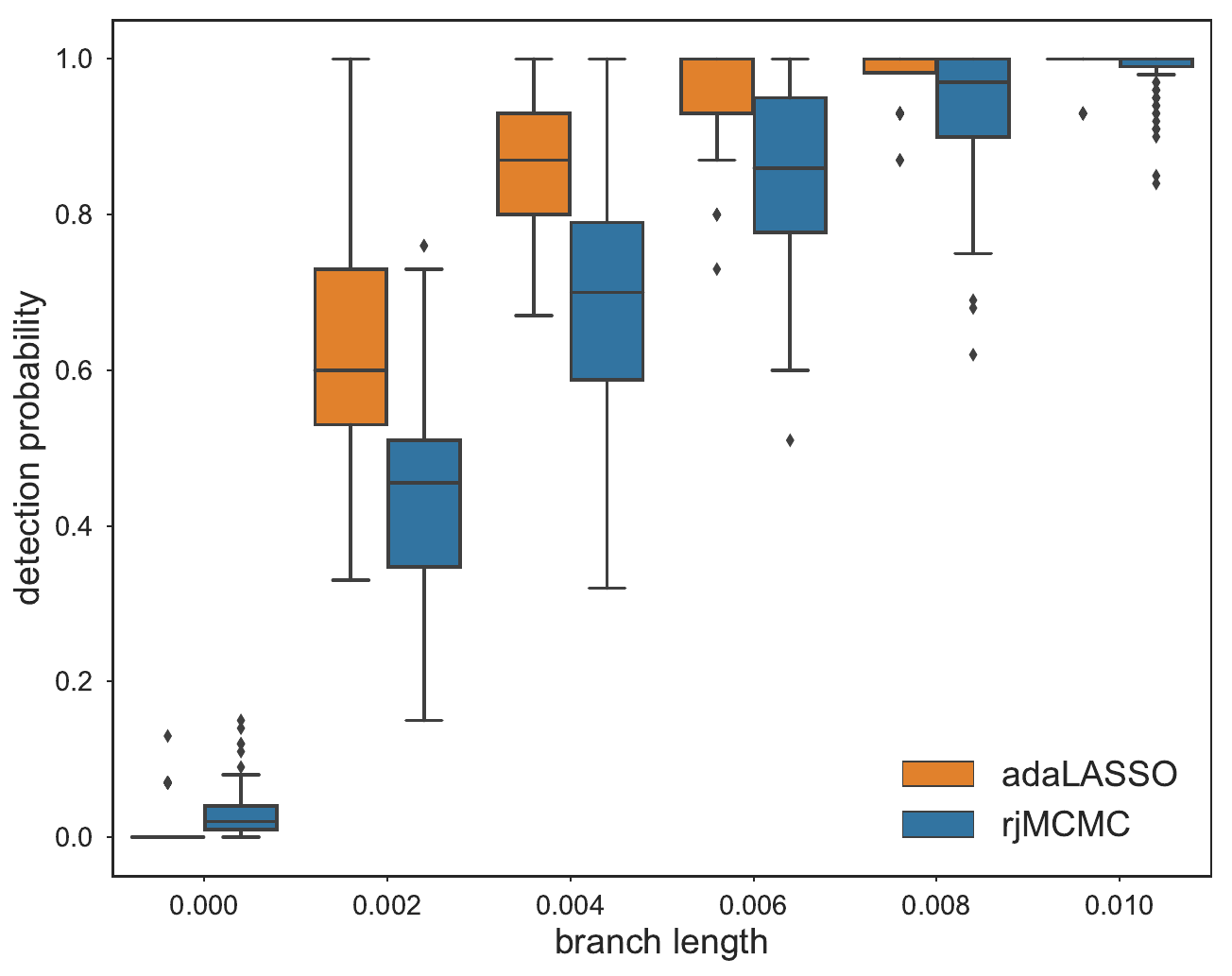}
\end{center}
\caption{Performance of multistep (4 cycle) adaptive phylogenetic LASSO and rjMCMC at detecting short branches.
Detection probability is the probability of inferring a branch to be of non-zero length.
Therefore, the ideal detection probability is 1 for non-zero length branches (all except for the first value on the x-axis) and 0 for zero length branches.
}
\label{fig:detection}
\end{figure}

\subsection{Dengue Virus Data}

We now compare our adaptive phylogenetic LASSO methods to others on a real data set.
So far, we have tested the performance of multistep adaptive phylogenetic LASSO on a fixed topology.
For real data sets, the underlying phylogenies are unknown and hence have to be inferred from the data.
We therefore propose to use multistep adaptive phylogenetic LASSO as a sparsity-enforcing procedure after traditional maximum likelihood based inferences.
In what follows, we use this combined procedure together with bootstrapping to measure edge support on a real data set of the Dengue genome sequences.
In our experiment, we consider one typical subset of the 4th Dengue serotype (``DENV4'') consisting of 22 whole-genome sequences from Brazil curated by the nextstrain project \citep{Hadfield2018-jg} and originally sourced from the LANL hemorrhagic fever virus database \citep{Kuiken2012-vp}.
The sequence alignment of these sequences comprises 10756 nucleotide sites.

Following \citet{Lewis2005-ez}, we conduct our analysis using the following methods: (1) maximum likelihood bootstrapping columns of a sequence alignment; (2) a conventional MCMC Bayesian inference restricted to fully resolved tree topologies; (3) a reversible-jump MCMC method moving among fully resolved as well as polytomous tree topologies; (4) two combined procedures, maximum likelihood bootstrapping plus multistep adaptive phylogenetic LASSO and maximum likelihood bootstrapping plus thresholding, both allow fully bifurcating and non-bifurcating tree topologies.
Maximum likelihood bootstrap analysis is performed using RAxML \citep{stamatakis14} with $1000$ replicates.
The conventional MCMC Bayesian analysis is done in MrBayes \citep{Ronquist2012-hi} where we place a uniform prior on the fully resolved topology and Exponential (rate 10) prior on the branch lengths.
The rjMCMC analysis is run in \texttt{p4} \citep{foster04}, using a flat polytomy prior with $C=1$.
Their code can be found at \url{https://github.com/Anaphory/p4-phylogeny}.
For each Bayesian approach, a single Markov chain was run $8e\text{+}06$ generations after a $2e\text{+}06$ generation burn-in period.
Trees and branch lengths are sampled every $1000$ generations, yielding $8000$ samples.
Both combined procedures are implemented based on the bootstrapped ML trees obtained in (1).
For multistep adaptive phylogenetic LASSO, we use $M=4$ cycles and test different initial regularization coefficients $\lambda^{[0]}=150, 300, 450$.
We set the thresholds $\kappa = 1e\text{-}06,5e\text{-}05,1e\text{-}04$ for the simple thresholding method.

\begin{figure}
\begin{center}
\includegraphics[width=\textwidth]{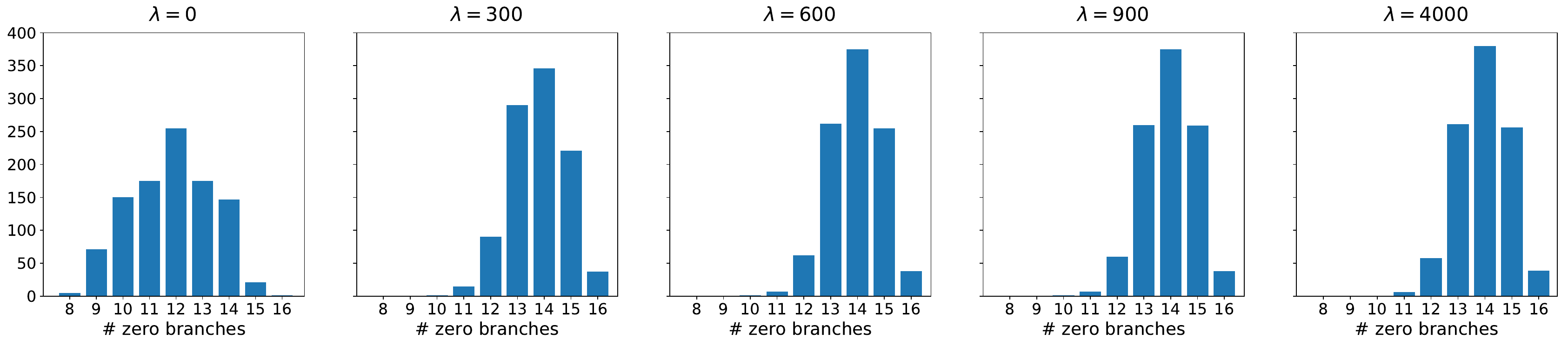}
\caption{The distributions (across bootstrap replicates) of numbers of zero branches detected by adaptive phylogenetic LASSO for various regularization coefficients.}\label{fig:numzerobranch}
\end{center}
\end{figure}

\begin{figure}
\begin{center}
\includegraphics[width=\textwidth]{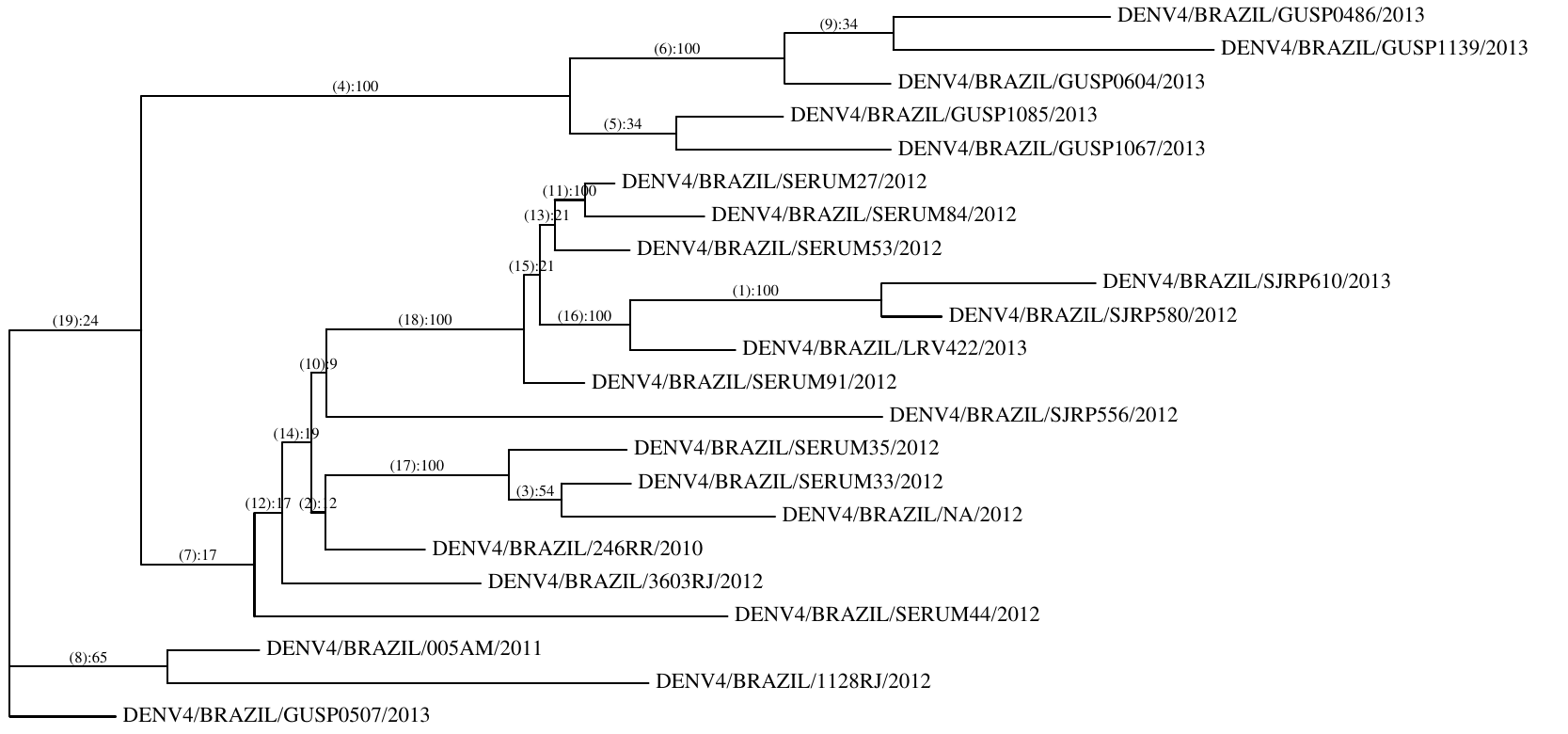}
\caption{Consensus tree resulting from the conventional MCMC Bayesian inference on the Brazil clade from DENV4.
}\label{fig:consensus}
\end{center}
\end{figure}

Figure \ref{fig:numzerobranch} shows the distributions (across bootstrap replicates) of the numbers of zero branches detected by adaptive phylogenetic LASSO as a function of regularization coefficient $\lambda$.
We see lots of detected zero branches, indicating the existence of non-bifurcating topologies for this data set.
Furthermore, we find that even unpenalized ($\lambda = 0$) optimization recovers quite a few zero branches with our more thorough optimization allowing branch lengths to go all the way to zero.

Figure \ref{fig:consensus} shows the consensus tree obtained from the conventional MCMC samples.
Each interior edge has its index number $i$ and its support value (expressed as percentage) $s_i$ right above it: $\{(i):s_i\}$.
Following \citet{Lewis2005-ez}, we re-estimate the support values for all interior edges (splits) on this MCMC consensus tree using the aforementioned methods and summarize the results in Table \ref{tab:support}.
As one of the state-of-the-art approaches for identifying non-bifurcating topologies, rjMCMC is able to detect edges with exactly zero support (edges 2, 10, 13, 15).
Due to their sparsity-encouraging nature, ML+thresholding+bootstrap and ML+adaLASSO+bootstrap can identify these zero edges (compared to standard MCMC and ML+bootstrap) and the detected zero edges are largely consistent with rjMCMC.
Moreover, we also examine the support estimates of all splits observed in the standard MCMC samples, and find that adaptive phylogenetic LASSO tends to provide the closest estimates to rjMCMC (for zero-edge detection) among all the alternatives (Figure \ref{fig:wholesplit}).
Overall, we see that (multistep) adaptive phylogenetic LASSO is able to reveal non-bifurcating structures comparable to rjMCMC Bayesian approach when applied to maximum likelihood tree topologies, and is less likely to misidentify weakly supported edges in contrast to simple thresholding.

\setlength{\tabcolsep}{4pt}
\begin{table}
\ra{1.2}
\tiny
\begin{center}
\begin{tabularx}{1.0\textwidth}{@{}cccccccccccc@{}}
\toprule
\multirow{2}{*}{Edge}     &    ML   &  \multirow{2}{*}{MCMC}  &  rjMCMC  &\phantom{} &  \multicolumn{3}{c}{ML+thresholding+bootstrap} & \phantom{} &  \multicolumn{3}{c}{ML+adaLASSO+bootstrap} \\
\cmidrule{4-4} \cmidrule{6-8}\cmidrule{10-12}
                                      &   bootstrap  &       &  $C=1$    &&  $\kappa = 1e\text{-}06$  &  $\kappa = 5e\text{-}05$  &  $\kappa = 1e\text{-}04$  &&  $\lambda=150$  & $\lambda=300$ &  $\lambda=450$ \\
\midrule
  1 & 100 & 100 & 100 && 100 & 100 & 100 && 100 & 100 & 100 \\
  2 &   7 &  12 &   0 &&   0 &   0 &   0 &&   0 &   0 &   0 \\
  3 &  93 &  54 &   1 &&  82 &  65 &  58 &&  66 &  66 &  66 \\
  4 &  95 & 100 & 100 &&  95 &  95 &  95 &&  95 &  95 &  95 \\
  5 &  38 &  34 &   1 &&   0 &   0 &   0 &&   4 &   4 &   4 \\
  6 &  63 & 100 & 100 &&  61 &  61 &  61 &&  63 &  63 &  63 \\
  7 &  10 &  17 &  20 &&   8 &   8 &   7 &&   6 &   6 &   6 \\
  8 &  52 &  65 &  31 &&  45 &  44 &  44 &&  44 &  44 &  44 \\
  9 &  11 &  34 &   1 &&   0 &   0 &   0 &&   2 &   2 &   2 \\
 10 &  10 &   9 &   0 &&   0 &   0 &   0 &&   0 &   0 &   0 \\
 11 &  61 & 100 &  66 &&  57 &  56 &  22 &&  61 &  61 &  61 \\
 12 &  13 &  17 &   3 &&   8 &   2 &   0 &&   3 &   3 &   3 \\
 13 &   9 &  21 &   0 &&   0 &   0 &   0 &&   0 &   0 &   0 \\
 14 &  23 &  19 &   3 &&   9 &   2 &   0 &&   4 &   3 &   4 \\
 15 &  13 &  21 &   0 &&   0 &   0 &   0 &&   0 &   0 &   0 \\
 16 &  99 & 100 & 100 &&  99 &  99 &  96 &&  99 &  99 &  99 \\
 17 &  94 & 100 & 100 &&  94 &  94 &  94 &&  94 &  94 &  94 \\
 18 &  88 & 100 & 100 &&  88 &  88 &  88 &&  88 &  88 &  88 \\
 19 &   6 &  24 &  24 &&   4 &   4 &   3 &&   3 &   3 &   3 \\
\bottomrule
\end{tabularx}
\caption{Comparison on the support values obtained from different methods on DENV4 Brazil clade data set. All analysis used the Jukes-Cantor model.}\label{tab:support}
\end{center}
\end{table}

\begin{figure}
\begin{center}
\includegraphics[width=\textwidth]{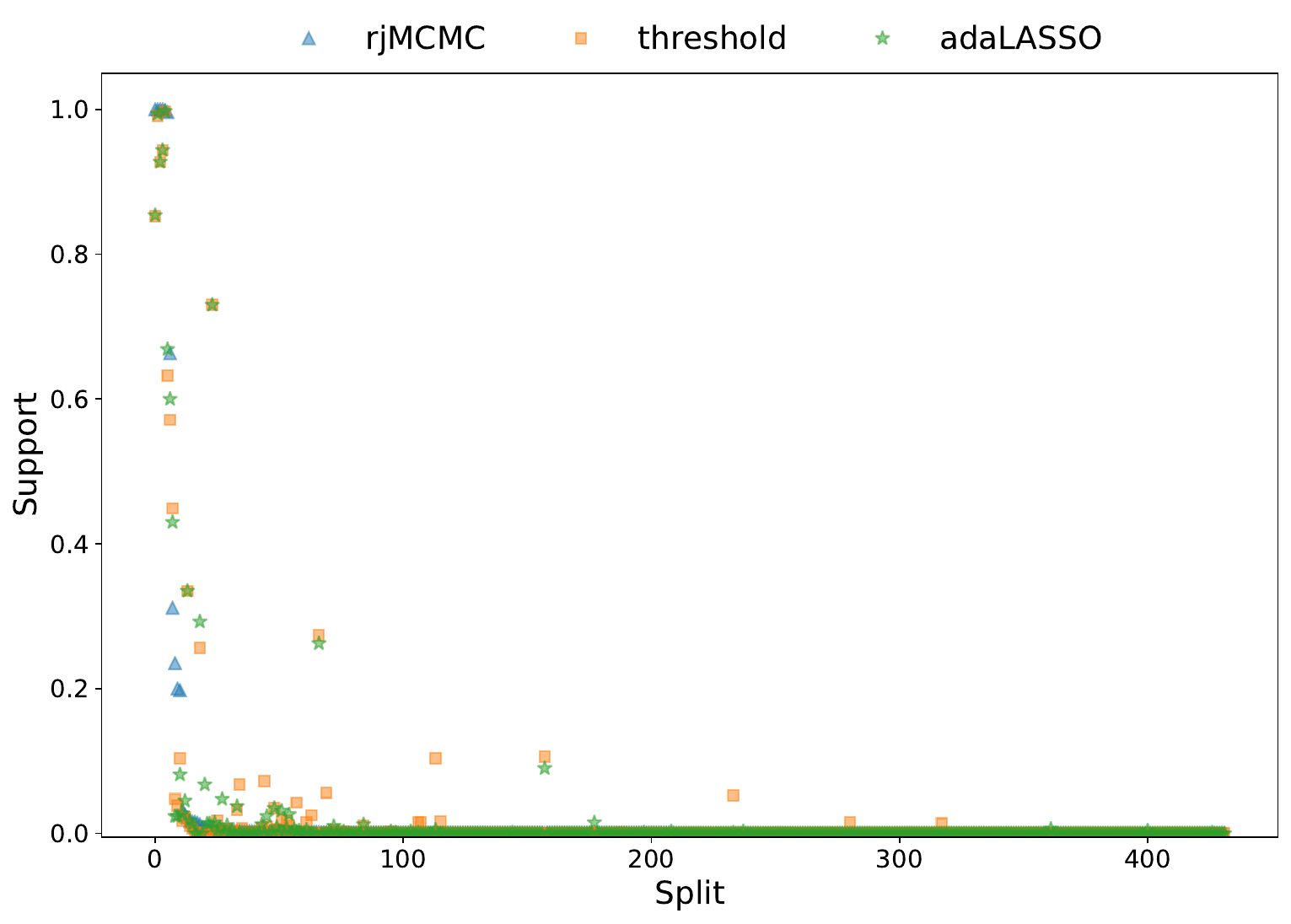}
\caption{A comparison of different methods on the support estimates of all splits observed in the MCMC samples. 
    Splits are sorted by their support under rjMCMC.
}\label{fig:wholesplit}
\end{center}
\end{figure}

\section{Conclusion}
We study $\ell_1$-penalized maximum likelihood approaches for phylogenetic inference, with the goal of recovering non-bifurcating tree topologies.
We prove that these regularized maximum likelihood estimators are asymptotically consistent under mild conditions.
Furthermore, we show that the (multistep) adaptive phylogenetic LASSO is topologically consistent and therefore is able to detect non-bifurcating tree topologies that may contain polytomies and sampled ancestors.
We present an efficient algorithm for solving the corresponding optimization problem, which is inherently more difficult than standard $\ell_1$-penalized problems with regular cost functions.
The algorithm is based on recent developments on proximal gradient descent methods and their various acceleration techniques \citep{Beck2009-ty,donoghue13}.

Our method is closest in spirit to rjMCMC, which is a rigorous means of inferring the posterior distribution of potentially multifurcating topologies, and thus we have limited our performance comparisons to this method.
However, there is precedent for using a hypothesis-testing framework to test if parts of the tree should be multifurcating.
\citet{Jackman1999-lv} use a parametric bootstrap test and a randomization test built on maximum parsimony to evaluate the strength of support for a rapid-evolution scenario.
\citet{Walsh1999-ow} determine the number of base pairs required to resolve a given phylogenetic divergence with a prior hypothesis about the amount of time during which this divergence could have occurred.
Also, one might wish to use the SOWH test \citep{sowhOriginal,Goldman2000-qw,Susko2014-no} to find an appropriate threshold by increasing the threshold progressively until the SOWH test detects a significant difference between the ML tree and the thresholded tree.
However, we have shown that thresholding is less effective than our method across a range of threshold values (Figure~\ref{fig:consistency}).
Furthermore, existing software implementations of SOWH cannot test multifurcating topologies because inference under a multifurcation constraint is not supported in current tree inference packages.

We have done a wide range of experiments to demonstrate the efficiency and effectiveness of our method.
We show in a synthetic study that although the (non-adaptive) phylogenetic LASSO has difficulty finding zero-length branches, the adaptive phylogenetic LASSO provides significant improvement on sparsity recovery which validates its theoretical properties.
Although we assume a fixed tree topology for deriving statistical consistency, our method can be used to discover non-bifurcating tree topologies in real data problems when combined with traditional maximum likelihood phylogenetic inference methods.
Our experiments have shown that the adaptive phylogenetic LASSO performs comparably with other MCMC based sparsity encouraging procedures (rjMCMC) in terms of sparsity recovery while being computationally more efficient as an optimization approach.
We also compare our method to a heuristic simple thresholding approach and find that regularization permits more consistent performance.
Finally, we show that compared to rjMCMC, the adaptive phylogenetic LASSO is more likely to detect short branches while identifying zero branches with high accuracy.
It is worth mentioning that while lots of sparsity can be detected by maximizing the likelihood with non-negative constraints, the adaptive phylogenetic LASSO can be advantageous when there exist challenging zero-branches in the tree topologies with high likelihoods.
Our results offer new insights into non-bifurcating phylogenetic inference methods and support the use of $\ell_1$ penalty in statistical modeling with more general settings.

We leave some questions to future work.
For the theory, the rate with which we can allow the number of leaves to go to infinity in terms of the sequence length is not yet known.
We have also not explored the extent to which the optimal penalized tree is a contraction of the ML unpenalized tree.
Also, although we have laid the algorithmic foundation for efficient penalized inference, there is further work to be done to make a streamlined implementation that is integrated with existing phylogenetic inference packages.

\section{Acknowledgements}
The authors would like to thank Vladimir Minin and Noah Simon for helpful discussions, and Sidney Bell for helping with the Dengue sequence data.
This work supported by National Institutes of Health grants R01-GM113246, R01-AI120961, U19-AI117891, and U54-GM111274 as well as National Science Foundation grant CISE-1564137.
The research of Frederick Matsen was supported in part by a Faculty Scholar grant from the Howard Hughes Medical Institute and the Simons Foundation.

\bibliographystyle{chicago}
\bibliography{l1-reg}

\beginsupplement

\section{Appendix}
\subsection{Lemmas}

Here we perform further theoretical development to establish the main theorems.
We remind the reader that we will continue to assume Assumptions~\ref{assump:iden} and~\ref{assump:leafedges}.
The following lemma allows gives a lower bound on the fraction of sites with state assignments in a given set.
It will prove useful to obtain an upper bound on the likelihood.

\begin{Lemma}
For any non-empty set $A$ of single-site state assignments to the leaves, we define
\[
k_A = |\{i: \mb{Y}^i \in A\}|
\]
There exist $c_3>0, c_4(\delta, n)>0$ such that for all $k$, we have
\[
\frac{k_A}{k} \ge c_3 - \frac{c_4}{\sqrt{k}} \h \forall A \ne \emptyset
\]
with probability at least $1-\delta$.
\label{lem:frequency}
\end{Lemma}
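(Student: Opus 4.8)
The plan is to exploit the finiteness of the space of state assignments: a Hoeffding bound for each fixed non-empty $A$, then a union bound over the finitely many $A$, with the only model-specific input being a positivity fact for $P_{q^*}$. Concretely, with $r=|\Omega|$ and $N$ leaves there are $r^{N}$ state assignments and hence only $2^{r^{N}}-1$ non-empty sets $A$, a count independent of $k$. The key claim I would establish first is that $P_{q^*}(\psi)>0$ for \emph{every} state assignment $\psi$. To see this, contract every zero-length edge of $q^*$, partitioning the vertex set into blocks (the connected components of the subgraph of zero-length edges). An internal labelling $a$ makes the term $\eta(a_\rho)\prod_{(u,v)\in E}P_{a_ua_v}(q^*_{uv})$ strictly positive exactly when $a$ is constant on each block, using that $\eta$ is everywhere positive, that $P_{xy}(t)>0$ for $t>0$ by irreducibility of the substitution model, and that $P(0)=I$. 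By Assumption~\ref{assump:leafedges} no two leaves share a block, so any prescribed leaf labelling $\psi$ extends to a block-constant $a$; hence $P_{q^*}(\psi)>0$. This lets me set $c_3:=\min_{\psi}P_{q^*}(\psi)>0$ (a constant depending only on $q^*$, not on $\delta$ or $k$) and conclude $p_A:=P_{q^*}(A)\ge c_3$ for every non-empty $A$.

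Next, fix $k$ and a non-empty $A$. Since the sites are i.i.d., $k_A=\sum_{i=1}^{k}\mathbf{1}[\mathbf{Y}^i\in A]$ is a sum of $k$ i.i.d.\ Bernoulli$(p_A)$ variables, so Hoeffding's inequality gives $\mathbb{P}(k_A/k< p_A-t)\le e^{-2kt^2}$ for every $t>0$; choosing $t=c_4/\sqrt{k}$ converts this into the $k$-free bound $e^{-2c_4^{2}}$. Taking a union bound over the $2^{r^{N}}-1$ sets $A$ and choosing $c_4$ (depending only on $\delta$ and $N$) so that $(2^{r^{N}}-1)e^{-2c_4^{2}}\le\delta$ — for instance $c_4=\sqrt{\tfrac12\log\bigl((2^{r^{N}}-1)/\delta\bigr)}$ — yields that, with probability at least $1-\delta$, simultaneously for all non-empty $A$,
\[
\frac{k_A}{k}\ \ge\ p_A-\frac{c_4}{\sqrt{k}}\ \ge\ c_3-\frac{c_4}{\sqrt{k}},
\]
which is the asserted inequality.

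Two remarks. First, ``for all $k$'' in the statement is to be read with the quantifier outside the probability (the bound is claimed for each fixed $k$, with a single pair $c_3,c_4$ that do not depend on $k$); a version uniform over $k$ at rate $1/\sqrt{k}$ would be false by the law of the iterated logarithm, so nothing stronger is intended. Second, I expect the positivity step — deducing $P_{q^*}(\psi)>0$ for all $\psi$ from Assumption~\ref{assump:leafedges} via the edge-contraction argument — to be the only real obstacle; the concentration-plus-union-bound part is entirely routine.
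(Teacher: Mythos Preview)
Your argument is correct and follows the same template as the paper: establish $P_{q^*}(\psi)>0$ for every leaf assignment $\psi$ (so $c_3:=\min_\psi P_{q^*}(\psi)>0$), apply Hoeffding, and take a union bound over a finite $k$-independent family. The one substantive difference is \emph{which} family you union-bound over. You control $k_A/k$ directly for all $2^{r^N}-1$ non-empty sets $A$, whereas the paper only controls the $r^N$ singleton frequencies $k_{\{\psi\}}/k$ and then (implicitly) uses $k_A\ge k_{\{\psi\}}$ for any $\psi\in A$ to pass to general $A$. The paper's route yields a much smaller $c_4$---roughly $\sqrt{\tfrac12\log(r^N/\delta)}$ versus your $\sqrt{\tfrac12\log(2^{r^N}/\delta)}$---at the cost of one extra monotonicity step; since the constant is irrelevant for how the lemma is used, either is fine. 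Your edge-contraction justification of $P_{q^*}(\psi)>0$ from Assumption~\ref{assump:leafedges} is also more explicit than the paper's, which simply asserts the positivity as a consequence of positive pairwise leaf distances.
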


\begin{proof}[Proof of Lemma $\ref{lem:frequency}$]
Since the tree distance between any pairs of leaves of the true tree is strictly positive, there exists $c_3>0$ such that $P_{q^*}(\psi) \ge c_3$ for all state assignments $\psi$.

Using Hoeffding's inequality, for any state assignment $\psi$, we have
\[
\mathbb{P}\left[\left|\frac{k_{\{\psi\}}}{k} - P_{q^*}(\psi)\right| \ge t\right] \le 2 e^{-2kt^2}.
\]
We deduce that
\[
\mathbb{P}\left[\exists \psi \text{ such that} \left|\frac{k_{\{\psi\}}}{k} - P_{q^*}(\psi)\right| \ge t\right] \le 2 e^{-2kt^2} \cdot 4^N.
\]
For any given $\delta>0$, by choosing
\[
c_4(\delta, N) =\sqrt{\frac{\log(1/\delta) + (2N+1) \log 2}{2}}
\]
and $t = c_4(\delta, N) / \sqrt{k}$ we have
\[
\left|\frac{k_{\{\psi\}}}{k} - P_{q^*}(\psi)\right| \le   \frac{c_4(\delta, N)}{\sqrt{k}} \h \forall \psi
\]
with probability at least $1-\delta$.
This proves the Lemma.
\end{proof}

\begin{Lemma}[Generalization bound]
There exists a constant $C(\delta, n, Q, \eta, g_0, \mu) > 0$ such that for any $k \ge 3$, $\delta >0$, we have:
\[
\left |\frac{1}{k}\ell_k(q) - \phi(q)\right | \le C  \left(\frac{\log k}{k}\right)^{1/2} \h \forall q \in \mathcal{T}(\mu)
\]
with probability greater than $1-\delta$.
\label{lem:unifboundg0}
\end{Lemma}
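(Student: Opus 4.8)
The plan is to establish this as a uniform law of large numbers over $\mathcal{T}(\mu)$, via a standard pointwise-concentration-plus-covering argument, using the local Lipschitz estimates of Lemma~\ref{lem:locallips} to transfer control from a finite net to all of $\mathcal{T}(\mu)$.

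First I would fix a single $q \in \mathcal{T}(\mu)$ and note that $\tfrac1k \ell_k(q) = \tfrac1k \sum_{i=1}^k \log P_q(\mb{Y}^{(i)})$ is an average of $k$ i.i.d.\ random variables, each lying in $[-\mu,0]$ by the definition of $\mathcal{T}(\mu)$ and having mean $\phi(q)$. Boundedness is exactly what restricting to $\mathcal{T}(\mu)$ buys us, and it lets me apply Hoeffding's inequality (as already used in the proof of Lemma~\ref{lem:frequency}), or the sharper variance-based Chernoff bound of Lemma~\ref{lem:Chernoff} with $b=\mu$, to obtain for each fixed $q$ a bound of the form
\[
\mathbb{P}\!\left[\left|\tfrac1k\ell_k(q) - \phi(q)\right| \ge t\right] \le 2\exp\!\left(-\frac{k t^2}{C_\mu}\right),
\]
where $C_\mu$ depends only on $\mu$.

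Next I would discretize $\mathcal{T}(\mu)$. Since $\mathcal{T}(\mu) \subset [0,g_0]^{2N-3}$ is bounded, for any $\epsilon>0$ it admits a finite $\epsilon$-net $\mathcal{N}_\epsilon \subset \mathcal{T}(\mu)$ of cardinality at most $(1 + c_0\, g_0 \sqrt{2N-3}/\epsilon)^{2N-3}$ for an absolute constant $c_0$. I would take $\epsilon = 1/k$ and combine the pointwise bound with a union bound over $\mathcal{N}_{1/k}$, with per-point failure probability $\delta/|\mathcal{N}_{1/k}|$; this yields, with probability at least $1-\delta$, the estimate $|\tfrac1k\ell_k(q') - \phi(q')| \le \sqrt{C_\mu(\log|\mathcal{N}_{1/k}| + \log(2/\delta))/k}$ simultaneously for all $q' \in \mathcal{N}_{1/k}$. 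The crucial bookkeeping point is that, with $N$ fixed, $\log|\mathcal{N}_{1/k}|$ is of order $(2N-3)\log(g_0 k) = O(\log k)$, so the union bound inflates the rate only by the $\sqrt{\log k}$ factor appearing in the statement, not by a power of $k$.

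Finally, for an arbitrary $q \in \mathcal{T}(\mu)$ I would pick the nearest net point $q' \in \mathcal{N}_{1/k}$, so $\|q-q'\|_2 \le 1/k$, and invoke Lemma~\ref{lem:locallips} (which holds for any pair in $\mathcal{T}(\mu)$) to bound $|\tfrac1k\ell_k(q) - \tfrac1k\ell_k(q')| \le c_2/k$ and $|\phi(q) - \phi(q')| \le c_2/k$. The triangle inequality then gives $|\tfrac1k\ell_k(q) - \phi(q)| \le 2c_2/k + |\tfrac1k\ell_k(q') - \phi(q')|$, and since $2c_2/k$ is dominated by $\sqrt{\log k/k}$ for $k \ge 3$, the claim follows after absorbing all constants into a single $C(\delta, N, Q, \eta, g_0, \mu)$ (the dependence on $Q, \eta, g_0$ entering through the constant $c_2$ of Lemma~\ref{lem:locallips} and through the set $\mathcal{T}(\mu)$ itself). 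I do not expect a deep obstacle here: the argument is routine, and the only real point requiring care is balancing the net resolution against the union-bound cost — keeping $\epsilon$ polynomially small in $k$ (here $\epsilon = 1/k$) so that $\log|\mathcal{N}_\epsilon| = O(\log k)$ while the discretization error $2c_2\epsilon$ stays negligible — together with the routine but essential check that every constant is independent of $k$, which is guaranteed by working entirely inside $\mathcal{T}(\mu)$.
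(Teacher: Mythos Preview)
Your proposal is correct and follows essentially the same route as the paper: pointwise concentration on $\mathcal{T}(\mu)$ (the paper uses its Chernoff bound, Lemma~\ref{lem:Chernoff}, with the variance bound $\Var[\tfrac1k\ell_k(q)]\le\mu^2/k$), a finite covering of $\mathcal{T}(\mu)$, a union bound, and then the Lipschitz estimates of Lemma~\ref{lem:locallips} to pass from the net to all of $\mathcal{T}(\mu)$. The only cosmetic difference is the net resolution---the paper takes $\epsilon = y/(4c_2)$ with $y=\sqrt{C\log k/k}$ rather than your $\epsilon=1/k$---but both choices keep $\log|\mathcal{N}_\epsilon|=O(\log k)$ and the discretization error dominated by $\sqrt{\log k/k}$, so the arguments are interchangeable.
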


\begin{proof}
Note that for $q \in \mathcal{T}(\mu)$, $0\ge \log P_q(\psi) \ge -\mu$ for all state assignments $\psi$.
By Hoeffding's inequality,
\[
\mathbb{P}\left[\left |\frac{1}{k}\ell_k(q) - \phi(q)\right |  \ge y/2 \right] \le 2 \exp\left(-\frac{y^2 k}{2\mu^2}\right).
\]
For each $q \in \mathcal{T}(\mu)$, $k > 0$, and $y > 0$, define the events
\begin{equation*}
A({q, k, y}) = \left\{\left |\frac{1}{k}\ell_k(q) - \phi(q)\right |> y/2  \right \}
\end{equation*}
and
\[
B({q, k, y}) = \left\{ \exists q' \in \mathcal{T}(\mu) ~\text{such that}~ \|q'-q\|_2 \le \frac{y}{4c_2}~~\text{and}~~ \left |\frac{1}{k}\ell_k(q) - \phi(q)\right |> y  \right  \}
\]
then $B({q, k, y}) \subset A({q, k, y})$ by the triangle inequality, \eqref{eqn:union1}, and \eqref{eqn:union2}.
Let
\[
y = \sqrt{\frac{C \log k}{k}}
\]
Since $\mathcal{T}(\mu)$ is a subset of $\mathbb{R}^{2N-3}$, there exist $C_{2N-3} \ge 1$ and a finite set $\mathcal{H} \subset \mathcal{T}(\mu)$ such that
\[
\mathcal{T}(\mu) \subset \bigcup_{q \in \mathcal{H}}{V(q, \epsilon)} \h \text{and}\h  |\mathcal{H}| \le C_{2N-3} /\epsilon^{2N-3}
\]
where $\epsilon=y/(4c_2)$, $V(q, \epsilon)$ denotes the open ball centered at $q$ with radius $\epsilon$, and $|\mathcal{H}|$ denotes the cardinality of $\mathcal{H}$.
By a simple union bound, we have
\[
\mathbb{P}\left[ \exists q \in \mathcal{H}: \left |\frac{1}{k}\ell_k(q) - \phi(q)\right |> y/2\right]  \le 2 \exp\left(-\frac{y^2  k }{2\mu^2}\right) C_{2N-3} \Big/ \epsilon^{2N-3}.
\]
Using the fact that $B({q, k, y}) \subset A({q, k, y})$ for all $q \in \mathcal{H}$, we deduce
\[
\mathbb{P}\left[ \exists q  \in \mathcal{T}(\mu) : \left |\frac{1}{k}\ell_k(q) - \phi(q)\right |> y\right]  \le 2 \exp\left(-\frac{y^2  k }{2\mu^2}\right) C_{2N-3} \Big/ \epsilon^{2N-3}.
\]
To complete the proof, we need to chose $C$ in such a way that
\[
C_{2N-3} \left(\frac{4\sqrt{k}g_0 c_2}{\sqrt{C\log k}}\right)^{2N-3}\times 2\exp\left(-\frac{ C \log k}{2 \mu^2}\right) ~ \le~ \delta.
\]
Since $k \ge 3$ and $C \ge 1$, the inequality is valid if
\[
C_{2N-3} \left(4g_0 c_2\right)^{2N-3}\times 2k^{\frac{2N-3}{2}- \frac{C}{2 \mu^2}} ~ \le~ \delta
\]
and can be obtained if
\[
\frac{2N-3}{2}- \frac{C}{2 \mu^2} <0, \h \text{and} \h C_{2N-3} \left(4g_0 c_2\right)^{2N-3}\times 2\cdot 3^{\frac{2N-3}{2}- \frac{C}{2 \mu^2}} ~ \le~ \delta.
\]
In other words, we need to choose $C$ such that
\[
C \ge  2 \mu^2 \left( \log (1/\delta) + \log C_{2N-3}  + (2N-3) \log(4\sqrt{3}g_0 c_2)\right).
\]
This completes the proof.
\end{proof}

\subsection{Proofs of main theorems}

\begin{proof}[Proof of Theorem $\ref{theo:base}$]
By definition of the estimator, we have
\[
- \frac{1}{k}\ell_k( q^{k, R_k}) +  \lambda_k  R_k(q^{k, R_k}) \le - \frac{1}{k}\ell_k(q^*) +  \lambda_k R_k(q^*)
\]
which is equivalent to $U_k( q^{k, R_k}) \le \lambda_k R_k(q^*) - \lambda_k R_k(q^{k, R_k})$.

We have $q^{k, R_k} \in \mathcal{T}(\mu)$ with probability at least $1-2\delta$ from Lemma $\ref{lem:Tmu}$ for $k$ sufficiently large.
Therefore by Lemma~\ref{lem:unifbound-onesided},
\[
\mathbb{E}[U_k(q^{k, R_k})] \le \frac{1}{k} \h \text{or} \h \frac{1}{2}\mathbb{E}[U_k(q^{k, R_k})]  \le U_k(q^{k, R_k}) + \frac{C\log k}{k^{2/\beta}},
\]
with probability at least $1-3\delta$.
The second case implies that
\begin{align*}
\frac{c_1^{\beta}}{2} \|q^{k, R_k}-q^*\|_2^{\beta} & \le \frac{1}{2} \mathbb{E}[U_k(q^{k, R_k})] \\
& \le \lambda_k R_k(q^*) - \lambda_k  R_k(q^{k, R_k}) + \frac{C \log k}{k^{2/\beta}}
\le \frac{C \log k}{k^{2/\beta}} + \lambda_k R_k(q^*)
\end{align*}
while for the first case, we have
\[
\frac{c_1^{\beta}}{2}  \|q^{k, R_k}-q^*\|_2^{\beta} \le \mathbb{E}[U_k(q^{k, R_k})] \le \frac{1}{k} \le  \frac{C \log k}{k^{2/\beta}} + \lambda_k R_k(q^*)
\]
since $\beta \ge 2$ and $C \ge 1$.
This demonstrates \eqref{eqn:baseOne}.

If the additional assumption \eqref{eqn:boost} is satisfied, we also have
\[
\|q^{k, R_k}-q^*\|_2^{\beta} \le  \frac{C' \log k}{k^{2/\beta}} + C_3 \lambda_k\|q^{k, R_k}-q^*\|_2.
\]
Using Lemma $\ref{lem:boost}$ with
\[
\nu = 1/\beta, \h x=\|q^{k, R_k}-q^*\|_2^{\beta}, \h a =C_3 \lambda_k \h  \text{and} \h b= \frac{C' \log k}{k^{2/\beta}},
\]
we obtain
\[
x \le C_1 a^{1/(1-\nu)} + C_2 b,
\]
which implies
\[
\|q^{k, R_k}-q^*\|_2^{\beta} \le C'(\delta, C_3) \left (\frac{\log k}{ k^{2/\beta}}+\lambda_k^{\beta/(\beta-1)}  \right).
\]
This completes the proof.
\end{proof}

\begin{proof}[Proof of Theorem $\ref{theo:adapt}$]
We first note that by Theorem $\ref{theo:base}$, the estimator $q^{k, R_k}$  is consistent, which guarantees $\lim_{k \to \infty}{q^{k, R_k}} = q^*$ almost surely.
Thus
\[
\lim_{k \to \infty}{S_k(q^*)}= \lim_{k \to \infty} {\sum_{q^*_i \ne 0}{ (q^*_i)^{1-\gamma}}} < \infty.
\]
The hypotheses of this theorem imply that $\lambda_k \to 0$ and thus by Theorem $\ref{theo:base}$, we also deduce that $q^{k, S_k}$ is also a consistent estimator.
This validates (i).

To establish topological consistency under (ii), we divide the proof into two steps.

As the first step, we prove that  $\lim_k{\mathbb{P}(\mathcal{A}(q^*) \subset \mathcal{A}(q^{k, S_k})) }=1.$
If $q^*_{i_0} =0$ for some $i_0$, then from Theorem $\ref{theo:base}$, we have
\[
q^{k, R_k}_{i_0} \le C'(\delta) \left (\frac{\log k}{k^{2/\beta}}+\lambda_k^{\beta/(\beta-1)} \right)^{1/\beta} \h \forall k
\]
with probability at least $1-\delta$.
By the definition of $w_{k,i_0}$, we have
\begin{eqnarray*}
\lim_{k \to \infty}{~\alpha_k w_{k,i_0} } &\ge& \lim_{k \to \infty}{~\alpha_k (C'(\delta))^{-\gamma}  \left (\frac{\log k}{k^{2/\beta}}+\lambda_k^{\beta/(\beta-1)} \right)^{-\gamma/\beta}} \\
&=& (C'(\delta))^{-\gamma} ~ \lim_{k \to \infty}{~  \left (\frac{\log k}{\alpha_k^{\beta/\gamma}k^{2/\beta}}+\alpha_k^{-\beta/\gamma} \lambda_k^{\beta/(\beta-1)}   \right)^{-\gamma/\beta}} \\
\end{eqnarray*}
which goes to infinity since by the hypotheses of the Theorem
\[
\alpha_k^{\beta/\gamma} \succ  \frac{\log k}{k^{2/\beta}} \h \text{and} \h \alpha_k^{\beta / \gamma} \succ \lambda_k^{\beta/(\beta-1)}.
\]
Since $\delta>0$ is arbitrary, we deduce that $\lim_{k \to \infty}{~\alpha_k w_{k,i_0} } = \infty$ with probability one.

Now for any branch length vector $q$, we define $f(q)$ as the vector obtained from $q$ by setting the $i_0$ component of $q$ to 0.
By definition of the estimator $q^{k, S_k}$, we have
\[
- \frac{1}{k}\ell_k( q^{k, S_k}) +  \alpha_k  \sum_{i}{w_{k,i} \, q^{k, S_k}_i} \le - \frac{1}{k}\ell_k(f(q^{k, S_k})) +  \alpha_k  \sum_{i}{w_{k,i} [f(q^{k, S_k})]_i}
\]
or equivalently
\[
 \alpha_k w_{k,i_0}q^{k, S_k}_{i_0} \le  \frac{1}{k}\ell_k(q^{k, S_k}) - \frac{1}{k}\ell_k(f(q^{k, S_k})).
\]
Lemma $\ref{lem:star}$ establishes that there exist, $\mu^*>0$ and a neighborhood $V$ of $q^*$ in $\mathcal{T}$ such that $V \subset \mathcal{T}(\mu^*)$.
Since the estimator $q^{k, S_k}$ is consistent and $q^*_{i_0} =0$, we can assume that both $q^{k, S_k}$ and $f(q^{k, S_k})$ belong to $\mathcal{T}(\mu^*)$ with $k$ large enough.
Thus, from Lemma $\ref{lem:locallips}$, we have
\[
\left|\frac{1}{k}\ell_k(q^{k, S_k}) - \frac{1}{k}\ell_k(f(q^{k, S_k})) \right| \le c_2 \|q^{k, S_k} - f(q^{k, S_k})\|_2 = c_2 q^{k, S_k}_{i_0}.
\]
If $q^{k, S_k}_{i_0} >0$, we deduce that $\alpha_k w_{k,i_0}$ is bounded from above by $c_2$, which is a contradiction.
This implies that $q^{k, S_k}_{i_0} =0$, and we conclude that
\[
\lim_k{\mathbb{P}(\mathcal{A}(q^*) \subset \mathcal{A}(q^{k, S_k})) }=1.
\]

As the second step, we prove that  $\lim_k{\mathbb{P}(\mathcal{A}(q^{k, S_k}) \subset \mathcal{A}(q^*)) }=1$.
Indeed, the consistency of $q^{k, S_k}$ guarantees that
\[
\lim_{k \to \infty}{q^{k, S_k}} = q^*
\]
almost surely.
Therefore, if $q^*_{i_0} > 0$ for some $i_0$, then $q^{k, S_k}_{i_0} > 0$ for $k$ large enough.
In other words, we have $\lim_k{\mathbb{P}(\mathcal{A}(q^{k, S_k}) \subset \mathcal{A}(q^*)) }=1$.

Combing step 1 and step 2, we deduce that the adaptive estimator is topologically consistent.
\end{proof}

\begin{proof}[Proof of Lemma $\ref{lem:lips}$]
Since $q^{k, S_k}$ is topologically consistent and $q^{k, R_k}$ is consistent, we have
\[
\mathcal{A}(q^{k, S_k}) =\mathcal{A}(q^*) \h \text{and} \h q^{k, R_k}_i \ge q_i^*/2 \h  \forall i \not \in \mathcal{A}(q^*)
\]
with probability one for sufficiently large $k$.
Defining $b = \min_{i \not \in \mathcal{A}(q^*)}{q_i^*}$, we have
\[
|S_k(q^{k, S_k}) - S_k(q^*) |= \left|\sum_{q_i^* \ne 0}{w_{k,i} (q^{k, S_k}_{i}-q_i^*)}\right| \le \sqrt{2N-3}\,(b/2)^{-\gamma}\,\|q^{k, S_k}-q^*\|_2
\]
via Cauchy-Schwarz which completes the proof.
\end{proof}

\begin{proof}[Proof of Theorem $\ref{theo:convergence}$]
We note that for the LASSO estimator, $R_k^{[0]}(q^*) = \sum_{i}{q^*_i}$ is uniformly bounded from above.
Hence, the LASSO estimator is consistent.
We can then use this as the base case to prove, by induction, that adaptive LASSO and the multiple-step LASSO are consistent via Theorem $\ref{theo:adapt}$ (part (i)).
Moreover, $R_k^{[0]}$ is uniformly Lipschitz and satisfies \eqref{eqn:boost}, so using part (ii) of Theorem $\ref{theo:adapt}$, we deduce that adaptive LASSO (i.e., the estimator with penalty function $R_k^{[1]}$) is topologically consistent.

We will prove that the multiple-step LASSOs are topologically consistent by induction.
Assume that $q^{k, R^{[m]}_k}$ is topologically consistent, and that $q^{k, R^{[m-1]}_k}$ is consistent.
From Lemma~$\ref{lem:lips}$, we deduce that there exists $C>0$ independent of $k$ such that
\begin{equation}
\left |R^{[m]}_k\left(q^{k, R^{[m]}_k} \right) - R^{[m]}_k(q^*) \right| \le C \left\|q^{k, R_k^{[m]}}-q^* \right\|_2 \h \forall k.
\label{eqn:induction}
\end{equation}
This enables us to use part (ii) of Theorem $\ref{theo:adapt}$ to conclude that $q^{k, R^{[m+1]}_k}$ is topologically consistent.
This inductive argument proves part (i) of the Theorem.
We can now use \eqref{eqn:induction} and Theorem~$\ref{theo:base}$ to derive the convergence rate of the estimators.
\end{proof}

\subsection{Technical proofs}

\REexist*
\begin{proof}[Proof of Lemma $\ref{lem:exist}$]
Let $\{q^n\}$ be a sequence such that
\[
Z_{\lambda, \mb{Y}^k}(q^n) \to \nu := \inf_{q} Z_{\lambda, \mb{Y}^k} (q).
\]
We note that since $\ell_k(q^*) \ne -\infty$ and $R_k$ is continuous on the compact set $\mathcal{T}$, $\nu$ is finite.
Since $\mathcal{T}$ is compact, we deduce that a subsequence $\{q^m\}$ converges to some $q^0\in \mathcal{T}$.
Since the log likelihood (defined on $\mathcal{T}$ with values in the extended real line $[-\infty, 0]$) and the penalty $R_k$ are continuous, we deduce that $q^0$ is a minimizer of $Z_{\lambda, \mb{Y}^k}$.
\end{proof}

\RElocallips*
\begin{proof}[Proof of Lemma $\ref{lem:locallips}$]
Using the same arguments as in the proof of Lemma~4.2 of \citet{dinh2018consistency}, we have
\[
\left| \frac{\partial P_q(\psi)}{ \partial q_i} \right| \le \varsigma 4^{n}
\]
for any state assignment $\psi$ where $\varsigma$ is the element of largest magnitude in the rate matrix $Q$.
By the Mean Value Theorem, we have
\begin{equation*}
|\log P_{q}(\psi) - \log P_{q'}(\psi)| \le c_2 \sqrt{2N-3}\|q-q'\|_2 \h \forall q,q', \psi
\end{equation*}
where $c_2 := \varsigma 4^n/e^{-\mu}$, and $\|\cdot\|_2$ is the $\ell_2$-distance in $\mbb{R}^{2N-3}$.
This implies both \eqref{eqn:union1} and \eqref{eqn:union2}.
\end{proof}

\REunifboundOnesided*
\begin{proof}[Proof of Lemma $\ref{lem:unifbound-onesided}$]
The difference of average likelihoods $U_{k}(q)$ is bounded by Lemma~\ref{lem:locallips} and the boundedness assumption on $\mathcal{T}$, thus by Hoeffding's inequality
\[
\mathbb{P}\left[U_{k}(q) - \mathbb{E}\left [U_{k}(q) \right]    \le -y \right] \le \exp\left(-\frac{2y^2 k}{c_2^2\|q-q^*\|^2}\right).
\]
By choosing $y =  \frac{1}{2} \mathbb{E}\left [U_{k}(q)\right] + t/2$, we have $y^2 \ge t\mathbb{E}\left [U_{k}(q)\right]$.
For any $q \in G_k$, we deduce using \eqref{eq:varExp} (and the fact that $\beta \ge 2$) that
\[
\mathbb{P}\left[ U_{k}(q)  \le \frac{1}{2}  \mathbb{E} \left [U_{k}(q) \right] -t/2 \right]
\le \exp\left(-\frac{2c_1^2 t k \mathbb{E}[U_{k}(q)]}{c_2^2 \mathbb{E}[U_{k}(q)]^{2/\beta}}\right)
\le \exp\left(-\frac{2c_1^2 t  k^{2/\beta} }{c_2^2}\right).
\]
For each $q \in G_k$, define the events
\begin{equation*}
A({q, k, t}) = \left\{U_{k}(q) -\frac{1}{2} \mathbb{E}\left [U_{k}(q)\right] \le - t/2  \right \}
\end{equation*}
and
\[
B({q, k, t}) = \left\{ \exists q' \in G_k ~\text{such that}~ \|q'-q\|_2 \le \frac{t}{4c_2}~~\text{and}~~ U_{k}(q') - \frac{1}{2} \mathbb{E}\left [U_{k}(q')\right] \le -t   \right  \}
\]
then $B({q, k, t}) \subset A({q, k, t})$ by the triangle inequality, \eqref{eqn:union1}, and \eqref{eqn:union2}.
Let
\[
t = \frac{C\log k}{k^{2/\beta}}.
\]
To obtain a union bound and complete the proof, we need to chose $C$ in such a way that
\[
C_{2N-3} \left(\frac{4k^{2/\beta}g_0 c_2}{C \log k}\right)^{2N-3}\times 2\exp\left(-\frac{2 c_1^2 C \log k}{c_2^2 }\right) ~ \le~ \delta
\]
where $C_{2N-3}$ is defined as in the proof of Lemma $\ref{lem:unifboundg0}$.
This can be done by choosing
\[
C \ge\frac{ 4 \beta c_2^2 }{9c_1^2}\left( \log (1/\delta) + \log C_{2N-3}  + (2N-3) \log(4 \cdot 3^{2/\beta}g_0 c_2)\right).
\]
\end{proof}

\REstar*
\begin{proof}[Proof of Lemma $\ref{lem:star}$]
Let
\[
\mu^* = - 2 \min_{\psi}{\log P_{q^*}(\psi)}
\]
then we have $\log P_{q^*}(\psi) > - \mu^*$ for all state assignments $\psi$.

For a fixed value of $\psi$, $\log P_{q}(\psi)$ is a continuous function of $q$ around $q^*$.
Hence, there exists an neighborhood $V_{\psi}$ of $q^*$ such that $V_{\psi}$ is open in $\mathcal{T}$ and $\log P_{q}(\psi) > - \mu^*$.
Let $V= \cap_{\psi}{V_{\psi}}$.
Because the set of all possible labels $\psi$ of the leaves is finite, $V$ is open in $\mathcal{T}$ and
\[
\log P_{q}(\psi) > - \mu^* \h \forall \psi, \forall q \in V.
\]
In other words, we have $V \subset \mathcal{T}(\mu^*)$.
\end{proof}

\RETmu*
\begin{proof}[Proof of Lemma $\ref{lem:Tmu}$]
We first assume that $\mu > \mu^*$, where $\mu^*$ is defined in Lemma~$\ref{lem:star}$.
Thus, we have $q^* \in \mathcal{T}(\mu^*) \subset \mathcal{T}(\mu)$.
By definition, we have
\[
- \frac{1}{k}\ell_k( q^{k, R_k}) +  \lambda_k  R_k(q^{k, R_k}) \le - \frac{1}{k}\ell_k(q^*) +  \lambda_k R_k(q^*)
\]
which implies via Lemma $\ref{lem:unifboundg0}$ that
\begin{equation}
\phi(q^*)  - C(\delta) \frac{\log k}{\sqrt{k}}+  \lambda_k  R_k(q^{k, R_k})  -  \lambda_k R_k(q^*)\le  \frac{1}{k}\ell_k( q^{k, R_k})
\label{eq:boundphi}
\end{equation}
with probability at least $1-\delta$.

Let $c_3$ and $c_4(\delta,N)$ be as in Lemma~$\ref{lem:frequency}$, and assume that $k$ is large enough such that
\begin{equation}
c_3 - c_4(\delta,N) \frac{\log k}{\sqrt{k}} >0.
\label{eq:boundk}
\end{equation}
Denoting the upper bound of $\{\lambda_k R_k(q^*)\}$ by $U$, we define
\[
\mu= \max \left \{ - 2\left(c_3 - c_4(\delta,N) \frac{\log k}{\sqrt{k}} \right)^{-1} \left( \phi(q^*)  - C(\delta) \frac{\log k}{\sqrt{k}} -  U \right), \mu^* \right \}.
\]
If we assume that $q^{k, R_k} \not \in \mathcal{T}(\mu)$, then the set $I= \{\psi: \log P_{q^{k, R_k}}(\psi) \le - \mu \}$ is non-empty.
Using Lemma~$\ref{lem:frequency}$, we have
\begin{equation}
\frac{1}{k}\ell_k( q^{k, R_k}) \le \frac{1}{k} \sum_{Y_i \in I}{\log P_{q^{k, R_k}}(Y_i)} \le -\mu \cdot \frac{k_{I}}{k} \le -\mu \cdot \left(c_3 - c_4(\delta) \frac{\log k}{\sqrt{k}} \right)
\label{eq:boundfrequency}
\end{equation}
with probability at least $1-\delta$.

Combining equations $\eqref{eq:boundphi}$ and $\eqref{eq:boundfrequency}$, and using the fact that  $\{\lambda_k R_k(q^*)\}$ is bounded by $U$, we obtain
\[
\phi(q^*)  - C(\delta) \frac{\log k}{\sqrt{k}} -  U \le -\mu \cdot \left(c_3 - c_4(\delta,N) \frac{\log k}{\sqrt{k}} \right).
\]
This contradicts the choice of $\mu$ for $k$ large enough such that $\eqref{eq:boundk}$ holds.

We deduce that $q^{k, R_k} \in \mathcal{T}(\mu)$ with probability at least $1-2\delta$.
\end{proof}

\subsection{More experimental results}

Here we present additional experimental results for the case of $\gamma > 1$.

\begin{figure}[ht]
\begin{center}
\includegraphics[width=0.45\textwidth]{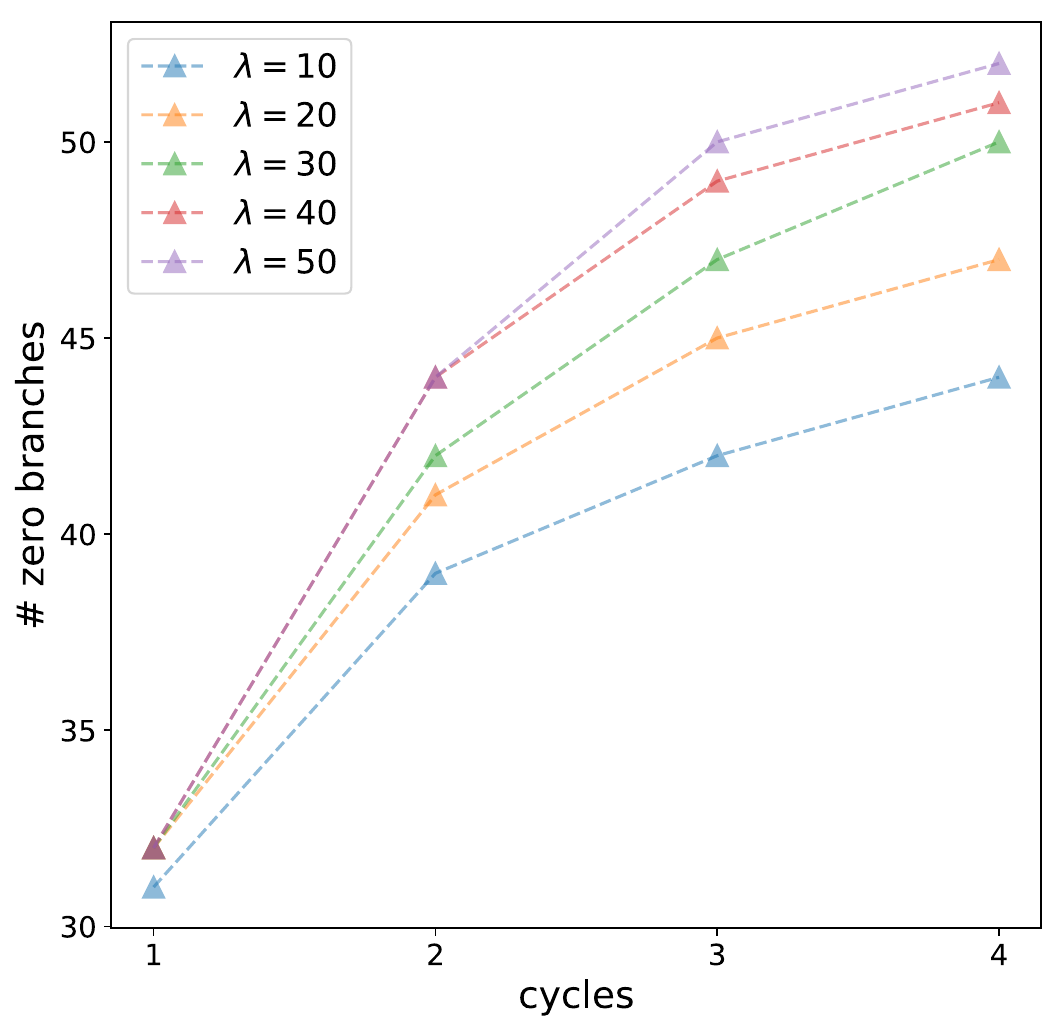}\hspace{20pt}
\includegraphics[width=0.45\textwidth]{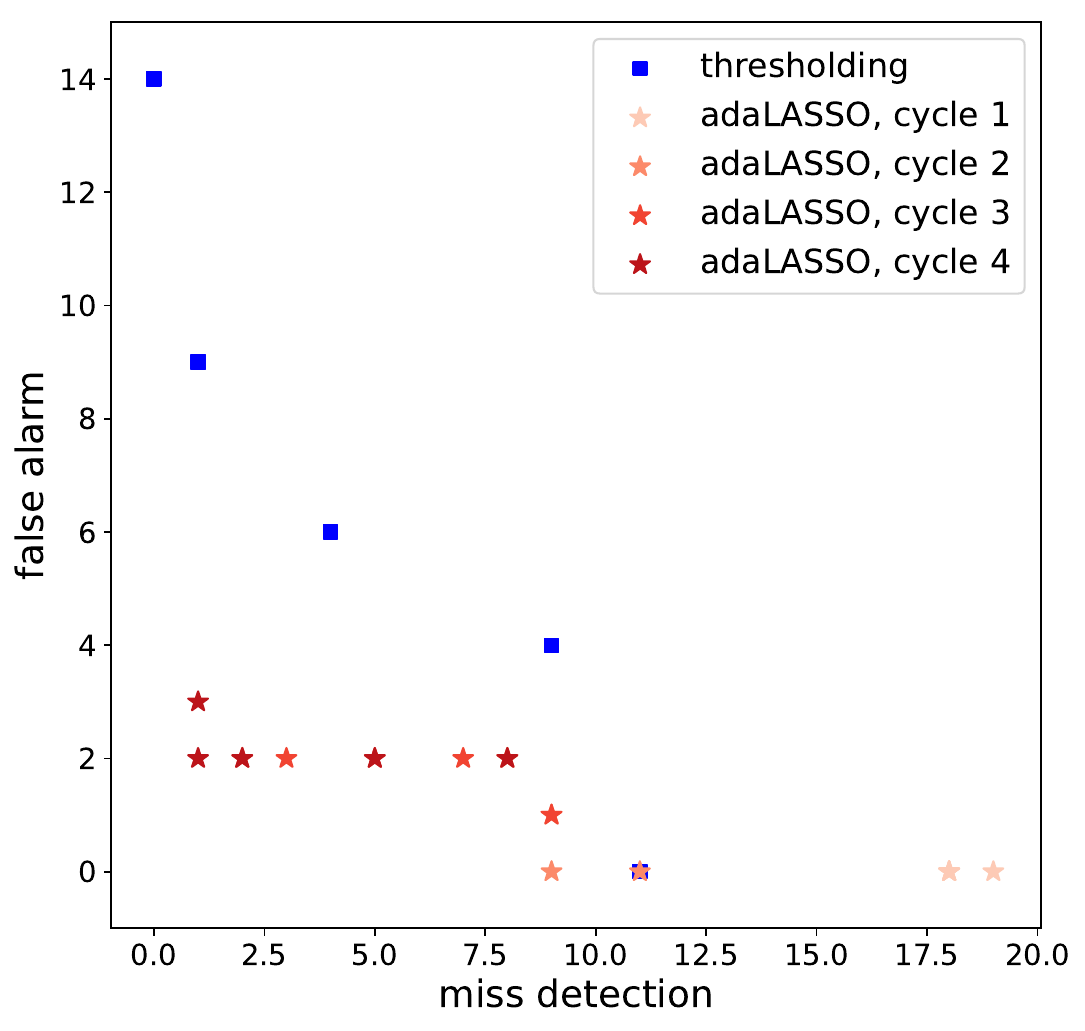}
\end{center}
\caption{Topological consistency comparison of different phylogenetic LASSO procedures on simulation 2. $\gamma=1.01$.}\label{fig:consistencygamma101}
\end{figure}

\begin{figure}[ht]
\begin{center}
\includegraphics[width=0.45\textwidth]{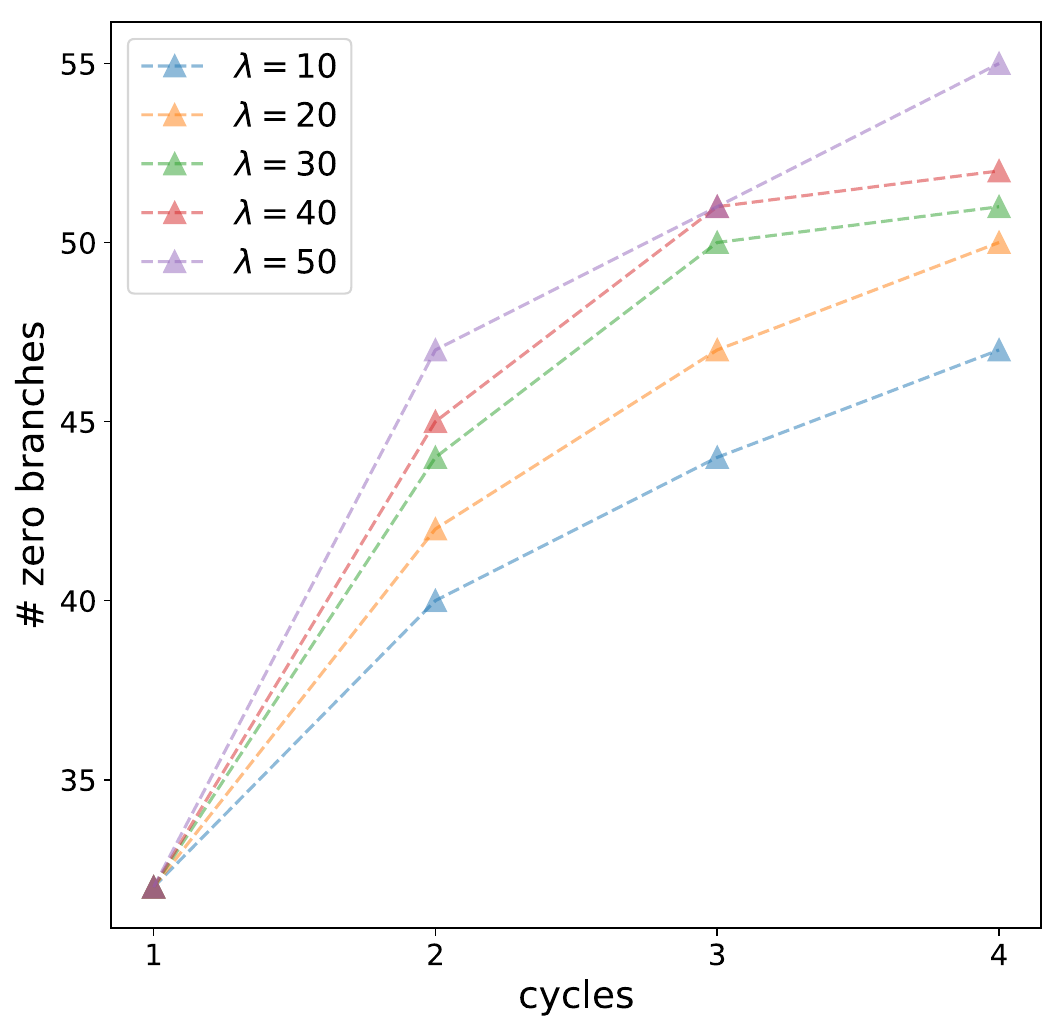}\hspace{20pt}
\includegraphics[width=0.45\textwidth]{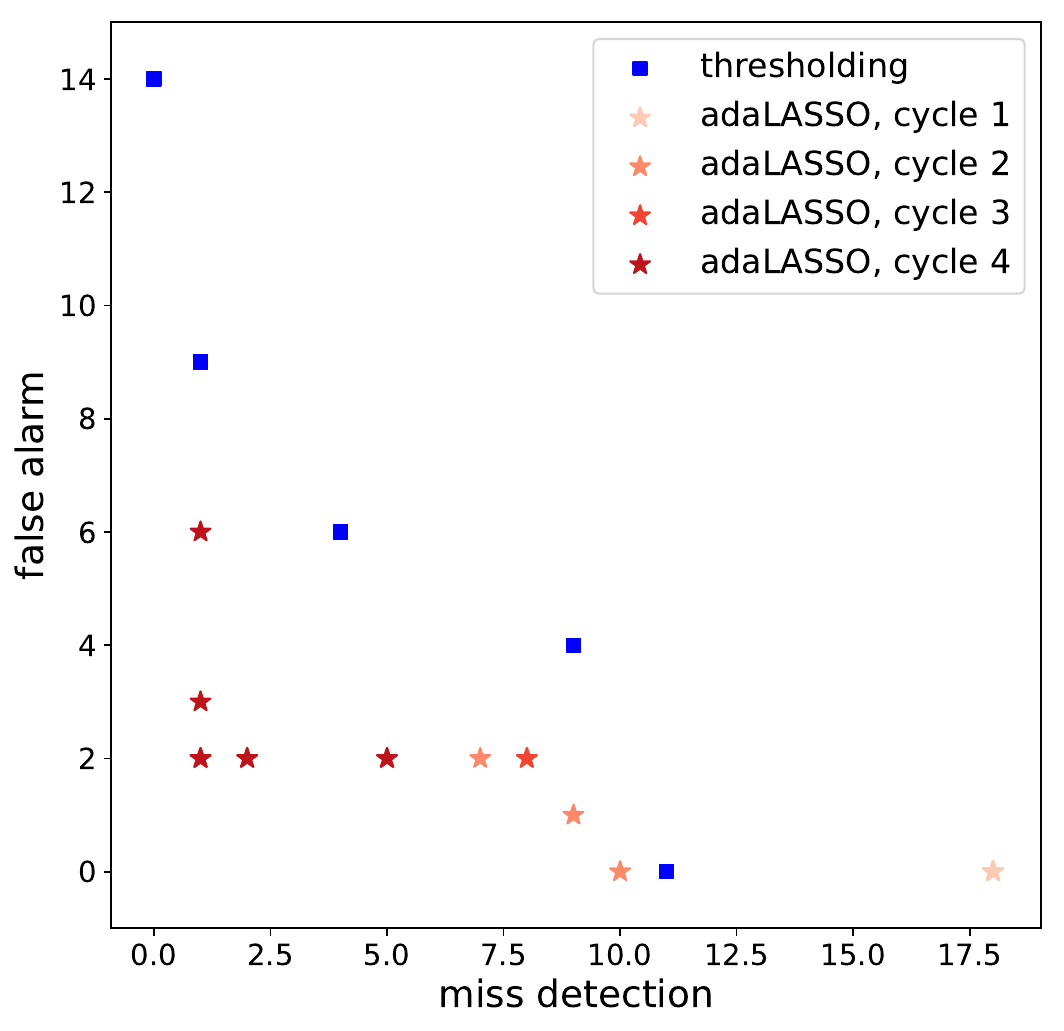}
\end{center}
\caption{Topological consistency comparison of different phylogenetic LASSO procedures on simulation 2. $\gamma=1.1$.}\label{fig:consistencygamma11}
\end{figure}

\begin{figure}[ht]
\begin{center}
\includegraphics[width=0.7\textwidth]{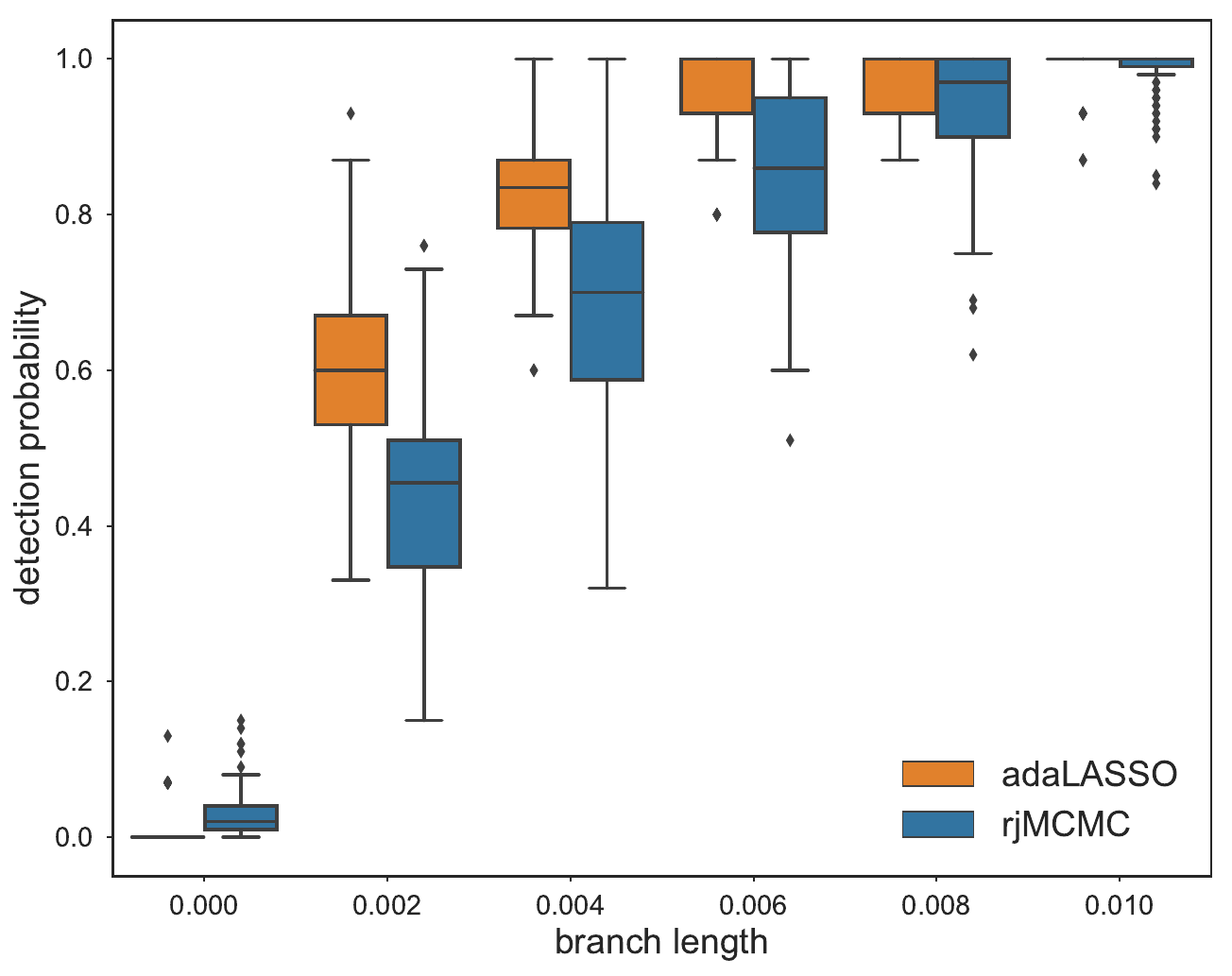}
\end{center}
\caption{Box plot showing performance of multistep adaptive phylogenetic LASSO and rjMCMC at detecting short branches. $\gamma=1.1$}\label{fig:detectiongamma11}
\end{figure}

\end{document}